\begin{document}

\title[Synthesizing DSLs for Few-Shot Learning]{Synthesizing DSLs for Few-Shot Learning}
\titlenote{Version with appendix.}

\author{Paul Krogmeier}
\orcid{0000-0002-6710-9516}
\affiliation{%
  \institution{University of Illinois at Urbana-Champaign}
  \city{Urbana}
  \country{USA}
}
\email{paulmk2@illinois.edu}

\author{P. Madhusudan}
\orcid{0000-0002-9782-721X}
\affiliation{%
  \institution{University of Illinois at Urbana-Champaign}
  \city{Urbana}
  \country{USA}
}
\email{madhu@illinois.edu}

\definecolor{Lightyellow}{HTML}{E3CEAB}
\definecolor{Lightgreen}{HTML}{D3D7CF}


\newtcblisting[auto counter,number within=section,crefname={program}{programs},Crefname={Program}{Programs}]{mytcblisting}[2][]{
  enhanced,
  arc=3pt,
  boxrule=0.5pt,
  left=1pt,
  right=1pt,
  listing only,
  colback=white,
  colframe=black,
  fonttitle=\normalfont,
  colbacktitle=Lightgreen,
  coltitle=black,
  listing options={
  },
  title={{\bfseries InterpretGrammar:} #2},
  #1
}



\colorlet{listing-comment}{gray}
\colorlet{operator-symbol}{yellow!45!black}
\lstdefinelanguage{Haskell}{
    language=Caml,
    morekeywords={match, with, case, of, any, all},
    morekeywords=[2]{False, True},
    keywordstyle=[2]\color{black},
    morekeywords=[3]{int, bool, list, nat},
    keywordstyle=[3]\color{black},
    morekeywords={otherwise},
    literate=%
      {=}{{{\color{operator-symbol}$\mathtt{=}$}}}1
      {cup}{{{$\cup$}}}1
      {sigma}{{{$\sigma$}}}1
      {setMemb}{{$\in$}}1
      {emptySet}{{$\emptyset$}}1
      {neq}{{$\neq$}}1
      {exists}{{$\exists$}}1
      {conj}{{$\wedge$}}1
      {disj}{{$\vee$}}1
      {neg}{{$\neg$}}1
      {univ}{{$\forall x$}}1
      {A_1}{{A$_1$}}1
      {A_m}{{A$_m$}}1
      {A_i}{{A$_i$}}1
      {Q_1}{{Q$_1$}}1
      {Q_m}{{Q$_m$}}1
      {Q_i}{{Q$_i$}}1
      {while}{{{while}}}1
      {<}{{{\color{operator-symbol}<}}}1
      {>}{{{\color{operator-symbol}>}}}1
      {:}{{{\color{operator-symbol}:}}}1
      {;}{{{\color{operator-symbol};}}}1
      {|}{{{\color{operator-symbol}|}}}1
      {[}{{{\color{operator-symbol}[}}}1
      {]}{{{\color{operator-symbol}]}}}1
      {\&}{{{\color{operator-symbol}\&}}}1
      {->}{{{\color{operator-symbol}$\rightarrow$}}}1
}

\lstdefinestyle{default}{
    basicstyle=\linespread{1.0}\ttfamily,
    columns=fullflexible,
    commentstyle=\sffamily\color{black!50!white},
    escapechar=\#,
    framexleftmargin=1ex,
    framexrightmargin=1ex,
    keepspaces=true,
    breakindent=0pt,
    keywordstyle=\color{blue},
    mathescape,
    showstringspaces=true,
    stepnumber=1,
    xleftmargin=0em,
}

\lstdefinestyle{smallstyle}{
    basicstyle=\footnotesize\ttfamily,
    columns=fullflexible,
    commentstyle=\sffamily\color{black!50!white},
    escapechar=\#,
    keepspaces=true,
    keywordstyle=\color{DodgerBlue4},
    mathescape,
    showstringspaces=true,
    stepnumber=1,
    xleftmargin=0.5em,
    breakindent=0pt,
    framextopmargin=10pt,
    framexbottommargin=10pt,
    breaklines=true
}

\lstset{style=smallstyle,language=Haskell}
\newcommand\zlstinline{\let\par\endgraf\lstinline}
\newcommand{\linl}[1]{\lstinline!#1!}

\theoremstyle{definition}
\let\example\relax
\let\definition\relax
\newtheorem{example}{Example}[subsection]
\newtheorem{definition}{Definition}[subsection]
\newtheorem{examplep}{Example}[subsection]
\newtheorem{definitionp}{Definition}[subsection]
\crefname{examplep}{example}{examples}  
\crefname{definitionp}{definition}{definitions}  
\Crefname{examplep}{Example}{Examples}  
\Crefname{definitionp}{Definition}{Definitions}  

\newtheorem*{problemDef}{Problem}
\newcommand{\Oo}{\mathcal{O}}
\newcommand{\Pp}{\mathcal{P}}
\newcommand{\Mm}{\mathcal{M}}
\newcommand{\modeone}{\textbf{mode 1}\,}
\newcommand{\modetwo}{\textbf{mode 2}\,}
\newcommand{\modethree}{\textbf{mode 3}\,}
\newcommand{\modefour}{\textbf{mode 4}\,}
\newcommand{\modefive}{\textbf{mode 5}\,}
\newcommand{\modesix}{\textbf{mode 6}\,}
\newcommand{\f}{\mathsf{f}}
\newcommand{\p}[1]{p(#1)}
\newcommand{\n}[1]{n(#1)}
\newcommand{\arity}{\mathsf{arity}}
\newcommand{\Nat}{\mathbb{N}}
\newcommand{\Int}{\mathbb{Z}}
\newcommand{\nope}{\mathsf{no\ solution}}
\newcommand{\mods}{\mathit{Mods}}
\newcommand{\gram}{\mathit{Gram}}
\newcommand{\metagram}{\mathcal{G}}
\newcommand{\route}{\mathsf{root}}
\newcommand{\prodend}{\mathsf{end}}
\newcommand{\sem}{\mathit{Sem}}
\newcommand{\dom}{\mathit{Domain}}
\newcommand{\production}{\mathit{Prod}}
\newcommand{\lang}{(\mods,\gram,\sem)}
\newcommand{\size}{\mathsf{size}}
\newcommand{\depth}{\mathsf{depth}}
\newcommand{\height}{\mathsf{height}}
\newcommand{\vsep}{\,\,\,|\,\,\,}
\newcommand{\freevars}{\mathit{vars}}
\newcommand{\defn}{\mathsf{def}}
\newcommand{\defns}{\mathsf{definitions}}
\newcommand{\expand}{\mathit{expand}}
\newcommand{\nt}{\mathit{N}}
\newcommand{\lhs}{\mathsf{lhs}}
\newcommand{\rhs}{\mathsf{rhs}}
\newcommand{\term}{\mathit{Term}}
\newcommand{\encode}{\mathit{encode}}
\newcommand{\problemName}{DSL synthesis}
\newcommand{\grp}{\mathit{overlay}}
\newcommand{\solves}{\mathsf{solves}}
\newcommand{\nongen}{\mathsf{nonGeneralizing}}
\newcommand{\bad}{\mathsf{exists\_smaller}}
\newcommand{\fails}{\mathsf{failsToGeneralize}}
\newcommand{\checkBad}{\mathsf{check\_smaller}}
\newcommand{\good}{\mathsf{good\_grammar}}
\newcommand{\checkGood}{\mathsf{check\_good}}
\newcommand{\soln}{\mathsf{exists\_solution}}
\newcommand{\checkSoln}{\mathsf{check\_solution}}
\newcommand{\order}{\mathsf{order}}
\newcommand{\tru}{\mathsf{true}}
\newcommand{\fals}{\mathsf{false}}
\newcommand{\la}{\langle}
\newcommand{\ra}{\rangle}
\newcommand{\res}{\mathbf{reset}}
\newcommand{\guess}{\mathbf{start}}
\newcommand{\start}{\mathbf{start}}
\newcommand{\finish}{\mathbf{finish}}
\newcommand{\interpret}{\mathbf{interpret}}
\newcommand{\guessProd}{\mathbf{prod}}
\newcommand{\checkHit}{\mathbf{hit}}
\newcommand{\checkMiss}{\mathbf{miss}}
\newcommand{\verifySolution}{\mathbf{solve}}
\newcommand{\avoid}{\mathbf{ok}}
\newcommand{\guessRow}{\mathbf{row}}
\newcommand{\down}{\mathbf{down}}
\newcommand{\leftt}{\mathbf{left}}
\newcommand{\rightt}{\mathbf{right}}
\newcommand{\up}{\mathbf{up}}
\newcommand{\stay}{\mathbf{stay}}
\newcommand{\okay}{\mathbf{split}}
\newcommand{\dropFirst}{\mathit{dropFirst}}
\newcommand{\emptyRow}{\mathit{emptyRow}}
\newcommand{\argIdx}{\mathit{idx}}
\newcommand{\topArgs}{\mathit{topArgs}}
\newcommand{\args}{\mathit{args}}
\newcommand{\noArgs}{\mathit{noArgs}}
\newcommand{\argsType}{\mathit{ArgDomain}}
\newcommand{\currDepth}{\mathit{currDepth}}
\newcommand{\offset}{\mathit{offset}}
\newcommand{\fullRow}{\mathit{fullRow}}
\newcommand{\adorn}{\mathit{adorn}}
\newcommand{\dual}{\mathit{dual}}
\newcommand{\mOne}{\mathbf{M1}}
\newcommand{\mTwo}{\mathbf{M2}}
\newcommand{\mThree}{\mathbf{M3}}
\newcommand{\mFour}{\mathbf{M4}}
\newcommand{\mFourA}{\mathbf{M4a}}
\newcommand{\mFourB}{\mathbf{M4b}}
\newcommand{\mFive}{\mathbf{M5}}
\newcommand{\mFiveA}{\mathbf{M5a}}
\newcommand{\mFiveB}{\mathbf{M5b}}
\newcommand{\mSix}{\mathbf{M6}}
\newcommand{\enc}{\mathsf{enc}}
\newcommand{\dec}{\mathsf{dec}}
\newcommand{\extend}{\mathsf{extend}}
\newcommand{\poly}{\text{poly}}
\newcommand{\consistent}{\mathsf{consistent}}
\newcommand{\inconsistent}{\mathsf{inconsistent}}
\newcommand{\rows}{\mathit{prev}}
\newcommand{\Rows}{\mathit{Rows}}

\newcommand{\ignore}[1]{}
\newcommand{\paul}[1]{{\color{blue}paul: #1}}

\algloopdefx{Continue}[0]{\textbf{continue}}

\crefname{figure}{Figure}{Figures}
\Crefname{figure}{Figure}{Figures}

\begin{abstract}
We study the problem of synthesizing domain-specific languages (DSLs)
for few-shot learning in symbolic domains. Given a base language and
instances of few-shot learning problems, where each instance is split
into training and testing samples, the DSL synthesis problem asks for
a grammar over the base language that guarantees that small
expressions solving training samples also solve corresponding testing
samples. We prove that the problem is decidable for a class of
languages whose semantics over fixed structures can be evaluated by
tree automata and when expression size corresponds to parse tree depth
in the grammar, and, furthermore, the grammars solving the problem
correspond to a regular set of trees. We also prove decidability
results for variants of the problem where DSLs are only required to
express solutions for input learning problems and where DSLs are
defined using macro grammars.
\end{abstract}

\begin{CCSXML}
<ccs2012>
   <concept>
       <concept_id>10011007.10011006.10011050.10011017</concept_id>
       <concept_desc>Software and its engineering~Domain specific languages</concept_desc>
       <concept_significance>500</concept_significance>
       </concept>
   <concept>
       <concept_id>10003752.10003766.10003772</concept_id>
       <concept_desc>Theory of computation~Tree languages</concept_desc>
       <concept_significance>500</concept_significance>
       </concept>
   <concept>
       <concept_id>10010147.10010257</concept_id>
       <concept_desc>Computing methodologies~Machine learning</concept_desc>
       <concept_significance>300</concept_significance>
       </concept>
 </ccs2012>
\end{CCSXML}

\ccsdesc[500]{Software and its engineering~Domain specific languages}
\ccsdesc[500]{Theory of computation~Tree languages}
\ccsdesc[300]{Computing methodologies~Machine learning}


\maketitle
\renewcommand{\shortauthors}{P. Krogmeier, P. Madhusudan}

\newcommand{\Cc}{{\mathcal{C}}}
\newcommand{\Hy}{{\mathcal{H}}}
\newcommand{\I}{{\mathcal{I}}}

\section{Introduction}
\label{sec:introduction}
In this work, we are interested in \emph{few-shot learning of symbolic
  expressions}---e.g., learning classifiers expressed in logic which
separate a sample of positive and negative structures or learning
programs that compute a function which is consistent with a few
input-output examples.

When considering a large class of concepts $\Cc$, it is typically
impossible to identify a target concept $c \in \Cc$ from only a small
sample $S$ of instances and hence succeed at few-shot learning, as
there can be many concepts consistent with the sample. In practice,
few-shot symbolic learning, such as program synthesis from examples or
learning invariants for programs, etc., is successful because
engineers identify a much smaller class of concepts $\Hy$— the
\emph{hypothesis class}— and learning proceeds over $\Hy$. Note that
identification of $\Hy$ may not be necessary with a large amount of
data, as in mainstream machine learning. The hypothesis class in
\emph{symbolic} learning is defined using a \emph{language} of
symbolic expressions pertinent to a specific problem domain— such a
language is often referred to as a domain-specific language (DSL)—
which captures \emph{typical concepts} that are \emph{useful} in that
domain. For \emph{few-shot} learning, there is often also an
\emph{ordering} of concepts in $\Hy$, and learning algorithms return—
appealing to Occam's razor— the smallest concept in $\Hy$ with respect
to the order that is consistent with the sample $S$. Typical concept
orderings on symbolic expressions are syntactic length or parse tree
depth.

Program synthesis from examples, especially, is replete with
hand-engineered DSLs for specific applications. For instance, in
spreadsheet automation~\cite{flashfill, flashmeta-prose}— where the
goal is to complete the columns of a spreadsheet in a manner
consistent with a small number of filled-in spreadsheet cells— a
suitable DSL identifies common string-manipulation functions that
occur in realistic spreadsheet programs~\cite{flashfill}. The {\sc
  SyGuS} format (syntax-guided synthesis) for program synthesis makes
explicit the discipline of defining hypothesis classes by using
\emph{grammars} to define DSLs which syntactically constrain the
expressions considered during synthesis~\cite{sygus}, and recent work
explores the case where DSL semantics can be specified in addition to
syntax~\cite{semgus}.

\subsection{Formulation of DSL Synthesis} We consider the problem of
automatically synthesizing DSLs for few-shot symbolic learning. By
solving this problem, few-shot learning in novel domains can be
automated further, with learning algorithms working over hypothesis
classes defined by synthesized DSLs. The first contribution of this
work is a \emph{definition} of the DSL synthesis problem for few-shot
learning. We ask:

\begin{quote}
\begin{center}
  \emph{What formulation of DSL synthesis facilitates few-shot
    learning?}
\end{center}
\end{quote}

We formulate DSL synthesis itself as a \emph{learning} problem. Given
an application domain $D$, the idea is to synthesize a DSL given
\emph{instances of learning problems} from $D$.

Consider a set $\I$ of instances of the few-shot learning problem for
domain $D$. Let us assume a ``base language'' whose syntax is given by
a grammar $G$ over a finite signature providing function, relation,
and constant symbols, and where expressions of $G$ have fixed
semantics. We would like to learn a DSL $\Hy$— with \emph{syntax} and
\emph{semantics} for expressions of $\Hy$ formalized using the base
language $G$— that can be used to effectively solve each of the
problems in $\I$. Each $p\in\I$ is itself a learning problem: it
includes a set of training examples $X_p$ and a set of testing
examples $Y_p$. We require $\Hy$ to solve each problem $p \in \I$ in
the following sense: the \emph{smallest} expressions in $\Hy$,
measured according to a fixed ordering on expressions, that are
consistent with training examples $X_p$ must also be consistent with
testing examples $Y_p$.

This formulation hence asks that a learning bias for $D$ be encoded in
the DSL. Note that a learning algorithm that picks the smallest
consistent expressions in $\Hy$ will in fact solve all the learning
instances from which $\Hy$ was learned. In addition to the
requirements above, the synthesized DSL must satisfy a
``meta-grammar'' constraint $\metagram$— given in the input— which
constrains the kinds of syntax and semantics it may use from the base
language.

A DSL $\Hy$ must satisfy three properties to solve the DSL synthesis
problem:

\begin{itemize}[leftmargin=9pt,itemsep=3pt]
\item[]\textbf{Property 1.} For each instance $p\in \I$, $\Hy$ must
  express some concept $c$ that solves $p$, in the sense that $c$ is
  consistent with both the training and testing examples $X_p$ and
  $Y_p$.
\item[] \textbf{Property 2.} For each instance $p \in \I$, consider
  the smallest expressions $e\in \Hy$, according to the fixed
  ordering, that express concepts $c$ consistent with the training
  examples $X_p$. These concepts $c$ must also be consistent with the
  testing examples $Y_p$.
\item[] \textbf{Property 3.} The definition of $\Hy$ using the base
  language must satisfy constraints $\metagram$.
\end{itemize}

Intuitively, the second property demands that, for any concept $c$
that is expressible in the base language and solves the training set
$X_p$ but does not solve the testing set $Y_p$ for some $p \in P$, the
DSL $\Hy$ must either (1) \emph{not express} $c$ or else (2) ensure
that, if $c$ is expressible, then there is another concept expressible
using a smaller expression— with respect to the fixed expression
ordering— that solves the training and testing sets $X_p$ and
$Y_p$. This property introduces a new learning signal for DSL
synthesis which is not part of existing related problems such as
library learning~\cite{babble,stitch} or the problem addressed by
systems like \textsc{DreamCoder}~\cite{dreamcoder}, where the goal is
to refactor an existing DSL to favor useful concepts. The new signal
is about \emph{inexpressivity}, and asks that some concepts should not
be expressible, or, if they are, they should not be expressible too
succinctly. We wish to find languages specific to domains, and these
may be \emph{less expressive} than the base languages we use to define
them.

\subsection{DSL Synthesis Problems} We study multiple classes of DSL
synthesis problems which are quite general in the sense they are not
tied to particular logics, programming languages, or underlying
theories.

We introduce the four DSL synthesis problems listed below in
increasing order of difficulty. In all problems, there is a fixed base
language $G$, and we are given a set of few-shot learning instances
$\I$ and a meta-grammar constraint $\metagram$. Let us fix an ordering
on expressions.

\begin{itemize}[leftmargin=9pt,itemsep=3pt]
\item[] \textbf{Problem 1:} \emph{Adequate DSL Synthesis.} Is there a
  DSL $\Hy$ satisfying constraints $\metagram$ such that, for every
  instance $p\in \I$, there is at least one concept expressible in
  $\Hy$ that is consistent with the training and testing sets $X_p$
  and $Y_p$? If so, synthesize the DSL.
\item[] \textbf{Problem 2:} \emph{Adequate DSL Synthesis with Macros.}
  The same question above, except posed for DSLs defined using
  grammars with macros.
\item[] \textbf{Problem 3:} \emph{DSL Synthesis.} Is there a DSL $\Hy$
  satisfying constraints $\metagram$ such that, for every instance
  $p\in \I$, there are concepts expressible in $\Hy$ that are
  consistent with the training set $X_p$, and the ones expressible
  most succinctly according to the expression ordering are also
  consistent with the testing set $Y_p$. If so, synthesize the DSL.
\item[] \textbf{Problem 4:} \emph{DSL Synthesis with Macros.} The same
  question above, except posed for DSLs defined using grammars with
  macros.
\end{itemize}

\emph{Adequate DSL Synthesis} captures \textbf{Properties 1} and
\textbf{3} above. It asks whether there is \emph{any} DSL that
expresses a solution for each input learning instance. It is
equivalent to \emph{DSL Synthesis} where the testing sets are empty,
and is therefore independent of the expression ordering. \emph{DSL
  Synthesis} incorporates \textbf{Property 2} and is the problem we
have articulated thus far.

It turns out that standard context-free grammars are not powerful
enough to capture classes of DSLs that use \emph{macros with
  parameters}. Consider a language that allows a macro $f(x_1, x_2)$,
defined by an expression $e$ with free variables $x_1$ and $x_2$,
where $f(e_1, e_2)$ results in the uniform substitution of $e_i$ for
$x_i$ in $e$. Context-free grammars cannot capture such languages when
the terms substituted are arbitrarily large.  We hence also study the
\emph{Adequate DSL Synthesis} and \emph{DSL Synthesis} problems in the
setting where DSLs are defined using more expressive macro
grammars~\cite{fischer-macro-grammar}.

\subsection{Decidability Results} DSL synthesis is a meta-synthesis
problem, i.e., it involves synthesizing a DSL that in turn can solve a
set of few-shot synthesis problems, and it is algorithmically very
complex. It is a natural question whether there are powerful
subclasses where the problem is decidable. More precisely, for a fixed
base language with grammar $G$, using which the hypothesis class $\Hy$
is defined, we would like algorithms that, when given a set of
few-shot learning instances $\I$ and syntax constraint $\metagram$,
either synthesize $\Hy$ that solves the instances and satisfies
$\metagram$ or report that no such DSL exists. We allow the semantics
of DSLs, defined in terms of $G$, to be of \emph{arbitrary length},
which makes decidability nontrivial.

Our second contribution is to show that the four problems identified
above are \emph{constructively decidable} for a large class of base
languages. In particular, we prove that each variant of the DSL
synthesis problem is decidable for a class of base languages recently
shown to have decidable learning problems~\cite{popl22,oopsla23}—
those for which a specialized recursive program can evaluate the
semantics of arbitrarily large expressions using an amount of memory
depending only on a fixed structure over which evaluation occurs.

Our techniques to establish decidability rely on \emph{tree
  automata}--- we show that the class of trees encoding DSLs which
solve the few-shot learning instances is in fact a regular set of
trees. Our automata constructions are significantly more complex than
those for learning expressions~\cite{popl22,oopsla23}, both
conceptually and in terms of time complexity, and we assume these
previous constructions to design tree automata for DSL synthesis.

We point out two factors that make our algorithms complex. First, we
design tree automata which read trees that encode DSLs, and these
automata must verify the existence of arbitrarily large solutions
expressed in these DSLs. Witnessing the non-existence of solutions
involves, in general, examining all expression derivations within an
encoded DSL. Second, we must check the existence of \emph{minimal}
expressions that witness the solvability of each given learning
instance. In particular, we need to use alternating quantification on
trees to capture the fact that there \emph{exists} a solution $e$ for
each instance such that \emph{all} other smaller expressions $e'$ do
not solve the training set. We cannot nondeterministically guess $e$
and $e'$ separately, as they are related. We avoid this guessing of
$e$ and all smaller $e'$ by essentially leveraging a \emph{dynamic
  programming algorithm} for evaluating all expressions derivable in a
given DSL, in the order of parse tree depth. If we execute this
algorithm, then it will check whether the first depth $d$ at which
there exists an expression that solves the training examples is the
same depth at which an expression first solves both the training
\emph{and} testing examples. However, we cannot run the algorithm as
we do not have a DSL. But, it turns out that for \emph{any} DSL of
arbitrary size, the table of results computed by this dynamic
programming algorithm is essentially finite, given a set of learning
instances. The contents of cells in the table come from a finite
domain for any fixed instance, and thus the table rows repeat at a
certain point. Consequently, we can simulate the dynamic programming
algorithm using a tree automaton that computes the table for
arbitrarily large DSLs encoded as trees.

\medskip
\noindent {\bf Contributions.} We make the following contributions:
\begin{itemize}[leftmargin=14pt]
\item A novel formulation of DSL synthesis from few-shot learning
  instances, which asks for a hypothesis class that supports few-shot
  learning in a domain.
\item Decidability results for variants of DSL synthesis over a
  powerful class of base languages. To the best of our knowledge,
  these are the first decidability results for DSL synthesis.
\end{itemize}

The paper is organized as follows. In \Cref{sec:motivation}, we
explore DSL synthesis problems with some examples and applications. In
\Cref{sec:preliminaries}, we establish some definitions and review
background. In \Cref{sec:formulation}, we formulate various aspects of
DSL synthesis. In \Cref{sec:adequate,sec:dsl-synthesis}, we introduce
the \emph{Adequate DSL Synthesis} and \emph{DSL Synthesis} problems
and prove decidability results for a class of base languages whose
semantics can be computed by tree automata and when expression order
is given by parse tree depth. In \Cref{sec:macrogram}, we introduce
variants of the previous problems that use macro grammars and prove
decidability results. \Cref{sec:related} reviews related work and
\Cref{sec:conclusion} concludes. Omitted details throughout the paper
can be found in the appendix.

\section{DSL Synthesis: Motivation and Examples}
\label{sec:motivation}

In this section, we motivate the DSL synthesis problem with
applications to few-shot symbolic learning and synthesis, and explain
aspects of our problem formulation.

Computer science is teeming with \emph{symbolic languages} that have
been designed by researchers or engineers, not necessarily to be
highly expressive but, rather, to be well adapted to specific
domains. Such DSLs express \emph{common} properties of the domain
using \emph{succinct} expressions and disallow or make more complex
the representation of concepts that are irrelevant in the domain.

As explained in \Cref{sec:introduction}, we aim to identify DSLs that
facilitate \emph{few-shot learning}. More precisely, we aim to find
DSLs that can succinctly express solutions to typical few-shot
learning problems in a domain while not expressing irrelevant concepts
succinctly. This enables, in particular, \emph{synthesis algorithms}
to solve learning problems by returning the most succinct expressions
in the DSL that satisfy the examples. The design of DSLs which express
the right concepts succinctly is a significant engineering challenge,
and algorithms for learning DSLs from data can thus provide automation
for applying symbolic learning in new domains.

\subsection{DSLs for Program Synthesis}
\label{sec:spreadsheet-programs}

The \textsf{FlashFill} program synthesis
system~\cite{flashfill,flashfillplus} uses a bespoke DSL for
expressing common string transformation programs for Excel
spreadsheets. The original paper describes the careful design of the
DSL and argues that a general-purpose language, like Python, would
make the search too complex by \emph{``allow[ing] the large number of
  functions to be combined in unintuitive ways to produce undesirable
  programs''}. Core features of that DSL include an operation
$\mathsf{SubStr(\mathit{s}, \mathit{P}, \mathit{P})}$ for extracting
substrings from a string $s$ in a spreadsheet cell, where $P$ is a
nonterminal that generates string positions, and an operation
$\mathsf{Pos(\mathit{R},\mathit{R},\mathit{c})}$, which generates a
string position defined as the $c$th one whose substring to the left
matches a regular expression $r_1\in L(R)$ and whose substring to the
right matches a regular expression $r_2\in L(R)$.

We seek automatic design of such DSLs given sample learning
instances. In this case, operations could be defined using
\emph{macros}. For instance, starting from a generic programming
language with recursion, the
$\mathsf{SubStr(\mathit{s}, \mathit{P}, \mathit{P})}$ macro could be
defined using code which recurses over the input $\mathit{s}$ to find
the left position of the substring and return the string ending at the
right position.\footnote{We are motivated by synthesis in domains such
  as spreadsheet programming, but there are other facets to DSL design
  we do not consider. For instance, the FlashFill DSL uses
  ``effectively-invertible'' top operators and
  ``effectively-enumerable'' bottom operators to facilitate
  \emph{faster} search. Such search performance requirements are out
  of scope for this paper.}

The program synthesis literature is replete with DSLs for various
program synthesis tasks--- some examples are the small functional
programming language on lists defined for the {\sc Sketch-n-Sketch}
SVG manipulation
framework~\cite{ProgrammaticAndDirectManipulationTogetherAtLast}, a
DSL designed around an algebra of operators for extracting structured
data from text~\cite{FlashExtract}, a DSL for synthesizing barriers
for crash consistency of file
systems~\cite{SpecifyingAndCheckingFileSystemCrashConsistencyModels},
and a DSL for synthesizing SQL queries from
examples~\cite{SynthesizingHighlyExpressiveSQLQueriesFromInputOutputExamples}.

\subsection{Synthesis Tools Supporting DSLs}
\label{sec:dsltools}

The utility of DSLs for synthesis is reflected in several synthesis
tools and frameworks that provide explicit support for defining DSLs
and their semantics.

The Rosette and Grisette frameworks~\cite{rosette,grisette} support
the definition of solver-aided DSLs, with user-defined syntax and
semantics. The syntax-guided synthesis framework~\cite{sygus} ({\sc
  SyGuS}) supports DSLs with user-specified grammars defining syntax,
with semantics often fixed. The {\sc SemGuS} framework permits
user-specified DSL syntax \emph{and}
semantics~\cite{semgus}. Example-based synthesis in {\sc SyGuS} could
be targeted by our formulation of DSL synthesis with grammars, while
synthesis for DSLs with semantics defined by new functions could be
targeted by our formulation of DSL synthesis with macro grammars.

\subsection{Synthesizing Invariants and Feature Engineering}
\label{sec:feature-engineering}

Symbolic learning often requires a set of base features, e.g., in
learning decision trees or symbolic regression. For instance, base
features might be nonlinear inequalities over numeric variables, with
a symbolic learning algorithm considering Boolean combinations over
the inequalities. In learning \emph{inductive invariants} and
\emph{specifications} of
programs~\cite{ice,angello-contract-synthesis,datadrivenchc},
effective techniques have considered Boolean combinations of
hand-designed base features.

Consider the GPUVerify tool~\cite{gpuverify} which synthesizes
invariants for GPU kernels. It leverages hand-crafted rules for
generating basic candidate invariants and then computes the strongest
conjunction over these using the Houdini
algorithm~\cite{houdini}. Useful hand-crafted rules include, e.g., if
$i \coloneqq i\times 2$ occurs in a loop body, then predicates
expressing that the loop index $i$ is less than a power of $2$ can be
useful, e.g., $i < 2$, $i <4$, $i < 8$, etc., which are common for
tree reduction computations. Or, when threads access fixed-size
contiguous chunks of a shared array, useful predicates express that
the write index is within a bounded region of the array that depends
on the thread identifier, e.g.,
$\mathit{id\times c}\le \mathit{write\_idx}$ and
$\mathit{write\_idx}< \mathit{(id+1)\times c}$, where $\mathit{id}$,
$\mathit{c}$, and $\mathit{write\_idx}$ are, respectively, local
thread identifiers, constant offsets, and indices where writing occurs
in the array. In contrast to designing such predicates by hand, DSL
synthesis would seek to automatically discover them from instances of
invariant learning problems, which could be sampled from a benchmark
of programs to verify. In this setting, we could use a meta-grammar
$\metagram$ to encode the constraint that invariants are
\emph{conjunctions} over a set of base predicates, with a DSL
synthesis algorithm tasked with determining a good set of base
predicates.

DSL synthesis, as proposed in this paper, can serve as a formulation
of the feature synthesis problem, with the goal of discovering
domain-specific symbolic features using few-shot learning instances
drawn from a domain. We can formulate the question as follows: is
there a set of $n$ features, each drawn from a class of functions over
some existing (even more basic) features, such that a fixed class of
symbolic concepts defined over these $n$ features, e.g., specific
Boolean combinations, can solve a given set of few-shot learning
problems sampled from a domain? Such engineered features can then be
used in downstream symbolic learning algorithms for program
synthesis~\cite{sygus} or symbolic
regression~\cite{symbolic-regression}.

\subsection{Library Learning}
\emph{Library learning} is a problem of recent
interest~\cite{babble,stitch,dreamcoder} related to DSL
synthesis. Consider a program synthesizer that solves a class of
problems $\I = \{p_1,p_2, \ldots\}$ in the program synthesis from
examples paradigm. In a library learning phase, we can try to
\emph{refactor} existing solutions to some instances $p_i$ in order to
learn common concepts— a library $L$— that enable compression of the
set of existing solutions to problems from $\I$. \textsc{DreamCoder}
utilizes such refactoring in its \emph{dream phase}, and then, when
synthesizing programs for new learning instances, it uses the library
$L$ to search for solutions~\cite{dreamcoder}. Recent work has used
e-graphs to address library learning with respect to equational
theories~\cite{babble}.

Library learning is similar to the problem of DSL synthesis with
macros introduced in the present work. However, rather than first
synthesizing solutions to instances and then asking whether those
\emph{particular} solutions can be refactored in a synthesized
library, our DSL synthesis problem combines these phases into one---
we ask whether there is a library (realized as a macro grammar) such
that, for each instance, there is \emph{some} solution expressible
using the library. We also introduce the problem of capturing the
domain precisely in a DSL, which may be \emph{less expressive} than
the base language against which it is defined. Furthermore, we prove
decidability results for DSL synthesis problems; previous work on
library learning does not provide such results.

\subsection{Dimensions of the DSL Synthesis Problem}
\label{sec:dimens-dsl-synth}

We collect here the various dimensions of the DSL synthesis problem
formulated in this work and summarize their purposes. We discuss these
further in \Cref{sec:formulation}.

\textbf{1. Expression complexity.} DSLs express domain-specific
concepts succinctly, i.e., by using symbolic expressions which are
syntactically ``simple''. We require a formal way to measure the
complexity of a concept as expressed using an expression in a specific
DSL. Our results measure simplicity using \emph{parse tree depth} with
respect to a DSL, which happens to be relevant in program synthesis,
where a common heuristic search technique is to enumerate the language
of a grammar by increasing depth. Another natural measure would be
parse tree \emph{size}; whether interesting decidability results hold
in this case we leave as an open question.

\textbf{2. Mechanisms for defining DSLs.} As discussed earlier, we
consider two different mechanisms for defining the syntax of DSLs:
regular grammars and macro grammars. Macro grammars are a natural way
to express DSLs with functions that take parameters, and furthermore,
they are more expressive than regular grammars (we give an example
illustrating this in \Cref{ex:macro-more-expressive}).

\textbf{3. Meta grammar.} All variants of the DSL synthesis problems
we introduce have a \emph{meta grammar} in the input. Similar to
grammar constraints in the context of program synthesis, e.g., in
\textsc{SyGuS}~\cite{sygus}, meta-grammar constraints can express
simple kinds of prior knowledge about the space of DSLs one wants to
consider. Some simple regular properties that can be encoded with a
meta grammar include:
\begin{itemize}[leftmargin=14pt]
\item If our DSL has Boolean formulas, we can ensure any mechanism for
  defining them can only define monotonic ones, i.e., those for which
  setting more variables ``true'' cannot flip the formula from
  ``true'' to ``false''. It can do so by disallowing atomic formulas
  that occur under an odd number of negations, for instance.
\item Simple kinds of type checking for a finite number of types,
  e.g., ensuring the generation of Boolean formulas which make
  comparisons on the values of arithmetic expressions.
\item Conjunctive invariants over arbitrary Boolean formulas that
  define ``features'' (\Cref{sec:feature-engineering}).
\end{itemize}
Handling regular syntax constraints makes our results stronger. A user
can omit the meta grammar and a synthesizer simply assumes the most
permissive meta grammar which adds no constraints.

\section{Preliminaries}
\label{sec:preliminaries}

\subsection{Alphabets, Tree Automata, and Tree Macro Grammars}
\label{sec:grammars-alphabets}

\begin{definition}[Ranked alphabet]
  A ranked alphabet $\Delta$ is a set of symbols with arities given by
  $\arity : \Delta\rightarrow \Nat$, $\Delta^i\subseteq \Delta$ is the
  subset of symbols with arity $i$, and $x^i$ indicates that symbol
  $x$ has arity $i$. We use $T_\Delta$ to denote the smallest set of
  terms containing $\Delta^0$ and closed under forming new terms using
  symbols of larger arity, e.g.  if $t\in T_\Delta$ and $f\in\Delta^1$
  then $f(t)\in T_\Delta$. For a set of nullary symbols $X$ disjoint
  from $\Delta^0$ we write $T_\Delta(X)$ to mean $T_{\Delta\cup
    X}$. We use \emph{tree} and \emph{term} interchangeably.
\end{definition}

We make use of \emph{tree automata} to design algorithms for DSL
synthesis. We use \emph{two-way alternating tree automata} which
process an input tree by traversing it \emph{up} and \emph{down} while
branching universally in addition to existentially. Transitions are
given by Boolean formulae which describe valid actions the automaton
can take to process its input tree. All automata we deploy have the
form $A=(Q,\Delta,Q^i,\delta)$, with states $Q$, alphabet $\Delta$,
initial states $Q^i\subseteq Q$, and $\delta$ a transition formula,
and they all have acceptance defined in terms of the existence of a
run on an input tree. We refer the reader to~\cite[Chapter 7]{tata}
for background on this presentation of tree automata. We use
$\mathlarger{\mathlarger{\cap}}_iA_i$ and
$\mathlarger{\mathlarger{\cup}}_iA_i$ to denote standard constructions
of automata that accept intersections and unions of languages for
given automata $A_i$. Finally, note that, given a regular tree grammar
$G$ (see~\cite[Chapter 2]{tata}), we can compute in polynomial time a
non-deterministic top-down tree automaton $A(G)$ with $L(A(G))=L(G)$.

\begin{definition}[Tree Macro Grammar]
  A tree macro grammar\footnote{Tree macro grammars are sometimes
    called \emph{context-free tree grammars}, e.g., see~\cite[Chapter
    2.5]{tata}. We prefer \emph{macro grammar} as it evokes
    macros from programming, a useful intuition when using grammars to
    define DSLs.}, or simply \emph{macro grammar}, adds parameters to
  nonterminal symbols of a regular tree grammar. It is a tuple
  $G=(S, N, \Delta, P)$, where: $N$ is a finite set of \emph{ranked}
  nonterminal symbols, $S\in N$ is the starting nonterminal with arity
  $0$, $\Delta$ is a finite ranked alphabet disjoint from $N$, and $P$
  is a finite set of rules drawn from
  $N\times T_{\Delta\cup N}(\Nat)$. We often write rules $(N,t)$ as
  $N\rightarrow t$ and indicate several rules as usual by
  $N\rightarrow t_1 \vsep \cdots\vsep t_k$.
  \label{macro-grammar-defn}
\end{definition}

We refer to nonterminal symbols with arity greater than $0$ as
\emph{macro symbols}. When the nonterminal symbols of a macro grammar
all have arity $0$, i.e. there are no macro symbols, then we recover
the standard concept of a \emph{regular tree grammar} as a special
case.

We consider only \emph{well-formed} macro grammars in the remainder,
i.e. those where right-hand sides of productions refer only to the
parameters for the nonterminal on the left-hand side (if any).

\begin{definition}[Well-formed macro grammar]
  A macro grammar $G=(S,N,\Delta,P)$ is \emph{well formed} if for
  every $(X,t)\in P$ we have that
  $t\in T_{\Delta\cup N}(\{1,\ldots,\arity(X)\})$.
\end{definition}

As usual, we can define the language of a macro grammar to be the set
of ground terms derivable in a finite number of steps by applying
rules starting from $S$— but consider the following
subtlety.

\begin{example}
  \label{ex:outermost}
Consider the macro grammar defined by rules
\[
  S \rightarrow F(H), \qquad F(1)\rightarrow f(1,1), \qquad H\rightarrow a\vsep b,
\]
where integers indicate the parameters for a macro symbol. Observe
that we could apply productions to ``outermost'' macro symbols first,
as in
$S\Longrightarrow F(H) \Longrightarrow f(H, H) \Longrightarrow f(a,
b)$, or we could apply them ``innermost'' first, as in
$S\Longrightarrow F(H) \Longrightarrow F(a) \Longrightarrow f(a,a)$,
or we could mix the two.
\end{example}

\Cref{ex:outermost} illustrates distinct ways to define the language
of a macro grammar, and these choices yield different classes of
languages~\cite{fischer-macro-grammar, tata}. We consider only
outermost derivations and leave an exploration of innermost ones to
future work. \emph{Outermost} derivations are those in which rules are
never applied to rewrite a nonterminal $M$ if it appears as a subterm
of another nonterminal $N$. This can be formalized using
\emph{contexts} (e.g. see~\cite[Chapters 2.1 and 2.5]{tata}) by adding
the requirement that any context used in defining the derivation
relation must not contain nonterminal
symbols.

\begin{definition}[Language of a Tree Macro Grammar]
  The \emph{language} $L(G)\subseteq T_\Delta$ of a macro grammar $G$
  is the set of $\Delta$-terms reachable by applying finitely-many
  productions in an \emph{outermost order} starting from $S$. We often
  write $t\in G$ instead of $t\in L(G)$ to refer to a term in the
  language of $G$. In the remainder, when we say \emph{(macro)
    grammar} we mean \emph{tree (macro) grammar}.
\end{definition}

\subsection{Encoding Grammars as Trees}
\label{sec:encoding-grammars}

\begin{figure}
  \centering
  \hfill
  \begin{minipage}[c]{0.4\linewidth}
    \begin{tabular}[t]{c}
      Macro grammar $G$ \\\\
      $\begin{array}[c]{c c l}
        N_1 &\rightarrow& N_3(N_2) \\
        N_2 &\rightarrow& h(N_1)\vsep a \\
        N_3(1) &\rightarrow& g(1,1)
      \end{array}$
    \end{tabular}
  \end{minipage}%
  \begin{minipage}[c]{0.55\linewidth}
    \begin{tabular}[t]{c}
      \\
      \begin{tikzpicture}[scale=1]
        \node (text) at (3, 0.15) {Encoding $\enc(G)\in T_{\Gamma(\Delta,N)}$} ;
        \node (root) at (0,0.25) {$\route$} ;
        \node (lS) at  (0,-0.5) {$\lhs_{N_1}$} ;
        \node (lF) at (1.25,-0.75) {$\lhs_{N_3}$} ;
        \node (lG1) at (2.5,-1) {$\lhs_{N_2}$} ;
        \node (lG2) at (3.75,-1.25) {$\lhs_{N_2}$} ;
        \node (end) at (5,-1.5) {$\prodend$} ;

        \node (rF) at (-0.5, -1.25) {$\rhs_{N_3}$} ;
        \node (rG) at (-0.5, -2) {$\rhs_{N_2}$} ;

        \node (f) at (0.75, -1.5) {$g$} ;
        \node (1l) at (0.5, -2.25) {$1$} ;
        \node (1r) at (1, -2.25) {$1$} ;

        \node (a) at (2, -1.75) {$h$} ;
        \node (b) at (3.25, -2) {$a$} ;

        \node (rhsN1) at (2, -2.5) {$\rhs_{N_1}$} ;

        \draw (root) -- (lS) ;
        \draw (lS) -- (rF) ;
        \draw (rF) -- (rG) ;
        \draw (lF) -- (f) ;
        \draw (f) -- (1l) ;
        \draw (f) -- (1r) ;
        \draw (lG1) -- (a) ;
        \draw (lG2) -- (b) ;
        \draw (lS) -- (lF) ;
        \draw (lF) -- (lG1) ;
        \draw (lG1) -- (lG2) ;
        \draw (lG2) -- (end) ;

        \draw (a) -- (rhsN1) ;
      \end{tikzpicture}
    \end{tabular}
  \end{minipage}
  \vspace{-0.1in}
  \caption{(Left) Macro grammar $G$ over $\Delta=\{a^0,h^1,g^2\}$ and
    nonterminals $N=\{N_1^0,N_2^0,N_3^1\}$ and (Right) its encoding as
    a tree $\enc(G)\in T_{\Gamma(\Delta,N)}$ over grammar alphabet
    $\Gamma(\Delta, N)$. }
  \label{fig:grammar-tree}
\end{figure}
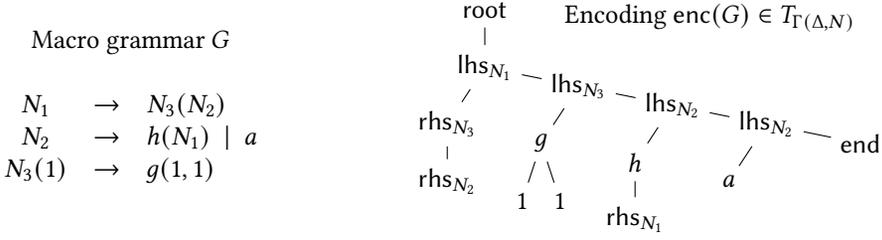

We will define tree automata whose inputs are trees that encode
grammars. \Cref{fig:grammar-tree} shows an example of how we choose to
encode a grammar as a tree; we arrange the grammar rules along the
topmost right-going spine of the tree and use symbols $\lhs_{N_i}$ and
$\rhs_{N_i}$ to distinguish between occurrences of nonterminal $N_i$
on the left-hand and right-hand sides of rules. We use positive
integers to indicate the parameters for macro symbols.  We write
$\Gamma(\Delta,N)$ to denote \emph{grammar alphabets}.

\begin{definition}[Grammar alphabet]
  Given a ranked alphabet $\Delta$ and a set of ranked nonterminal
  symbols $N$ with maximum macro arity $k\in\Nat$, we define its
  \emph{grammar alphabet} as
  \[\Gamma(\Delta, N) \coloneqq
\Delta \sqcup \{\route^1, \prodend^0\}\sqcup \{\lhs_{N_i}^2,
  \rhs_{N_i}^{\arity(N_i)} \,:\, N_i\in N\}\sqcup \{1^0,\ldots,k^0\}.\]
\end{definition}

Any grammar over $\Delta$ using nonterminals $N$ can be encoded as a
term over $\Gamma(\Delta,N)$. We define a mapping $\enc$ from grammars
to the \emph{grammar trees} that encode them, and a mapping $\dec$
from grammar trees back to grammars. These are straightforward and can
be found in \Cref{sec:enc-dec}.
We elide the distinction between a \emph{grammar} and its encoding as
a \emph{grammar tree}.

\section{Formulating DSL Synthesis}
\label{sec:formulation}

In this section, we introduce a novel formulation of DSL synthesis
that addresses a fundamental aspect of DSLs: namely, a
\emph{domain-specific} language should (a) express relevant domain
concepts and (b) \emph{not} express irrelevant ones, or at least
express them less succinctly than relevant ones. Our formulation is
based on \emph{learning}: expressive power of the DSL must be
carefully tuned to enable solving an input set of few-shot learning
instances. We introduce two distinct mechanisms for specifying which
concepts should be expressed in a DSL, one of which requires certain
concepts to be expressed, addressing (a), and the other, addressing
(b), puts constraints on how succinctly a DSL expresses certain other
concepts, if at all.

This section lays the ground for the formal DSL synthesis problems and
results developed in
\Cref{sec:adequate,sec:dsl-synthesis,sec:macrogram}. We introduce the
two distinct learning signals for our formulation of DSL synthesis and
discuss common aspects of all problems studied in this work.

\subsection{Learning Instances: Expressive Power and Relative Succinctness}
\label{sec:learn-inst-base}

All problems we study involve synthesizing a DSL given \emph{instances
  of few-shot learning problems}.

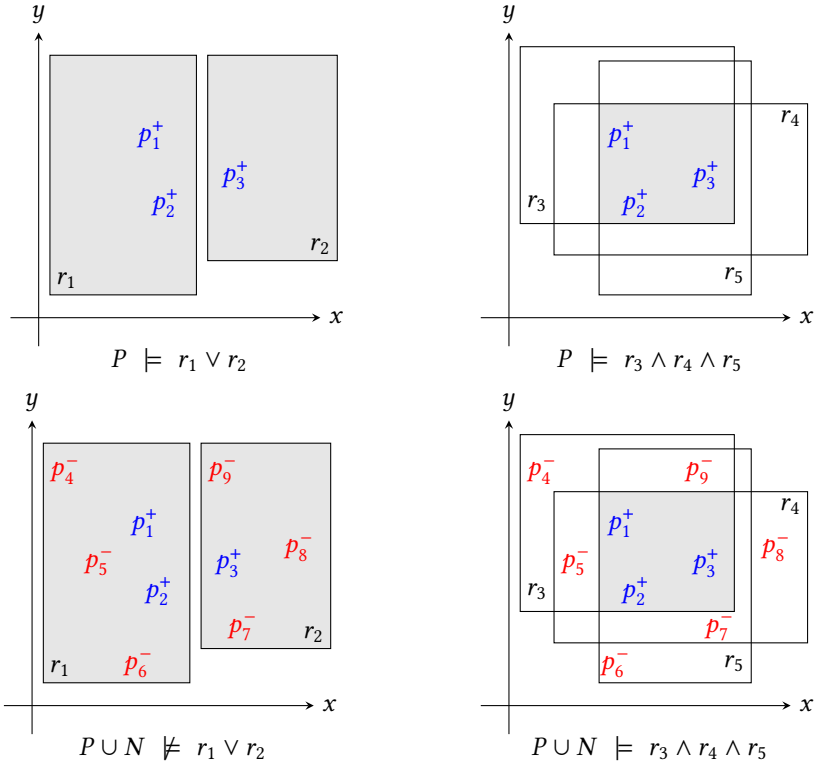
\begin{figure}
  \centering
  \hfill
  \begin{minipage}{0.45\linewidth}
    \begin{tikzpicture}[scale=0.75, axis/.style={thin, ->, >=stealth}]

      \draw[axis] (-0.5, 0) -- (5, 0) node[right] {$x$};
      \draw[axis] (0, -0.5) -- (0, 5) node[above] {$y$};

      \fill[gray!20] (0.2, 0.4) rectangle (2.8, 4.6) ;
      \draw[black] (0.2, 0.4) rectangle (2.8, 4.6) ;
      \fill[gray!20] (3, 1) rectangle (5.3, 4.6) ;
      \draw[black] (3, 1) rectangle (5.3, 4.6) ;

      \node[blue] at (2, 3.2) {$p_1^{+}$} ;
      \node[blue] at (3.5, 2.5)  {$p_3^+$};
      \node[blue] at (2.25, 2) {$p_2^+$};

      \node[black] at (0.5, 0.65) {$r_1$};
      \node[black] at (5, 1.25) {$r_2$};

      \node at (2.5, -0.75) {$P\,\,\models\,\, r_1\vee r_2$} ;
    \end{tikzpicture}
  \end{minipage}%
  \begin{minipage}{0.45\linewidth}
    \begin{tikzpicture}[scale=0.75, axis/.style={thin, ->, >=stealth}]

      \draw[axis] (-0.5, 0) -- (5, 0) node[right] {$x$};
      \draw[axis] (0, -0.5) -- (0, 5) node[above] {$y$};

      \fill[gray!20] (1.6, 1.65) rectangle (4, 3.75) ;
      \draw[black] (0.2, 1.65) rectangle (4, 4.75) ;
      \draw[black] (1.6, 0.4) rectangle (4.3, 4.5) ;
      \draw[black] (0.8, 1.1) rectangle (5.3, 3.75) ;

      \node[blue] at (2, 3.2) {$p_1^{+}$} ;
      \node[blue] at (3.5, 2.5)  {$p_3^+$};
      \node[blue] at (2.25, 2) {$p_2^+$};

      \node[black] at (0.5, 2) {$r_3$};
      \node[black] at (5, 3.5) {$r_4$};
      \node[black] at (4, 0.75) {$r_5$};

      \node at (2.5, -0.75) {$P\,\,\models\,\, r_3\wedge r_4\wedge r_5$} ;
    \end{tikzpicture}
  \end{minipage}

  \hfill
  \begin{minipage}{0.45\linewidth}
    \begin{tikzpicture}[scale=0.75, axis/.style={thin, ->, >=stealth}]

      \draw[axis] (-0.5, 0) -- (5, 0) node[right] {$x$};
      \draw[axis] (0, -0.5) -- (0, 5) node[above] {$y$};

      \fill[gray!20] (0.2, 0.4) rectangle (2.8, 4.6) ;
      \draw[black] (0.2, 0.4) rectangle (2.8, 4.6) ;
      \fill[gray!20] (3, 1) rectangle (5.3, 4.6) ;
      \draw[black] (3, 1) rectangle (5.3, 4.6) ;

      \node[blue] at (2, 3.2) {$p_1^{+}$} ;
      \node[blue] at (3.5, 2.5)  {$p_3^+$};
      \node[blue] at (2.25, 2) {$p_2^+$};

      \node[red] at (0.6, 4.1) {$p_4^-$};
      \node[red] at (3.75, 1.35) {$p_7^-$};
      \node[red] at (3.4, 4.1) {$p_9^-$};
      \node[red] at (1.9, 0.7) {$p_6^-$};
      \node[red] at (1.2, 2.5) {$p_5^-$};

      \node[red] at (4.75, 2.75) {$p_8^-$}; 

      \node[black] at (0.5, 0.65) {$r_1$};
      \node[black] at (5, 1.25) {$r_2$};

      \node at (2.5, -0.75) {$P\cup N\,\,\not\models\,\, r_1\vee r_2$} ;
    \end{tikzpicture}
  \end{minipage}
  \begin{minipage}{0.45\linewidth}
    \begin{tikzpicture}[scale=0.75, axis/.style={thin, ->, >=stealth}]

      \draw[axis] (-0.5, 0) -- (5, 0) node[right] {$x$};
      \draw[axis] (0, -0.5) -- (0, 5) node[above] {$y$};

      \fill[gray!20] (1.6, 1.65) rectangle (4, 3.75) ;
      \draw[black] (0.2, 1.65) rectangle (4, 4.75) ;
      \draw[black] (1.6, 0.4) rectangle (4.3, 4.5) ;
      \draw[black] (0.8, 1.1) rectangle (5.3, 3.75) ;

      \node[blue] at (2, 3.2) {$p_1^{+}$} ;
      \node[blue] at (3.5, 2.5)  {$p_3^+$};
      \node[blue] at (2.25, 2) {$p_2^+$};

      \node[red] at (0.6, 4.1) {$p_4^-$};
      \node[red] at (3.75, 1.35) {$p_7^-$};
      \node[red] at (3.4, 4.1) {$p_9^-$};
      \node[red] at (1.9, 0.7) {$p_6^-$};
      \node[red] at (1.2, 2.5) {$p_5^-$};

      \node[red] at (4.75, 2.75) {$p_8^-$}; 
le

      \node[black] at (0.5, 2) {$r_3$};
      \node[black] at (5, 3.5) {$r_4$};
      \node[black] at (4, 0.75) {$r_5$};

      \node at (2.5, -0.75) {$P\cup N\,\,\models\,\, r_3\wedge r_4\wedge r_5$} ;
    \end{tikzpicture}
  \end{minipage}
  \caption{A learning instance $I=(P,N)$ where examples are points in
    the plane. Training examples $P$ are the positive points (in blue)
    and testing examples $N$ are the negative points (in red). Top:
    the training examples are satisfied by both $r_1\vee r_2$ and
    $r_3\wedge r_4\wedge r_5$, with $r_i$ signifying membership of a
    point within the depicted rectangle. Bottom: the testing examples
    are satisfied by the conjunction but not the disjunction. }
  \label{fig:learning-instance}
\end{figure}

\begin{definition}[Learning Instance]
  A \emph{learning instance} is a pair $(X,Y)$ consisting of a set $X$
  of \emph{training} examples and a set $Y$ of \emph{testing}
  examples.
\end{definition}

Learning instances, understood as \emph{inputs} to a DSL synthesizer,
contribute \emph{two distinct signals} for construction of a
\emph{domain-specific} language, both of which concern expressive
power of a DSL. The first signal is about \emph{expressivity} and the
second is about \emph{succinctness and inexpressivity} of the DSL. We
discuss these in turn using the learning instance depicted in
\Cref{fig:learning-instance}, which consists of labeled examples of
points in the plane. The training examples $X=P$ are the positive
points and the testing examples $Y=N$ are the negative points. In this
setting, the \emph{domain} we want to capture (using a DSL) is one
where the relevant concepts are sets of points in the plane defined as
intersections of a given set of rectangles. In terms of syntax, such a
domain might correspond to conjunctions of basic predicates, which is
a widespread and useful syntactic bias in various domains, e.g.,
conjunctive invariants in program verification~\cite{houdini}.

\textbf{Expressivity.} The first signal says that, given a learning
instance we would like to solve, an adequate DSL must have enough
power to express \emph{some} concept which solves it\footnote{Similar
  to related problems like library learning, where either specific
  given programs must be expressible or a set of learning problems
  must be solvable using some program (e.g. ~\cite{dreamcoder, stitch,
    babble}).}. That is, we want a DSL which contains some expression
that satisfies all the examples $X\cup Y$. Consider a DSL for the
situation in \Cref{fig:learning-instance} whose language of
expressions consists of Boolean combinations of a fixed set of basic
predicates capturing rectangles in the plane, e.g.
$r_i \coloneqq (1 < x < 3) \wedge (2 < y < 4)$. Any DSL containing an
expression equivalent to $\varphi\coloneqq r_3\wedge r_4\wedge r_5$
(shown in \Cref{fig:learning-instance}) is adequate because $\varphi$
satisfies all training examples $X$ (positive points $P$) and also all
the testing examples $Y$ (negative points $N$). Note that the
expressivity signal treats a learning instance as a set of examples
$X\cup Y$ and thus forgets about the distinction between training and
testing examples.

\textbf{Relative Succinctness and Inexpressivity.} In addition to
expressing relevant concepts, a DSL should also \emph{precisely}
capture those relevant concepts and perhaps little or nothing
else. DSLs need not be fully expressive, and it is in fact a feature
if they avoid expressing irrelevant information which does not reflect
the domain in question\footnote{Imagine a DSL for Excel spreadsheets
  as in~\cite{flashfill}; there are many irrelevant concepts, like
  arithmetic on ASCII codes or reversal of strings, which do not
  reflect typical tasks users of Excel would like to automate. 
}. To address this aspect of DSLs, we formulate a \emph{relative
  succinctness} constraint which requires that specific concepts
should be expressed \emph{less succinctly than other concepts (or not
  all)} in a synthesized
DSL.

To understand this inexpressivity signal, let us consider
\Cref{fig:learning-instance} once more and address the purpose of
splitting the instance into training and testing sets $(X,Y)$. Recall
we have assumed a fixed symbolic language of expressions involving
Boolean combinations of some basic rectangles in the plane. This
language is able to represent some regions of the plane using very
succinct expressions and for representing other regions it requires
less succinct expressions. For this example, let us equate
\emph{succinctness} with \emph{syntactic length}, e.g. in
\Cref{fig:learning-instance} the disjunction $r_1\vee r_2$ is
syntactically smaller, and thus more succinct, than the conjunction
$r_3\wedge r_4\wedge r_5$.

Now, imagine we want to solve the learning instance using a fixed
learning algorithm which searches the space of expressible concepts by
first considering small expressions and only later considering large
expressions if no small expression can be found which is consistent
with a small number of examples. Many expression learning algorithms
and synthesis tools use such heuristics, with syntax tree size and
depth being common measures of succinctness. \emph{For such a fixed
  algorithm}, along with a given set of learning instances, our goal
will be to synthesize a DSL such that learning succeeds \emph{using
  the fixed algorithm together with the DSL}, and where \emph{success}
corresponds to finding an expression that is consistent with
\emph{all} the examples in the given learning instance, but where the
learning algorithm itself only considers a smaller number of training
examples. In other words, we ask for a DSL in which the most
\emph{succinct} concepts that it expresses which are consistent with
the training examples $X$ are also consistent with the testing
examples $Y$. Learning algorithms which prefer concepts that are
simple to express, when operating over such DSLs, will discover
solutions which solve training examples \emph{and} which also
generalize to testing examples.

Let us return to the situation depicted in
\Cref{fig:learning-instance}. Suppose that $\psi\coloneqq r_1\vee r_2$
is the shortest expression in our DSL which is consistent with the
(positive) training examples $X$. Since $\psi$ is not consistent with
the (negative) testing examples $Y$, we want to reject this
DSL. Afterall, an algorithm with a bias toward succinct expressions
will select the smaller $\psi$ instead of the larger, but consistent,
$\varphi\coloneqq r_3\wedge r_4\wedge r_5$. The specific learning
instance in \Cref{fig:learning-instance} might favor, for example, a
DSL which allows the expression of conjunctions of base rectangles and
disallows disjunctions entirely. Our formalization of this
inexpressivity signal will in fact also admit DSLs which still express
disjunction, but which make it less succinct to express, rather than
inexpressible. Thus the formal constraint we introduce
(\Cref{generalization}) has to do with the \emph{relative}
succinctness of expressions in a language, with one option for
satisfying the constraint being to not express certain concepts at
all. In the most general case, we are interested in synthesizing DSLs
which meet such relative succinctness constraints for \emph{several
  input learning instances}.

\subsection{Base Language and Consistency}
\label{sec:base-language}

In order to synthesize DSLs, we need some mechanism for defining their
syntax and semantics. For that purpose, we will assume a specific
\emph{base language} with an existing syntax and semantics. We will
define the expressions of our DSLs using the syntax of the base
language, and the semantics of the DSL expressions is, in this way,
inherited from the base language. Furthermore, the base language
specifies what counts as an \emph{example} in a learning instance.

\begin{definition}[Base language]
  The \emph{base language} is specified by a regular tree grammar
  $G=(S,N,\Delta,P)$, a set $\Mm$ of \emph{examples}, and a predicate
  $\consistent \subseteq L(G)\times\Mm$ that holds when an expression
  $e\in G$ is consistent with an example $M\in\Mm$, which we write as
  $\consistent(e,M)$.
  \label{defn:base-language}
\end{definition}

The consistency predicate in \Cref{defn:base-language} abstracts away
details of specific languages while keeping information relevant for
DSL synthesis, namely whether a concept expressed in a given symbolic
language is consistent with some examples from a domain.

We will identify an abstract \emph{concept} $h$ with a \emph{subset}
of examples $\Mm$\footnote{Concepts can be understood independently of
  specific symbolic languages we use to express them. For instance,
  concepts can be arbitrary functions on numbers and DSLs can be
  specific languages expressing programs that compute the
  functions.}. Given a concept $h\subseteq \Mm$, we will say an
expression $e$ is consistent with $h$, written $h\models e$, if
$\consistent(e, M)$ holds for every $M\in h$ and $\consistent(e,M')$
does not hold for any $M'\in \Mm\setminus h$.

\begin{example}[Rectangles in the plane]
  Consider again \Cref{fig:learning-instance} and suppose we have a
  fixed and finite set of basic rectangles $R$. The base language
  might consist of a grammar like the following
  \begin{align*}
    S\rightarrow r\in R \vsep S\vee S \vsep S\wedge S \vsep \neg S,
  \end{align*}
  with examples $\Mm=\mathbb{R}^2\times \{+,-\}$ being labeled points
  in the plane. Whether an example $((x,y),l)$ is consistent with an
  expression $\varphi\in L(S)$ is determined by
  \begin{align*}
    (x,y)\in \llbracket \varphi \rrbracket \quad \text{if}\quad l=+
    \quad \text{and} \quad (x,y)\notin \llbracket \varphi \rrbracket \quad \text{if}\quad l=-,
  \end{align*}
  where
  $\llbracket \cdot\rrbracket : L(S)\rightarrow \Pp(\mathbb{R}^2)$
  interprets each expression as a subset of the plane in the obvious
  way. An abstract concept $h\subseteq \Mm$ in this example
  corresponds to all labeled points consistent with some subset of the
  plane, e.g. for $A\subseteq \mathbb{R}^2$ the concept is
  \[
    h_A=\{((a,b),l) \,:\, (a,b)\in\mathbb{R}^2,\, l=+ \text{ if
    }(a,b)\in A \text{ and otherwise } l=-\}.
  \]
\end{example}

The problems we formulate in
\Cref{sec:adequate,sec:dsl-synthesis,sec:macrogram} are parameterized
by a base language and are thus very general. And our results, as we
will see, hold for a large class of base languages.

With the concept of a base language, we can now formalize both the
mechanisms for defining DSLs and the constraints captured by learning
signals described in \Cref{sec:learn-inst-base}.

\subsection{DSL Spaces and Properties: Adequacy and Generalization}
\label{sec:solut-conc-dsl}

We formalize DSL synthesis problems along two dimensions: (1) the
precise mechanism for specifying DSLs over a base language and (2)
properties required of a DSL.

Along dimension (1), we consider specifying DSLs using either grammars
or (more expressive) macro grammars defined over the base
language. Whether or not macros are allowed, in either case the object
we wish to synthesize is a grammar, which, when combined with the base
language, satisfies a particular solution concept given by properties
(2), which formalize the expressivity and relative
succinctness/inexpressivity signals described in
\Cref{sec:learn-inst-base} as properties of DSLs which we call
\emph{adequacy} and \emph{generalization}.  We start with (1) and then
address (2).

\begin{definition}[DSL space]
  Given a base language with grammar $G'=(S',N',\Delta,P')$, the space
  of DSLs we consider for synthesis is determined by a space of
  grammars $G=(S,N,\Delta,P)$, with $N'\subseteq N$, which define
  productions for new nonterminal symbols $N\setminus N'$. Given a
  synthesized grammar $G$, the resulting DSL is defined by
  $\extend(G,G')\coloneqq (S,N,\Delta, P\cup P')$.
  \label{dsl-space}
\end{definition}
\noindent When the DSL space above is determined by tree grammars $G$ we use the
phrase \emph{DSL synthesis}. When it ranges over tree \emph{macro}
grammars we use the phrase DSL synthesis \emph{with macros}.

\begin{example}[Macro grammars]
  Macro grammars provide a natural way to model classes of expressions
  which take parameters, like functions. Regular grammars do not
  support this, and macro grammars are in fact \emph{more
    expressive}. The macro grammar below defines a set of expressions—
  which is \emph{not} regular— encoding functions over two variables
  $x$ and $y$. These functions sum the results of applying some other
  function to each parameter, and the function must be the \emph{same}
  for each parameter, e.g., things like $h^n(x) + h^n(y)$.
  \begin{align*}
    S\rightarrow F(x,y),\qquad
    F(1,2)\rightarrow F(g(1),g(2)) \vsep F(h(1),h(2)) \vsep 1+2
  \end{align*}
  \label{ex:macro-more-expressive}
\end{example}
\vspace{-0.3in}

Along dimension (2), we consider two properties of DSLs, leading to
weak and strong variants of DSL synthesis. In the weak variant, our
goal is to synthesize a DSL which, for each input learning instance,
expresses some concept that solves the examples it contains. This is
the expressivity signal discussed in \Cref{sec:learn-inst-base}.

\begin{definition}[Expression solution]
  Let $I=(X,Y)$ be a learning instance. We say an expression $e$
  solves $I$ if it is consistent with $X\cup Y$, and $e$ is consistent
  with a set of examples $X$ if we have
  \begin{align*}
    \mathsmaller{\bigwedge}_{M\in X}\,\consistent(e, M).
  \end{align*}
  We write $\solves(e,I)$ and $\solves(e,X)$ when these are true.
\end{definition}

\noindent We call the weak property required of DSLs \emph{adequacy}.

\begin{definition}[Adequacy]
  \label{adequacy}
  Given a set of learning instances $I_1,\ldots, I_n$, a DSL is
  \emph{adequate} if, for each $I_p$, it contains an expression $e$
  such that $\solves(e,I_p)$ holds.
\end{definition}

The second, and stronger, property we introduce for DSLs corresponds
to the relative succinctness of DSL expressions and inexpressivity, as
described in \Cref{sec:learn-inst-base}. This stronger property is
formalized in terms of \emph{concept orderings} induced by DSLs, which
depend on specific ways to measure the complexity (or succinctness) of
expressions.

\begin{definition}[Expression and concept complexity]
  \emph{Expression complexity} for a DSL $G$ is defined by a function
  $c_G : T_\Delta \rightarrow \Nat\cup\{\infty\}$. Note that
  expressions are not themselves ordered by such functions, e.g.,
  parse trees for distinct $e$ and $e'$ may have equal depth. Rather,
  a complexity function partitions expressions and orders cells of the
  partition. Expression complexity induces a notion of complexity on
  concepts $h$ by:
  \[
    c_G(h) = \min\left(\left\{c_G(e) \,:\, e\in T_\Delta,\,\, h\,\models\, e\right\}\right).
  \]
\end{definition}

We single out one particularly common and practically relevant way to
measure expression complexity for a DSL: parse tree
\emph{depth}\footnote{Many synthesis heuristics explore grammars by
  increasing depth.}. This notion is DSL-dependent, in the sense that
an expression $e$ may have a parse of shallow depth in one DSL but
only very large depth in another.

\begin{definition}[Parse tree depth complexity]
  Given a grammar $G$, we define the depth complexity
  $\depth_G : T_\Delta\rightarrow \Nat\cup\{\infty\}$, which for any
  $e\in G$ is the minimum over the set of all parse trees $t$ for $e$
  of the maximum number of nonterminals encountered along any
  root-to-leaf path in $t$\footnote{We omit a formal definition of
    parse trees; see \Cref{sec:enc-dec} for an example.
  }. We let
  $\min(\emptyset)=\infty$. So if there is no parse tree for $e$ in
  $G$ then $\depth_G(e)=\infty$.
\end{definition}

Notions of expression complexity $c_G$, such as parse tree depth,
induce preorders $\le_G$ on expressions of a DSL $G$ and, more
generally, they induce preorders on concepts.

\begin{definition}[Expression and concept orderings]
  By \emph{expression ordering} or \emph{concept ordering}, we mean in
  general a preorder on expressions or concepts. We consider
  expression orderings, written $e \le_G e'$, that are induced by
  expression complexity functions $c_G$ which are relative to a DSL
  $G$, e.g.  $e\le_G e'$ holds if $c_G(e)\le c_G(e')$\footnote{The
    ordering $\le$ on numbers being the usual one extended to
    $\Nat\cup\{\infty\}$.}. Concept orderings are then induced
  similarly by induced concept complexity, i.e. $h\le_G h'$ if
  $c_G(h)\le c_G(h')$.
\end{definition}

We single out the ordering given by parse tree depth complexity, which
appears in our results
(\Cref{sec:adequate,sec:dsl-synthesis,sec:macrogram}).

\begin{definition}[Depth ordering]
  Given a grammar $G$ and an expression $e\in T_{\Delta}$, the
  \emph{depth expression ordering} $e\le_G e'$ holds when
  $\depth_G(e)\le \depth_G(e')$. Similarly, the \emph{depth concept
    ordering} $h\le_G h'$ holds when $\depth_G(h)\le \depth_G(h')$.
  \label{defn:depth-concept-ordering}
\end{definition}

The intuition to take away from this is that DSLs organize a space of
abstract concepts in a way that captures relevant, domain-specific
concepts using symbolic expressions of low complexity.

\begin{example}[Depth concept and expression ordering]
  Again consider \Cref{fig:learning-instance} and suppose we have a
  DSL with an expression grammar $G$ with rules:
  \begin{align*}
    S\rightarrow r\in R \vsep S\wedge S.
  \end{align*}
  There are many inexpressible concepts, e.g. those concepts $h$
  corresponding to non-convex subsets of the plane such as
  $\llbracket r_1\vee r_2\rrbracket$, for which
  $\depth_G(h)=\depth_G(r_1\vee r_2)=\infty$. However, in a more
  permissive DSL $G'$, such as one with the following rules
  \begin{align*}
    S\rightarrow r\in R \vsep S\wedge S \vsep S\vee S \vsep \neg S,
  \end{align*}
  which allows all Boolean combinations of rectangles, we have that
  \[
    \depth_{G'}(r_1\vee r_2)=2 \lneq \depth_{G'}(r_3\wedge r_4\wedge
    r_5)=3.
  \]
  For the DSL $G'$, an algorithm which explores expressions in order
  of increasing depth would find $r_1\vee r_2$ before it finds
  $r_3\wedge r_4\wedge r_5$, though the latter is consistent with the
  testing examples in addition to the training examples.
\end{example}

With our notions of expression ordering and complexity measures like
depth, we can now state a novel property of DSLs, called
\emph{generalization}, which relates to the notions of relative
succinctness and inexpressibility described in
\Cref{sec:learn-inst-base}. The generalization property implies
adequacy (i.e. existence of solutions, see \Cref{adequacy}), and
further requires that some of the most succinct expressions which
solve training examples, where succinctness is measured by a fixed
expression ordering $\le_G$, should also solve testing examples,
i.e. some maximally succinct expression that satisfies all training
examples should generalize to the testing set. Another way of stating
the requirement is that expressions which fail to generalize to
testing examples, though they are consistent with training examples,
should be relatively no more succinct than an expression which does
generalize (or they are not expressed at all).

We call expressions which are consistent with a set of training
examples, but \emph{inconsistent} with a set of testing examples,
\emph{non-generalizing expressions}.

\begin{definition}[Non-generalizing expression]
  Given an instance $I=(X,Y)$, a \emph{non-generalizing expression}
  $e$ is one for which $\solves(e, X)$ holds but $\solves(e,Y)$ does
  not hold.
\end{definition}

\begin{example}
  The expression $r_1\vee r_2$ from \Cref{fig:learning-instance} is
  non-generalizing because it solves the positive examples (training
  set $X$) but does not solve the negative examples (testing set $Y$).
\end{example}

Finally, we can state the generalization property, which lifts the
relative succinctness/inexpressivity requirement to all learning
instances given in the input for a DSL synthesis problem.

\begin{definition}[Generalization]
  \label{generalization}
  Given a set of learning instances $I_1,\ldots,I_n$, a DSL $G$ is
  \emph{generalizing} for an expression ordering $e\le_G e'$ if it is
  adequate (\Cref{adequacy}), and additionally, the following
  holds. For each $I_p$, there is an expression $e\in G$ such that
  $\solves(e,I_p)$ holds, and for all $e'\in G$ which are
  non-generalizing on $I_p$, we have that $e\le_G e'$.
\end{definition}

We now have the primary concepts needed to introduce our DSL synthesis
problems. Before doing so, we discuss (\Cref{sec:constraining}) a
natural mechanism for constraining the space of allowed DSLs, namely,
regular syntax contraints on grammars, and (\Cref{sec:fac}) the scope
of our forthcoming theorems as it pertains to the base language in
terms of which we define DSLs.

\subsection{Constraining the DSL Space}
\label{sec:constraining}

The problems we introduce in
\Cref{sec:adequate,sec:dsl-synthesis,sec:macrogram} are very
general. They are parameterized by a base language, which is used to
define the syntax and semantics of synthesized DSLs
(\Cref{sec:base-language}). In general, the base language should be
relatively expressive, as we do not want to assume knowledge of
relevant concepts in the domain, their relationships, and how much
power is needed to define them. With an expressive base language, e.g.
a Turing-complete programming language, there is no mechanism in the
problem specification (though the algorithms we introduce can easily
output the \emph{syntactically shortest} DSL solving the problem)
which prevents a synthesized DSL from ``memorizing'' solutions to
learning instances by introducing new symbols that exactly define
specific solutions, effectively making the solutions into constants
and therefore as succinct as possible.

Such memorization can be mitigated by enforcing syntactic constraints
on the synthesized DSL. For instance, we might put an upper bound on
the number of rules that any specific nonterminal symbol can use,
ruling out a grammar like
\[ S\rightarrow \mathsf{solution}_1 \vsep \cdots \vsep S\rightarrow
  \mathsf{solution}_k.
\]
Such constraints can be specified in the input, similar to syntax
constraints in program synthesis~\cite{sygus}. In the DSL synthesis
problems we introduce in
\Cref{sec:adequate,sec:dsl-synthesis,sec:macrogram}, syntax
constraints are specified in the input using a grammar, which we refer
to as a \emph{meta-grammar} because it constrains the syntax of
(object) grammars that define DSLs.

\begin{definition}[Meta-grammar]
  By \emph{meta-grammar} $\metagram$ we mean a regular tree grammar
  over a grammar alphabet $\Gamma(\Delta,N)$. We use meta-grammars to
  constrain the syntax of synthesized DSLs.
\end{definition}

Note also that handling a meta-grammar in the input is a
\emph{feature} and makes the problems more general, as it can always
be omitted and an unconstrained grammar can be used by
default\footnote{\Cref{sec:grammar-tree-appendix} shows an example of
  an unconstrained meta grammar.
}.

\subsection{Tree Automaton-Computable Language Semantics}
\label{sec:fac}

Being able to algorithmically check DSLs for adequacy
(\Cref{adequacy}) and generalization (\Cref{generalization}) implies
checking whether a learning instance has \emph{any} solution at
all. In other words, for the problem formulations we pursue, being
able to verify solutions for DSL synthesis implies that symbolic
learning over the underlying base language must be decidable.

Our forthcoming results are \emph{meta-theorems} on the decidability
of DSL synthesis. We obtain as specific instantiations of these
meta-theorems a swath of decidability results on DSL synthesis over a
rich class of base languages recently shown to admit decidable
learning~\cite{popl22,oopsla23}, including finite variable logics,
modal logics, regular expressions, context-free grammars, linear
temporal logic, and some restricted programming languages, among
others. Such languages have semantics which can be computed by a tree
automaton in the sense that the consistency predicate, when
specialized to any \emph{fixed example} $M$, can be computed by a tree
automaton whose size is a function of only the length $|M|$ of an
encoding of the example.

\begin{definition}[Tree Automaton-Computable Semantics]
  Fix a base language consisting of grammar $G=(S,N,\Delta,P)$,
  examples $\Mm$, and predicate
  $\consistent \subseteq L(G)\times \Mm$. We say the language
  semantics \emph{can be evaluated over fixed structures by a tree
    automaton} if, for any example $M\in \Mm$, there are computable
  tree automata $A_M$ and $A_{\neg M}$ that accept, respectively, all
  expressions consistent with $M$ and all expressions inconsistent
  with $M$, i.e., $L(A_M) = \{e\in L(G) \,:\, \consistent(e,M)\}$ and
  $L(A_{\neg M}) = \{e\in L(G) \,:\, \neg \consistent(e,M)\}$. We
  refer to $A_M$ and $A_{\neg M}$ as \emph{example automata}.
  \label{defn:example-automata}
\end{definition}

In the remainder, we introduce various DSL synthesis problems, varying
along the dimensions of (1) plain vs macro grammars and (2) adequacy
and generalization as the synthesis specification, and we prove
decidability results in each setting.

\section{Adequate DSL Synthesis}
\label{sec:adequate}

In this section, we introduce the \emph{adequate DSL synthesis
  problem}, the simplest of the problems we consider. Given a set of
few-shot learning instances $I_1,\ldots,I_l$, along with a
meta-grammar constraint $\metagram$, the requirement is to synthesize
an adequate DSL satisfying $\metagram$, i.e. one which contains a
solution for each instance.

\begin{tcolorbox}[
  colback=white,
  colframe=gray,
  boxsep=4pt,
  left=0pt,
  right=0pt,
  top=2pt,
  arc=1pt,
  boxrule=1pt,
  toprule=1pt
  ]
\begin{problemDef}[Adequate DSL Synthesis]
\,\\
  \textbf{Parameters:}
  \begin{itemize}
  \item Finite set of nonterminals $N$
  \item Base language with $G'=(S',N',\Delta,P')$ and $N'\subseteq N$
  \end{itemize}

  \textbf{Input:}
  \begin{itemize}
  \item Instances $I_1, \ldots, I_l$
  \item Meta-grammar $\metagram$ over $\Gamma(\Delta,N)$
  \end{itemize}

  \textbf{Output:} A grammar $G=(S,N,\Delta,P)$ such that:
  \begin{enumerate}
  \item $\extend(G,G')$ is adequate (\Cref{adequacy}) and
  \item $\enc(G)\in L(\metagram)$, i.e. constraints $\metagram$ are satisfied
  \end{enumerate}
\end{problemDef}
\end{tcolorbox}

We now establish decidability of adequate DSL synthesis. First we give
an overview of the proof and then a more detailed construction after.

\subsection{Overview}
\label{sec:decid-gramm-synth}

The proof involves construction of a tree automaton $A$ that reads
finite trees encoding grammars. We design $A$ so that it accepts
precisely those grammars which are solutions to the problem, with
existence and synthesis accomplished using standard algorithms for
emptiness of $L(A)$. The main component in the construction is an
automaton $A_I$ that accepts grammars that, when combined with the
base grammar, are adequate in the sense of \Cref{adequacy}: they
contain an expression consistent with examples $X\cup Y$, where
$I=(X,Y)$. Using these $A_I$, we then construct a final automaton $A$
that is the product over all $A_I$ and also $A(\metagram)$, an
automaton accepting grammars that satisfy the constraint
$\metagram$. This final automaton $A$ accepts exactly the grammars
satisfying $\metagram$ which contain solutions to each instance $I$
and thus which solve the adequate DSL synthesis problem.

As discussed in \Cref{defn:example-automata}, for each example
$M\in X\cup Y$, e.g. a labeled point in the plane, we assume existence
of a (non-deterministic top-down) \emph{example automaton} $A_M$ with
language
\begin{align*}
  L(A_M)&=\left\{e\in L(G')\,:\, \mathsf{consistent}(e,M)\right\},
\end{align*}
i.e., the set of expressions in the base language which are consistent
with the example. If we can in fact construct such automata for a
given base language, then our proof will apply.

The instance automaton $A_I$ reads a grammar tree over alphabet
$\Gamma(\Delta,N)$ and explores potential solutions for $I=(X,Y)$ by
simulating an automaton $A_1$, defined as
\begin{align*}
  A_1\coloneqq \mathop{\mathlarger{\mathlarger{\cap}}}\limits_{M\in X\cup Y} A_M, \quad\text{with
  } L(A_1) = \left\{e\in L(G') \,:\, \solves(e, I)\right\},
\end{align*}
which uses the example automata to accept all expressions in the base
language that solve $I$.

Intuitively, $A_I$ operates by walking up and down the input grammar
tree to nondeterministically guess a parse tree for an expression $e$
that solves $I$. When it reads the right-hand side of a production, it
simulates $A_1$, stopping with acceptance if it completes a parse tree
branch on which $A_1$ satisfies its transition formula. Otherwise it
rejects if $A_1$ is not satisfied, or it continues guessing the
construction of a parse tree if it reads a nonterminal symbol. Each
time it reads a nonterminal, it navigates to the top spine to find
productions corresponding to that nonterminal, and it must guess which
production to use among all those that it finds. If any sequence of
such guesses and simulations of $A_1$ leads to a completed parse tree
which satisfies the transition formulae for $A_1$, then the existence
of a solution in the grammar is guaranteed, and vice versa.

\subsection{Automaton construction} Suppose
$A_{1}=(Q_{1},\Delta,Q^i_{1},\delta_1)$.  We define a two-way
alternating automaton $A_I=(Q, \Gamma(\Delta, N), Q^i, \delta)$. The
automaton operates in two modes. In \textbf{mode 1}, it walks to the
top spine of the input tree in search of productions for a specific
nonterminal. Having found a production, it enters \textbf{mode 2}, in
which it moves down into the term corresponding to the right-hand side
of the production, simulating $A_1$ as it goes.

Below we use $N_i\neq N_j\in N$, $q\in Q_1$, $x, f\in\Delta$, and
$t_1,\ldots,t_r\in T_\Delta(\{\rhs_{N_i} \,:\, N_i\in N\})$. We use an
underscore $"\_"$ to describe a default transition when no other
case matches.

\subsubsection*{\bf Mode 1} Find productions. States are drawn from
 \label{sec:mode-1}
 $\mOne\coloneqq \big(Q_1\times \{\start\}\big) \cup \big(Q_1\times
 N\big)$.

\begin{align*}
  &\delta(\la q, \start \ra, \route) = (\down, \la q, \start\ra)
  &\quad
  &\delta(\la q,N_i\ra, \lhs_{N_j}) = (\up, \la q,N_i\ra)\vee
      (\rightt, \la q, N_i\ra)\\
  &\delta(\la q, \start \ra, \lhs_{N_i}) = (\stay, \la q, N_i\ra) &
  &\delta(\la q,N_i\ra, \_) = (\up, \la q,N_i\ra)\vee
      \left(\vee_{(N_i,\, \alpha)\in P'}(\stay, \la q,
        \alpha\ra)\right)
\end{align*}
$\,\delta(\la q,N_i\ra, \lhs_{N_i}) = (\up, \la q,N_i\ra)\vee (\leftt,
  q)
  \vee (\rightt, \la q,
  N_i\ra)\vee\left(\vee_{(N_i, \,\alpha)\in P'}(\stay, \la
  q, \alpha\ra)\right)$

\subsubsection*{\bf Mode 2} Read productions. States drawn
from \label{sec:bf-mode-2-1} $\mTwo\coloneqq Q_1\cup \big(Q_1\times
\mathit{subterms}(P')\big)$, where
\begin{align*}
  \mathit{subterms}(P') = \mathsmaller{\bigcup}_{(N_i,\, \alpha)\,\in\, P'\,} \mathit{subterms}(\alpha).
\end{align*}

\begin{align*}
  &\delta(q, x) = \delta_1(q,x)
  &\qquad\quad
  &\delta(\la q, \rhs_{N_i}\ra, \_) = (\stay, \la q, N_i\ra)\vee \left(\vee_{(N_i,\, \alpha)\in P'}(\stay, \la q,
    \alpha\ra)\right)\\
  &\delta(q, \rhs_{N_i}) = (\stay, \la q, N_i\ra) &
  &\delta(\la q, f(t_1,\ldots, t_r)\ra, \_) = \adorn(t_1,\ldots,t_r,
  \delta_1(q, f))
\end{align*}

The notation $\adorn(t_1,\ldots, t_r, \varphi)$ represents a
transition formula obtained by replacing each atom of the form
$(i, q)$ in the Boolean formula $\varphi$ by the atom
$(\stay, \la q, t_i\ra)$\footnote{We assume directions in $\delta_1$
  are the numbers $1 (\leftt)$, $2 (\rightt)$, $3\ldots$, etc.}.

Any transition not described by the rules above has transition formula
$\fals$. The full set of states and the initial states for the
automaton are
\begin{align*}
  Q \coloneqq \mOne\cup\mTwo, \qquad Q^i=\{\la q, \start\ra \,:\, q\in Q^i_1\}\subseteq \mOne.
\end{align*}
\begin{lemma}
  $L(A_I) = \big\{t \in T_{\Gamma(\Delta, N)} \,:\,
  \solves(\extend(\dec(t),G'), I)\big\}$.
  \label{lemma:grammar-synthesis}
\end{lemma}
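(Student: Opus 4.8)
The plan is to prove the two inclusions of the claimed language equality simultaneously, by exhibiting a tight correspondence between accepting runs of $A_I$ on a grammar tree $t$ and pairs $(T_e,\rho)$, where $T_e$ is a parse tree of some term $e$ derivable in the DSL $\extend(\dec(t),G')$ and $\rho$ is an accepting run of $A_1$ on the yield $e$. First I would record the reduction that makes the statement tractable. Reading $\solves(\extend(\dec(t),G'),I)$ for a grammar as the adequacy condition (\Cref{adequacy}), namely ``$\exists e\in L(\extend(\dec(t),G'))$ with $\solves(e,I)$'', and using that $\solves(e,I)$ can hold only for $e\in L(G')$ (the predicate $\consistent$ is defined only on $L(G')$) together with $L(A_1)=\{e\in L(G')\,:\,\solves(e,I)\}$, the right-hand side is equivalent to $L(\extend(\dec(t),G'))\cap L(A_1)\neq\emptyset$. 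So it suffices to show $A_I$ accepts $t$ exactly when the DSL encoded by $t$ and the automaton $A_1$ share a common term.

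Next I would make the correspondence precise by reading off the roles of the two modes. A run of $A_I$ alternates between \textbf{mode 1} segments, which navigate the top spine of $t$ to locate a production for a nonterminal, and \textbf{mode 2} segments, which descend through the right-hand side of a chosen production while threading an $A_1$-state $q$ and applying $\delta_1$. The existential (disjunctive) choices --- which $\lhs_{N_i}$ node to stop at, and whether to take a $P$-production encoded in $t$ (move $\leftt$ into its stored right-hand side) or a base production $N_i\rightarrow\alpha\in P'$ (stay and switch to a state $\la q,\alpha\ra$) --- encode the choice of production at each nonterminal of $T_e$, while the universal (conjunctive) branching produced by $\delta_1$ and by $\adorn$ at a symbol $f\in\Delta$ encodes the requirement that $A_1$ accept all subterms of $e$. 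I would then prove both directions by induction on $T_e$ (equivalently, on the finite accepting run tree). Given $(T_e,\rho)$, I assemble an $A_I$-run by, at each nonterminal occurrence, running the mode-1 search up to the matching $\lhs$ node, committing to the production that $T_e$ uses, and then simulating $\rho$ over its right-hand side in mode 2; note that the mode-1 segments and $\stay$ steps do not advance the position in $e$, so the preserved state $q$ makes the $A_1$-run continuous across expansions. Conversely, from an accepting $A_I$-run, each production used contributes a mode-2 portion that descends through exactly one right-hand side, the threaded $A_1$-states assemble into a legal $A_1$-run on the yield, and the committed productions assemble into a valid DSL parse tree. The base cases are the leaves: a nullary $x\in\Delta$ where $\delta_1(q,x)$ must evaluate to $\tru$, matching an accepting leaf of $\rho$.

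The hard part will be establishing correctness of the mode-1 navigation, i.e.\ that the two-way search along the spine is both \emph{sound} and \emph{complete} with respect to the productions of $\dec(t)$. For soundness I must show that whenever a mode-1 segment commits, by moving $\leftt$ at an $\lhs_{N_i}$ node, the subtree it then reads in mode 2 is precisely the encoded right-hand side of a genuine $(N_i,\cdot)$-production of $\dec(t)$; this follows from the spine layout fixed by $\enc$ and $\dec$ --- each $\lhs_{N_i}$ node stores its right-hand side as its left child and the next production as its right child --- but must be checked against the definition of $\dec$. For completeness I must show that from any position reached at the end of a mode-2 segment --- deep inside a right-hand-side subtree, upon reading a nonterminal occurrence $\rhs_{N_j}$ --- the automaton can always climb back to the spine (via $\up$) and then reach \emph{any} $\lhs_{N_j}$ node on it (via further $\up$ and $\rightt$ moves), so that every production used by $T_e$ is findable. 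I would argue this by noting that going $\up$ from any right-hand-side subtree reaches the $\lhs$ node owning that subtree, and that each spine node can move to its predecessor (via $\up$) and its successor (via $\rightt$), so the set of reachable $\lhs$ nodes is the whole spine; the case of $\rhs_{N_i}$ occurring inside a base right-hand side is handled separately, since those right-hand sides are simulated in the state component via the states $\la q,\alpha\ra$ with $\alpha\in\mathit{subterms}(P')$ rather than read from $t$.

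Finally I would note two bookkeeping points that make the induction go through. First, the start of the run: the transition from $\la q,\start\ra$ at $\route$ down to the first $\lhs$ node sets the search at the nonterminal heading the spine, which is exactly the start symbol $S$ of $\dec(t)$ by definition of $\dec$, and initializes $q\in Q^i_1$, matching the root of $T_e$ and an initial $A_1$-state. Second, finiteness: acceptance requires a finite run tree, so infinite spine-wandering never yields an accepting run; only runs that eventually commit a production at each nonterminal and terminate at $\tru$-leaves count, exactly mirroring the finiteness of $T_e$ and $\rho$.
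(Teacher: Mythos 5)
Your proposal is correct and takes essentially the same route as the paper: the appendix proof of \Cref{lemma:grammar-synthesis} establishes the same two inclusions via the same run/parse correspondence --- extracting from an accepting run of $A_I$ a subtree that encodes an expression $e\in L(\extend(\dec(t),G'))$ on which $A_1$ accepts, and conversely assembling a finite run of $A_I$ from a parse tree of a solution $e$, with base-language productions simulated in the state component via $\la q,\alpha\ra$ for $\alpha\in\mathit{subterms}(P')$. Your explicit verification of the soundness and completeness of the \textbf{mode 1} spine navigation and the finiteness bookkeeping merely spells out details the paper compresses into ``follows easily by construction.''
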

\begin{proof}
  Follows easily by construction. See
  \Cref{sec:proof-grammar-synthesis}.
\end{proof}

\subsection{Decidability}
\label{sec:decidability}

We use the construction of $A_I$ to prove the following theorem:

\begin{theorem}
  The adequate DSL synthesis problem is decidable for any language
  whose semantics over fixed structures can be evaluated by tree
  automata (\Cref{defn:example-automata}). Furthermore, the set of
  solutions corresponds to a regular set of trees.
\end{theorem}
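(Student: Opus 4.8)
The plan is to reduce the adequate DSL synthesis problem to an emptiness test for a single tree automaton over the grammar alphabet $\Gamma(\Delta,N)$, built by intersecting the per-instance automata of \Cref{lemma:grammar-synthesis} with an automaton enforcing the meta-grammar constraint. Once such an automaton is in hand, both decidability of the synthesis question and regularity of the solution set follow from standard properties of the tree-automaton model.

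First I would apply \Cref{lemma:grammar-synthesis} to each instance $I_p$, for $1\le p\le l$, obtaining a two-way alternating tree automaton $A_{I_p}$ over $\Gamma(\Delta,N)$ with
\[
  L(A_{I_p}) = \big\{t \in T_{\Gamma(\Delta, N)} \,:\, \solves(\extend(\dec(t),G'), I_p)\big\},
\]
i.e. $A_{I_p}$ accepts exactly the grammar trees whose decoded, $G'$-extended DSL contains an expression solving $I_p$. Second, since $\metagram$ is a regular tree grammar over $\Gamma(\Delta,N)$, I would use the construction recalled in \Cref{sec:grammars-alphabets} to compute a nondeterministic top-down tree automaton $A(\metagram)$ with $L(A(\metagram)) = L(\metagram)$, accepting precisely those grammar trees $t=\enc(G)$ for which the syntactic constraint $\enc(G)\in L(\metagram)$ holds. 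I would then form the product
\[
  A \coloneqq A(\metagram) \cap \mathop{\mathlarger{\mathlarger{\cap}}}\limits_{1\le p\le l} A_{I_p}
\]
via the standard intersection construction for tree automata, yielding again a two-way alternating automaton over $\Gamma(\Delta,N)$.

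By construction, $t\in L(A)$ holds exactly when $t\in L(\metagram)$ and, for every instance $I_p$, the DSL $\extend(\dec(t),G')$ contains a solving expression---that is, exactly when $\dec(t)$ meets both output conditions of the problem (constraint $\metagram$ together with adequacy in the sense of \Cref{adequacy}). Hence $L(A)$ is precisely the set of solution grammars. Decidability then reduces to deciding emptiness of $L(A)$: emptiness of two-way alternating tree automata over finite trees is decidable, so I can determine whether any solution DSL exists, and when $L(A)\neq\emptyset$ I extract an accepted tree $t$ and output $\dec(t)$ as a concrete witness grammar. For the ``furthermore'' claim I would invoke that two-way alternating tree automata accept only regular tree languages; converting $A$ to an equivalent ordinary nondeterministic tree automaton exhibits the full set of solutions as a regular set of trees.

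The substantive difficulty has already been discharged in \Cref{lemma:grammar-synthesis}, whose automaton $A_I$ must walk up and down an encoded grammar to guess an arbitrarily large parse tree for a solving expression while simulating $A_1$ (the intersection $\bigcap_{M\in X\cup Y} A_M$ of the example automata) along each branch. For the theorem itself, the only points requiring care are the closure and decidability properties of the two-way alternating model: that intersection is effective and preserves the intended language semantics, that emptiness is decidable, and that regularity is preserved under translation to ordinary tree automata. These are standard for the presentation of tree automata used here, so I expect no genuine obstacle beyond citing them correctly. The one piece of bookkeeping worth pinning down is that $A(\metagram)$ (or, failing that, the lemma's automata) should reject any tree that is not a well-formed grammar encoding, so that $\dec$ is only ever applied to a meaningful input.
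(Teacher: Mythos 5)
Your proposal matches the paper's proof essentially step for step: it invokes \Cref{lemma:grammar-synthesis} to obtain the per-instance automata $A_{I_p}$, intersects them with $A(\metagram)$, characterizes $L(A)$ as exactly the solution set, and concludes via emptiness testing and conversion of the two-way alternating automaton to an ordinary nondeterministic one (the paper merely performs the $\mathsf{convert}$ step before intersecting with $A(\metagram)$ rather than after, an immaterial reordering). Your closing remark about ensuring $\dec$ is applied only to well-formed grammar encodings is sound bookkeeping that the paper leaves implicit, but it does not change the argument.
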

\begin{proof}
  Given meta-grammar $\metagram$ and instances $I_1, \ldots, I_l$, we
  construct the product
  \begin{align*}
    A\coloneqq A(\metagram)\cap
    \mathsf{convert}\big(\mathop{\mathlarger{\mathlarger{\cap}}}_{p\in
    [l]}A_{I_p}\big)
    ,
  \end{align*}
  where $\mathsf{convert}(B)$ is a procedure for converting a two-way
  alternating tree automaton $B$ to a top-down non-deterministic
  automaton in time $\exp(|B|)$, as explained in~\cite{vardi-two-way,
    cachat-two-way}. By construction and
  \Cref{lemma:grammar-synthesis} we have
  \[
    L(A)=\big\{t\in L(\metagram) \,:\,
      \mathsmaller{\bigwedge}_{p\in[l]}\solves(\extend(\dec(t), G'), I_p)\big\}.
    \]
    Existence of solutions is decided by an automaton emptiness
    procedure running in time $\poly(|A|)$, and solutions can be
    synthesized by outputting $\dec(t)$ for any $t\in L(A)$ in the
    same time.
\label{adequate-final-steps}
\end{proof}

\begin{corollary}
  Adequate DSL synthesis is decidable in time
  $\emph{\poly}(|\metagram|)\cdot \exp(l\cdot m))$, where $l$ is the
  number of instances and $m$ is the maximum size over all instance
  automata $A_{I}$.
\end{corollary}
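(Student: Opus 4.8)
The plan is to carry out a size analysis of each automaton constructed in the proof of the preceding theorem and then bound the cost of the final emptiness check. First I would fix an instance $I_p$ and recall from the construction of $A_{I_p}$ that its state set is $\mOne \cup \mTwo$, whose cardinality is polynomial in $|Q_1|$ (the state count of the combined example automaton $A_1$ for that instance) together with the fixed parameters $|N|$ and $|\mathit{subterms}(P')|$, which contribute only constant factors. Since $m$ is defined as the maximum of $|A_{I_p}|$ over all $p\in[l]$, each two-way alternating automaton $A_{I_p}$ has size at most $m$ by definition.

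The crucial step is bounding the intersection $\cap_{p\in[l]} A_{I_p}$. Because each $A_{I_p}$ is a two-way alternating tree automaton, the standard intersection construction referenced earlier forms the disjoint union of the component state sets and combines their initial acceptance conditions by a conjunction at the root, leaving every individual transition formula unchanged. Hence the intersection has size only $O\big(\sum_{p\in[l]} |A_{I_p}|\big) = O(l\cdot m)$, which is \emph{additive} rather than multiplicative in the number of instances. I expect this to be the main point requiring care: it is precisely the additivity of alternating-automaton intersection that keeps the eventual exponent linear in $l\cdot m$, whereas performing the intersection by a nondeterministic product here would inflate the bound to something like $m^l$.

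Next I would apply the $\mathsf{convert}$ procedure of~\cite{vardi-two-way, cachat-two-way}, which transforms a two-way alternating automaton $B$ into an equivalent top-down nondeterministic automaton in time and size $\exp(|B|)$. Taking $B = \cap_{p\in[l]} A_{I_p}$ with $|B| = O(l\cdot m)$ yields a top-down nondeterministic automaton of size $\exp(l\cdot m)$. Separately, $A(\metagram)$ is obtained from the regular tree grammar $\metagram$ in polynomial time, as noted in the preliminaries, so $|A(\metagram)| = \poly(|\metagram|)$.

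Finally I would assemble $A = A(\metagram)\cap \mathsf{convert}\big(\cap_{p\in[l]} A_{I_p}\big)$ using the product construction for two top-down nondeterministic tree automata, whose size is the product of the factors, namely $\poly(|\metagram|)\cdot \exp(l\cdot m)$. Running the emptiness procedure in time $\poly(|A|)$ then costs $\poly\big(\poly(|\metagram|)\cdot \exp(l\cdot m)\big)$; since raising a quantity of the form $\poly(|\metagram|)\cdot \exp(l\cdot m)$ to a constant power merely rescales the polynomial factor and the constant in the exponent, this collapses back to $\poly(|\metagram|)\cdot \exp(l\cdot m)$, which is the stated bound.
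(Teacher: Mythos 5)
Your proposal is correct and takes essentially the same route as the paper, which proves the corollary implicitly through the theorem's construction $A = A(\metagram)\cap \mathsf{convert}\big(\bigcap_{p\in[l]} A_{I_p}\big)$: additive (conjunction-based) intersection of the two-way alternating instance automata giving size $O(l\cdot m)$, the $\exp(|B|)$ conversion of~\cite{vardi-two-way,cachat-two-way}, a polynomial-size product with $A(\metagram)$, and a $\poly(|A|)$ emptiness check, exactly as you lay out. One immaterial quibble: your aside that a product-style intersection ``would inflate the bound to something like $m^l$'' is slightly off, since $m^l = \exp(l\log m)$ is actually \emph{smaller} than $\exp(l\cdot m)$ (the honest alternative---converting each $A_{I_p}$ separately to size $\exp(m)$ and then taking a nondeterministic product---also yields $\exp(l\cdot m)$), but this side remark plays no role in your argument and the stated bound is established correctly.
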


\textbf{Remark.} The construction of $A_I$, specifically the
simulation of $A_1$ on a grammar tree, is independent of the learning
problem, and it applies essentially unchanged as a proof of the
following.

\begin{lemma}
  Given a tree automaton $A$, there is a tree automaton $B_A$ that
  accepts an encoding of a grammar $G$ if and only if
  $L(A)\cap L(G)\neq\emptyset$.
\end{lemma}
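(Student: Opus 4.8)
The plan is to reuse the construction of $A_I$ almost verbatim, with the given automaton $A$ playing the role previously played by $A_1 = \cap_{M \in X\cup Y} A_M$. The point is that $A_I$ used $A_1$ only as a black-box tree automaton over $\Delta$ whose language is the set of expressions to be witnessed inside the input grammar; none of the navigation or bookkeeping depended on $A_1$ arising from example automata, nor on the specific instance $I$. So I would define $B_A = (Q, \Gamma(\Delta,N), Q^i, \delta)$ exactly as $A_I = (Q, \Gamma(\Delta,N), Q^i, \delta)$ was defined, substituting $A$ for $A_1$ throughout, and taking the base grammar to be empty (i.e. $P' = \emptyset$) so that $B_A$ reads every production off the grammar tree rather than inlining base productions into its states. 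This is precisely the ``essentially unchanged'' observation of the remark.

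Next I would describe the two modes, which behave as before. In \textbf{mode 1}, $B_A$ walks up and right along the top spine of $\enc(G)$ searching for a production whose left-hand side matches the nonterminal it currently wishes to expand; on reaching a matching $\lhs_{N_i}$ node it moves $\leftt$ into the encoded right-hand side and switches to \textbf{mode 2}. In mode 2 it simulates $A$ on that right-hand subtree: at a symbol of $\Delta$ it applies $A$'s transition directly, branching into the children exactly as $A$ would, and at an occurrence $\rhs_{N_j}$ of a nonterminal it suspends the simulation in its current $A$-state and hands control back to mode 1 to locate and expand a production for $N_j$, resuming $A$ from that state inside the chosen right-hand side. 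The choice of which production to apply at each nonterminal is existential, while the branching inherited from $A$ is universal.

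Correctness is a biconditional, which I would prove by induction on parse trees of $G$. The key invariant is that an accepting run of $B_A$ beginning in a state $\la q, \start\ra$ corresponds exactly to a parse tree of some $e \in L(G)$ together with an accepting run of $A$ on $e$ beginning in state $q$: the existential production-choices made by $B_A$ assemble the parse tree of $e$, while the embedded simulation of $A$, threaded across nonterminal occurrences via the mode-1/mode-2 handoff, reconstructs an accepting $A$-run on $e$, and conversely any such $e$ and pair of runs can be stitched into a run of $B_A$. Ranging over the initial states $q$ of $A$ then yields that $B_A$ accepts $\enc(G)$ iff some $e \in L(G)$ belongs to $L(A)$, i.e. iff $L(A) \cap L(G) \neq \emptyset$. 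This is the same statement, specialized to an arbitrary $A$ and empty base grammar, already captured by \Cref{lemma:grammar-synthesis}.

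The hard part will be making the run-correspondence precise for a two-way alternating automaton that navigates a tree ($\enc(G)$) quite different from the one $A$ was built to read (a term $e$). Crucially, $B_A$ never materializes $e$; it interleaves $A$'s transitions with navigation moves ($\up$, $\leftt$, $\rightt$, $\stay$) that only serve to locate the next production, and I must argue that these navigation segments neither create nor destroy acceptance --- that the handoff at each $\rhs_{N_j}$ faithfully carries the suspended $A$-state across the spine into a freshly chosen production and back. Since exactly this invariant is established for $A_I$ in \Cref{sec:proof-grammar-synthesis}, I expect it to transfer after only the cosmetic change of deleting the base-grammar disjuncts; stating the invariant cleanly for an arbitrary $A$, rather than for $A_1$, is where the care is required.
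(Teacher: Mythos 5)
Your proposal is correct and coincides with the paper's own (very brief) proof, which says exactly this: the construction of $A_I$, with its mode-1/mode-2 navigation and the run-correspondence argument of \Cref{sec:proof-grammar-synthesis}, is independent of the learning problem and applies once the base-grammar handling is dropped --- your substitution of $A$ for $A_1$ with $P'=\emptyset$. The only difference is that you spell out the correctness invariant that the paper leaves implicit, which is a faithful elaboration rather than a different route.
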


\begin{proof}
  Follows the same logic as the construction of $A_I$ but leaves out
  handling a base grammar.
\end{proof}

\section{DSL Synthesis}
\label{sec:dsl-synthesis}

In this section we introduce the DSL synthesis problem for grammars
and prove decidability for ordering based on expression depth. This
problem asks for a DSL which orders concepts in such a way that
expressions solving learning instances are relatively more succinct
than expressions which fail to generalize on testing sets.

\renewcommand{\thempfootnote}{\fnsymbol{footnote}}
\begin{tcolorbox}[
  colback=white,
  colframe=gray,
  boxsep=7pt,
                  left=0pt,
                  right=0pt,
                  top=3pt,
                  arc=1pt,
                  boxrule=1pt,
                  toprule=1pt
                  ]
\begin{problemDef}[DSL synthesis]
\,\\
  \indent \textbf{Parameters:}
  \begin{itemize}
  \item Finite set of nonterminals $N$
  \item Base language with $G'=(S',N',\Delta,P')$ and $N'\subseteq N$
  \item Expression ordering $\le$
  \end{itemize}

  \textbf{Input:}
  \begin{itemize}
  \item Instances $I_1, \ldots, I_l$
  \item Meta-grammar $\metagram$ over $\Gamma(\Delta,N)$
  \end{itemize}

  \textbf{Output:} A grammar $G=(S,N,\Delta,P)$ such that:
  \begin{enumerate}
  \item $\extend(G,G')$ is adequate (\Cref{adequacy}) and generalizing
    (\Cref{generalization}) and
  \item $\enc(G)\in L(\metagram)$, i.e. constraints $\metagram$ are satisfied
  \end{enumerate}
\end{problemDef}
\end{tcolorbox}

Solutions to DSL synthesis are grammars that make generalizing
expressions appear early in the order and non-generalizing expressions
appear later in the order.

We now prove decidability of DSL synthesis over the class of base
languages described in \Cref{sec:fac} for parse tree depth expression
ordering (\Cref{defn:depth-concept-ordering}). The proof has similar
structure to that of \Cref{sec:adequate}, but requires a new idea to
construct an automaton that can evaluate arbitrarily large grammars
and reason about their induced concept orderings. We introduce the
idea with some intuition about \emph{equivalence of grammars}.

\subsection{Equivalence of Grammars}
\label{sec:compare-grammars}

If there is to exist an automaton that accepts exactly the grammars
solving a \problemName~problem, then it must be possible to partition
the space of grammars into finitely-many equivalence classes based on
their behavior over an instance $I$. For \emph{adequate} DSL synthesis
the ``behavior'' of interest was whether or not a grammar expresses at
least one solution for each learning problem.

What would make two distinct grammars $G_1$ and $G_2$ equivalent with
respect to an instance $I=(X,Y)$ under the stronger requirement of
generalization (\Cref{generalization})? Whether $G_1$ and $G_2$ are
equivalent on $I$ depends on the ease with which they express
different concepts relevant to the examples in $X$ and $Y$. Consider
again the example from \Cref{fig:learning-instance} with $I=(P,N)$.
Suppose $G_1=(S, \{S\}, \Delta, P_1)$ and $G_2=(S,\{S\},\Delta,P_2)$,
with $\Delta=\{\wedge^2,\vee^2,\neg^1\}\cup \{r^0\,:\, r\in R\}$ and
$R=\{r_1,r_2,r_3,r_4,r_5\}$ being a finite set of rectangles in the
plane. Suppose the rules $P_1$ and $P_2$ are
\[
  P_1\,:\,\,S\rightarrow S\wedge S \vsep r\in R\qquad\qquad P_2\,:\,\,
  S\rightarrow \neg(\neg S \vee \neg S) \vsep r\in R.
\]
One way to measure the ease with which $G_1$ or $G_2$ expresses
concepts is to consider expressible concepts indexed by the depths of
the smallest expressions needed to express them. In this case we are
interested in which of the examples $P\cup N$ are included in the sets
defined by Boolean combinations of rectangles in the plane. To each
expression $e$, we can associate a vector $v_e$ of Boolean values, one
per example, which exactly describes how the expression behaves on the
instance $I=(P,N)$. For the expression $r_1$ we have
$v_{r_1}=(1,1,0,1,1,1,0,0,0)$ because points $p_1,p_2$ and
$p_4,p_5,p_6$ fall within $\llbracket r_1\rrbracket$. For any such
``behavioral vector'', we want to know the smallest $d\in\Nat$ for
which it is expressible using an expression of depth $d$ but no
shallower. This depth depends on the grammar. We can encode such
information in grammar-specific tables whose entries are subsets of
behavioral vectors and whose columns and rows are indexed by
nonterminals and increasing integers, respectively, as depicted in
\Cref{fig:recursion-table}.

\begin{figure}[h]
  \centering
  \begin{minipage}[t]{0.2\linewidth}
    \centering
    \begin{tabular}[c]{l}
      $G_1:\,S \rightarrow S\wedge S\vsep r\in R$
    \end{tabular}\vspace{0.1in}
    \begin{tabular}[c]{c|c}
      $d$ & $S$  \\\hline
      1 & $\{v_{r_1},\ldots\}$  \\\hline
      2 & $\{{\color{black}v_{r_1\wedge r_2}},\ldots\}$  \\\hline
      \scalebox{0.7}{$\vdots$} & \scalebox{0.7}{$\vdots$} \\
    \end{tabular}
  \end{minipage}
  \begin{minipage}[t]{0.05\linewidth}
    \centering\vspace{-0.1in}
    \begin{tabular}[c]{c}
      \\\\\\\\
      $\quad\equiv$
    \end{tabular}
  \end{minipage}
  \hfill
  \begin{minipage}[t]{0.28\linewidth}
    \centering
    \begin{tabular}[c]{l}
      $G_2:\,S \rightarrow \neg(\neg S\vee \neg S)\vsep r\in R$
    \end{tabular}\vspace{0.1in}
    \begin{tabular}[c]{c|c}
      $d$ & $S$  \\\hline
      1 & $\{v_{r_1},\ldots\}$  \\\hline
      2 & $\{{\color{black}v_{\neg(\neg r_1\vee \neg r_2)}},\ldots\}$  \\\hline
      \scalebox{0.7}{$\vdots$} & \scalebox{0.7}{$\vdots$} \\
    \end{tabular}
  \end{minipage}
  \hfill
  \begin{minipage}[t]{0.04\linewidth}
    \centering\vspace{-0.1in}
    \begin{tabular}[c]{c}
      \\\\\\\\
      $\nequiv$
    \end{tabular}
  \end{minipage}
  \begin{minipage}[t]{0.35\linewidth}
    \centering
    \begin{tabular}[c]{l}
      $G_3:\, S \rightarrow S\wedge S\vsep S\vee S\vsep r\in R$
    \end{tabular}\vspace{0.1in}
    \begin{tabular}[c]{c|c}
      $d$ & $S$  \\\hline
      1 & $\{v_{r_1},\ldots\}$  \\\hline
      2 & $\{{\color{red}v_{r_1\vee r_2}},\ldots\}$  \\\hline
      3 & $\{{\color{blue}v_{r_3\wedge (r_4\wedge r_5)}},\ldots\}$  \\\hline
      \scalebox{0.7}{$\vdots$} & \scalebox{0.7}{$\vdots$} \\
    \end{tabular}
  \end{minipage}%
  \caption{Tables for different grammars over the instance $I=(P,N)$
    from \Cref{fig:learning-instance} which capture expressive power
    relative to $I$, stratified by parse tree depth $d$. Tables for
    $G_1$ and $G_2$ are identical. The table for $G_3$ is distinct,
    and registers a non-generalizing expression
    {\color{red}$r_1\vee r_2$} before the first generalizing one
    {\color{blue} $r_3\wedge r_4\wedge r_5$}. }
\label{fig:recursion-table}
\end{figure}

To understand the tables in \Cref{fig:recursion-table}, consider the
simultaneous least fixpoint that defines, for instance, $L(G_1)$ as a
set of $\Delta$-terms. Though it is an infinite set, if we consider
terms modulo equivalence relative to examples in $(P,N)$, then there
are finitely-many equivalence classes— with the $9$ points from
\Cref{fig:learning-instance} there are $2^9$ classes— and the fixpoint
computation needs no more than that number of steps to
terminate. Beyond some depth $k\le 2^9$, expressions of $G_1$ or $G_2$
repeat themselves with regard to $I=(P,N)$. The tables in
\Cref{fig:recursion-table} display classes at their earliest
achievable depth— the entry at row $i$ and column $j$ contains the set
of behavioral vectors achieved first at depth $i$ for nonterminal $j$
(in this case there is only a single nonterminal $S$).

It is easy to see that the tables for $G_1$ and $G_2$ are in fact
identical, since $\varphi\wedge\varphi'$ is logically equivalent to
$\neg (\neg \varphi\vee \neg\varphi')$ for any formulas $\varphi$ and
$\varphi'$. In general, whether or not two syntactically distinct
grammars correspond to identical tables depends on the instance
$I=(X,Y)$, though in the specific case of $G_1$ and $G_2$ they have
identical tables for any $I$ whatsover.

If the ``behavioral vectors'' are drawn from a finite set for any
fixed instance $I$, then we can argue that the rows of these tables
must repeat after a certain finite depth for any learning instance. We
can then ask which among the finitely-many bounded-depth tables over
this domain a given grammar corresponds to, and this gives us a finite
index equivalence relation for grammars. We will design automata which
read grammars (presented as trees) and check which class a grammar
corresponds to by iteratively computing its table row by row. Whether
a grammar satisfies the generalization constraint
(\Cref{generalization}) for a specific instance $I$ can be determined
by checking whether a non-generalizing expression is encountered at an
earlier row in the table than any generalizing one. Consider the table
for $G_3$ in \Cref{fig:recursion-table}, which is distinct from the
tables for $G_1$ and $G_2$. It includes
$v_{r_1\vee r_2}=(1,1,1,1,1,1,1,1,1)$ at depth $2$, which does not
appear in any row of the other tables, and it includes
$v_{r_3\wedge (r_4\wedge r_5)}$, a vector for a generalizing
expression, only at depth $3$. We would want to reject $G_3$ for this
reason.

These tables give us a notion of equivalence that captures whether two
grammars have the same expressive power, parameterized by \emph{parse
  tree depth}, over fixed structures. We use the information captured
by such tables, albeit with a more complex domain $D$, in our
automaton construction for the proof of decidability.

\textbf{Simulating a dynamic program using a tree automaton.}
Computing such a table for a given grammar $G$ can be accomplished
with a \emph{dynamic program} that computes the rows starting from
index $0$ up to a bound which depends on each instance $(X,Y)$. We in
fact use a slight modification of the tables in
\Cref{fig:recursion-table} which take a union of the entries from
previous rows in later rows. We refer to these as \emph{recursion
  tables}. Given $G$, a dynamic program can compute the entry for a
nonterminal $N$ in the recursion table for a given row by taking the
union of all values achievable by $G$ using productions for $N$, with
nonterminals in the right-hand sides of the productions interpreted
using values achieved in earlier rows.

For instance, if the program is running for grammar $G_1$ from
\Cref{fig:recursion-table} and computing the entry of the table $T$ at
row $2$ and column $S$, it computes
\[
  T[2,S]\coloneqq T[1,S]\cup \left\{ v\wedge v'\,:\, v,v'\in T[1, S]\right\}\cup
  \{(\mathds{1}_{\llbracket
    r\rrbracket}(p_1),\ldots,\mathds{1}_{\llbracket r\rrbracket}(p_9)) \,:\,
  r\in R\},
\]
where $\wedge$ above indicates a component-wise conjunction on
vectors.

Notice, however, that for the DSL synthesis problem we do not have a
grammar $G$ over which to run this dynamic program— the grammar is
what we want to synthesize. What we will in fact do is simulate the
dynamic program over a grammar encoded as input to a tree
automaton. In order to simulate this algorithm accurately, the
automaton will \emph{nondeterministically guess} the values for each
row and \emph{verify} the guesses by walking up and down the input
grammar and simulating the \emph{example automata} which can be used
to determine existence of expressions in the grammar which correspond
to specific behavioral vectors for the learning examples..

We now describe the details of this construction further.

\subsection{Automaton Construction}
\label{sec:construction}
The main component of our construction is an automaton $A_I$ accepting
grammars that, when combined with the base grammar, solve an instance
$I=(X,Y)$. Our final automaton $A$ will involve a product over the
instance automata $A_I$.

Similar to the construction in~\Cref{sec:adequate}, we assume
existence of \emph{example automata}
(\Cref{defn:example-automata}). For each example $M\in X\cup Y$, we
assume non-deterministic top-down tree automata $A_M$ and
$A_{\neg{M}}$ over alphabet $\Delta$ with languages
\[
  L(A_M) =\left\{e\in L(G')\,:\, \mathsf{consistent}(e,M)\right\}
  \quad\text{and}\quad L(A_{\neg{M}}) =\left\{e\in L(G')\,:\,
    \neg\mathsf{consistent}(e,M)\right\}.
\]

We can now define \emph{instance automata} $A^I_1$ and $A^I_{2}$:
\begin{align*}
  A^I_1\coloneqq \mathop{\mathlarger{\mathlarger{\cap}}}\limits_{M\in X\cup Y} A_M \qquad\qquad
  A^I_2\coloneqq \left(\mathop{\mathlarger{\mathlarger{\cap}}}\limits_{M\in X} A_M\right)
  \cap \left(\mathop{\mathlarger{\mathlarger{\cup}}}\limits_{M\in Y}
  A_{\neg M}\right).
\end{align*}

We omit the superscript and write $A_1$ or $A_2$ when the instance $I$
is clear. The automaton $A_1$ accepts all generalizing expressions and
the automaton $A_2$ accepts all non-generalizing expressions:
\begin{align*}
  L(A_1) =\left\{e\in L(G')\,:\, \solves(e,I)\right\}
  \qquad\quad  L(A_2) =\left\{e\in L(G')\,:\, \solves(e, X)\wedge\neg\solves(e,Y)\right\}.
\end{align*}

Our goal is to keep track of how these instance automata evaluate over
the expressions admitted by a grammar $G$, in order of increasing
parse tree depths.

Suppose $A_{1}=(Q_{1},\Delta,Q^i_1,\delta_1)$ and
$A_{2}=(Q_{2},\Delta,Q^i_2,\delta_{2})$. Note that, as constructed,
$A_1$ and $A_2$ are non-deterministic top-down automata. We will
consider tables similar to those described in
\Cref{sec:compare-grammars} whose entries range over the powerset
$\mathcal{P}(Q_1\sqcup Q_2)$. On an input grammar tree, our automaton
$A_I$ will iteratively construct the rows of its corresponding
\emph{recursion table} for $I$.

\medskip

\textbf{Recursion Tables.} Let us fix a grammar $G=(S,N,\Delta,P)$. To
define its recursion table $T(G)$, we order its nonterminals as
$N_1,N_2,\ldots,N_{k}$, with $N_1=S$. Now let
$H_i : \mathcal{P}(Q_1\sqcup Q_2)^k\rightarrow \mathcal{P}(Q_1\sqcup
Q_2)$ be the operator defined by the equation
\[
  H_i(R)\, = \bigcup\limits_{(N_i,\,t)\in \,P}\llbracket t\rrbracket^{A_1}_{R}\sqcup
  \llbracket t\rrbracket^{A_2}_{R}, \qquad R\in\mathcal{P}(Q_1\sqcup Q_2)^k.
\]
The notation $\llbracket t\rrbracket^{A_j}_{R}$, for $j\in \{1,2\}$,
denotes the subset of $Q_j$ reachable\footnote{The subset of states
  starting from which the automaton has an accepting run on $t$.} by
running the automaton
\[
  A'_j=\big(Q_j,\Delta\sqcup \big\{\rhs_{N_s} : N_s\in N\big\}, Q^i_j, \delta'_j\big)
\]
on term $t$, where $\delta_j'(q,\rhs_{N_s})=\tru$ for each
$q\in R_s\cap Q_j$ and nonterminal $N_s$ and
$\delta_j'(q,x)=\delta_j(q,x)$ for all other $q\in Q_j, x\in
\Delta$. The intuition is that $H_i$ computes the states of the
instance automata which can be reached by some expression generated by
$N_i$, given an assumption about what states have already been
reached.

The operator
$H : \mathcal{P}(Q_1\sqcup Q_2)^k\rightarrow \mathcal{P}(Q_1\sqcup
Q_2)^k$ defined by $H(R) = (H_1(R),\ldots, H_k(R))$ is monotone with
respect to component-wise inclusion of sets, and thus the following
sequence converges to a fixpoint after $n \le k(|Q_1| + |Q_2|)$ steps:
\[
  (\emptyset,\ldots,\emptyset)\eqqcolon Z_0,\, H(Z_0),\,
  H^2(Z_0),\,\ldots,\, H^n(Z_0)=H^{n+1}(Z_0).
\]

We define the recursion table $T(G)$ as follows. There are $k=|N|$
columns and $n^*+1$ rows, where $n^*\coloneqq k(|Q_1| + |Q_2|)$. The
entry at row $i$, column $j$, denoted $T(G)[i,j]$\footnote{For
  convenience, we index rows starting from zero and columns starting
  from one.}, consists of the subset of values from $Q_1\sqcup Q_2$
that are first achieved at parse tree depth $i$ for nonterminal
$N_j$. For $1\le j \le k$:
\[
  T(G)[0,j] \coloneqq \emptyset \quad \text{and} \quad
  T(G)[i,j] \coloneqq H^i(Z_0)_j\setminus H^{i-1}(Z_0)_j, \quad
  \text{for }\, 0 < i
  \le n^*.
\]
By construction, for the depth concept ordering
(\Cref{defn:depth-concept-ordering}) given by
$\depth_G(e)\le \depth_G(e')$, a grammar $G$ solves $I=(X,Y)$ if and
only if \Cref{eqn:acceptable} holds:
\begin{align}
  \text{\emph{Exists a row $i$ such that}}\,\,F_1\cap
    T(G)[i,1]\neq\emptyset\,\,
  \text{\emph{and for all $0 \le j < i,$}}\,\,F_2\cap
  T(G)[j,1]=\emptyset
  \label{eqn:acceptable}
\end{align}
That is, the grammar $G$ solves $I$ if and only if there is some depth
$i$ at which it generates a solution for $X\cup Y$ and \emph{all}
non-generalizing expressions cannot be generated in depth less than
$i$. Let us say $T(G)$ is \emph{acceptable} if this holds.

We define a tree automaton $A_I$ whose language is
\[
  L(A_I) = \{t \in T_{\Gamma(\Delta, N)} \,:\, \solves(\extend(\dec(t),G'), I)\}.
\]
On input $t\in T_{\Gamma(\Delta,N)}$, the automaton iteratively
guesses the row-by-row construction of the recursion table
$T_t\coloneqq T(\extend(\dec(t),G'))$ starting from row $0$ and
working downward to row $n^*$. At each increasing depth $d$, it keeps
track of which domain values have not yet been achieved and guesses
which new ones can be achieved in depth $d$ using previously computed
values at depths less than $d$. It verifies the guesses by simulating
instance automata $A_i$ on the right-hand sides of grammar rules for
each nonterminal. As it constructs the recursion table, it
simultaneously checks that the table is acceptable according to
\Cref{eqn:acceptable} and accepts or rejects accordingly.

The details of the construction can be found in
\Cref{sec:appendix-section-6}.
We note that the number of states for $A_I$ is exponential in the
sizes of the instance automata, as the entries of the recursion table
range over subsets of their states.

\subsection{Decidability of DSL Synthesis}
\label{sec:decidability}

\begin{theorem}
  DSL synthesis is decidable with depth concept ordering
  (\Cref{defn:depth-concept-ordering}) for any language whose
  semantics on any fixed structure can be evaluated by tree automata
  (\Cref{defn:example-automata}). Furthermore, the set of solutions
  corresponds to a regular set of trees.\label{main-result}
\end{theorem}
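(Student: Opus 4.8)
The plan is to follow the template of the adequate-synthesis proof (\Cref{adequate-final-steps}), substituting the richer instance automaton $A_I$ of \Cref{sec:construction} for the simpler one used there. The single new ingredient I would need is a correctness lemma for $A_I$, establishing that
\[
  L(A_I) = \big\{t \in T_{\Gamma(\Delta, N)} \,:\, \solves(\extend(\dec(t), G'), I)\big\},
\]
where $\solves(\extend(\dec(t), G'), I)$ abbreviates the condition that the recursion table $T(\extend(\dec(t), G'))$ is acceptable in the sense of \Cref{eqn:acceptable}. By the characterization established just before \Cref{eqn:acceptable}, this is exactly the statement that $\extend(\dec(t), G')$ is adequate (\Cref{adequacy}) and generalizing (\Cref{generalization}) for the instance $I$ under the depth concept ordering: there is a depth $i$ at which $A_1$'s acceptance condition $F_1$ meets the start-nonterminal entry of the table (a derivable solution for $X \cup Y$) and no shallower depth meets $A_2$'s condition $F_2$ (no non-generalizing expression appears earlier).

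Second, I would assemble the final automaton exactly as in the adequate case. Given the meta-grammar $\metagram$ and instances $I_1, \ldots, I_l$, form
\[
  A \coloneqq A(\metagram) \cap \mathsf{convert}\big(\mathop{\mathlarger{\mathlarger{\cap}}}_{p \in [l]} A_{I_p}\big),
\]
where $\mathsf{convert}$ turns the two-way alternating product into an equivalent top-down nondeterministic automaton at exponential cost, as in \Cref{adequate-final-steps}. Two-way alternating tree automata are closed under intersection, so $\mathop{\mathlarger{\mathlarger{\cap}}}_{p} A_{I_p}$ is well defined, and combining this with the closure properties of tree automata gives
\[
  L(A) = \big\{t \in L(\metagram) \,:\, \mathsmaller{\bigwedge}_{p \in [l]} \solves(\extend(\dec(t), G'), I_p)\big\}.
\]
This is precisely the set of grammars $G = \dec(t)$ satisfying the meta-grammar constraint that are simultaneously adequate and generalizing for every input instance --- that is, the solution set of the DSL synthesis problem.

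Decidability and the regularity claim then come for free. Existence of a solution is decided by testing $L(A)$ for emptiness, in time polynomial in $|A|$, and if $L(A) \neq \emptyset$ a witness tree $t$ yields the synthesized DSL $\dec(t)$. Because $A$ is itself a tree automaton whose language coincides with the solution set, that set is by definition a regular set of trees, establishing the ``furthermore'' clause. As in the adequate case, one can read off a complexity bound: $|A|$ is polynomial in $|\metagram|$ and exponential in $l$ and in the sizes of the instance automata, the latter already being exponential in the example-automaton sizes because recursion-table entries range over subsets of $Q_1 \sqcup Q_2$.

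The hard part is not this assembly but the correctness lemma for $A_I$. The crux is that $A_I$ must faithfully simulate, over an a priori unknown grammar presented as a tree, the dynamic program that computes the recursion table $T(\extend(\dec(t), G'))$ row by row. Since it cannot run this program directly, $A_I$ nondeterministically guesses each row --- a $k$-tuple of subsets of $Q_1 \sqcup Q_2$ --- and verifies the guess by walking up and down the input tree, running the instance automata $A_1$ and $A_2$ on the right-hand sides of productions while discharging occurrences of nonterminals using the values asserted in strictly earlier rows. I expect two points to demand the most care: first, that the fixpoint of the monotone operator $H$ is attained within the fixed bound $n^* = k(|Q_1| + |Q_2|)$, so that only finitely many rows must ever be guessed and the table is genuinely finite for any grammar however large; and second, that the acceptability condition of \Cref{eqn:acceptable}, which universally forbids a non-generalizing expression at every depth shallower than the first generalizing one, is checked online as the rows are built. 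It is exactly here that alternation is essential: rather than separately guessing a minimal solution $e$ and all smaller expressions $e'$ --- which are not independent --- the automaton universally branches to confirm that $F_2$ is unmet at all earlier rows while existentially confirming that $F_1$ is met at the current one. Once this lemma is in hand, the rest of the proof is identical to the adequate case.
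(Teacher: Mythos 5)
Your proposal is correct and takes essentially the same route as the paper: the paper's proof of \Cref{main-result} consists precisely of the construction of $A_I$ together with the correctness lemma $L(A_I) = \{t \in T_{\Gamma(\Delta, N)} \,:\, \solves(\extend(\dec(t),G'), I)\}$ (\Cref{lemma:dsl-synthesis}), after which the product with $A(\metagram)$ over all instances, the two-way-to-one-way conversion, and the emptiness test are declared identical to \Cref{adequate-final-steps}. The two delicate points you single out---that the monotone operator $H$ reaches its fixpoint within $n^* = k(|Q_1|+|Q_2|)$ rows so the table is finite for arbitrarily large grammars, and that the acceptability condition of \Cref{eqn:acceptable} is verified online using alternation rather than by guessing $e$ and all smaller $e'$ separately---are exactly the points the paper's appendix proof of that lemma discharges via runs of the two-way alternating automaton.
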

\begin{proof}
  \label{main-proof}
  After construction of $A_I$ the proof is identical to
  \Cref{adequate-final-steps}.
\end{proof}

\begin{corollary}
  For languages covered by \Cref{main-result}, DSL synthesis with
  depth ordering is decidable in time
  $\emph{\poly}(|\metagram|)\cdot\exp(l\cdot\exp(m)))$, where $l$ is
  the number of learning instances and $m$ is the maximum size over
  all instance automata.
  \label{complexity-main-result}
\end{corollary}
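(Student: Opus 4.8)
The plan is to follow the construction in the proof of \Cref{main-result} step by step and account for the size contributed at each stage. Since that proof reduces to the same final steps as \Cref{adequate-final-steps}, it builds a product automaton of the form $A = A(\metagram)\cap \mathsf{convert}\big(\text{the intersection over }p\in[l]\text{ of the }A_{I_p}\big)$, decides the problem by an emptiness check on $A$, and synthesizes a solution by decoding any accepted tree via $\dec$. Because emptiness and witness extraction run in time polynomial in $|A|$, it suffices to bound $|A|$.

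First I would bound the size of a single grammar-reading automaton $A_{I_p}$. By the construction of \Cref{sec:construction}, the states of $A_{I_p}$ record partially-built rows of its recursion table, whose entries are subsets of $Q_1\sqcup Q_2$, where $A_1^I=(Q_1,\ldots)$ and $A_2^I=(Q_2,\ldots)$ are the instance automata. Since $|Q_1|,|Q_2|\le m$ by the definition of $m$ as the maximum size over the instance automata, the number of such subsets---and hence the state space of $A_{I_p}$---is $\exp(m)$. This single exponential is precisely the blowup flagged after the construction, and it is the source of the inner $\exp(m)$ in the stated bound; it is exactly what distinguishes this corollary from the adequate case, where $A_I$ has size only $m$. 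Next I would combine these across instances: the intersection over $p\in[l]$ of the two-way alternating automata $A_{I_p}$ can be formed at cost linear in the sum of the parts, giving size $O(l\cdot\exp(m))$. Applying $\mathsf{convert}$, which eliminates two-way alternation at cost $\exp(|B|)$ for input $B$ (per \cite{vardi-two-way,cachat-two-way}), yields a top-down nondeterministic automaton of size $\exp(l\cdot\exp(m))$. Taking the product with $A(\metagram)$, computable in time $\poly(|\metagram|)$ from the regular tree grammar $\metagram$, multiplies the size by a $\poly(|\metagram|)$ factor, so $|A|=\poly(|\metagram|)\cdot\exp(l\cdot\exp(m))$. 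The emptiness check and the decoding of a witness both run within this same bound, yielding the claimed running time $\poly(|\metagram|)\cdot\exp(l\cdot\exp(m))$.

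The main obstacle is keeping the two exponential blowups cleanly separated and confirming that only these two stages contribute them. The inner $\exp(m)$ is inherent to the subset construction that lets $A_{I_p}$ simulate the depth-stratified dynamic program over the recursion table, while the outer exponential arises solely from the removal of two-way alternation in $\mathsf{convert}$. The delicate point is to verify that the intersection over instances contributes only a linear factor $l$ \emph{inside} the outer exponent, and that the meta-grammar product contributes only a multiplicative $\poly(|\metagram|)$ factor \emph{outside} it, so that no spurious third exponential is introduced and the nesting depth of exponentials is exactly two.
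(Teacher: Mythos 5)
Your proposal is correct and follows essentially the same route the paper intends: the corollary is immediate from the remark that each $A_{I}$ has size $\exp(m)$ (recursion-table entries being subsets of the instance automata's states) combined with the final steps of the adequacy proof (linear-cost intersection over the $l$ instances, the $\exp(|B|)$ cost of $\mathsf{convert}$, the $\poly(|\metagram|)$ product with $A(\metagram)$, and polynomial-time emptiness). Your accounting of where the two exponentials arise and why no third one appears matches the paper's construction exactly.
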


To make the content of \Cref{main-result} more explicit, consider an
instance of DSL synthesis which is decidable as a
result. Finite-variable first order logic can be evaluated by tree
automata in the sense of \Cref{defn:example-automata}. In particular,
this means that, given a base language consisting of first-order logic
over, e.g., finite graphs, with formulas working with finitely-many
variables, the DSL synthesis problem with depth ordering is
decidable. The tree automata for evaluating logic formulas have size
exponential in the number of examples $s=|X|+|Y|$ for a learning
instance $(X,Y)$ consisting of Boolean-labeled graphs, exponential in
the size $n$ of graphs $G\in X\cup Y$, and doubly exponential in the
number of variables $k$ that are allowed in formulas. So by
\Cref{complexity-main-result} we have that DSL synthesis with depth
ordering for finite-variable first order logic over finite graphs is
decidable in time $\poly(|\metagram|) \cdot\exp(l\cdot\exp(m))$, where
$m(n,s,k)=\exp(ns^k)$. Similar results follow immediately for several
other languages, e.g., those from~\cite{oopsla23}.

\medskip

\textbf{Remark.} The construction of $A_I$ in \Cref{sec:construction},
specifically the simulation of $A_1^I$ and $A_2^I$ on the grammar
input, is independent of the learning problem and can be used to prove
the following.

\begin{lemma}
  Given tree automata $A$ and $B$, there is a tree automaton $C$ that
  accepts an encoding of a grammar $G$ if and only if there is some
  $i\in\Nat$ such that $L(A)\cap L(G)_i\neq\emptyset$ and
  $L(B)\cap L(G)_i=\emptyset$, where $L(G)_i$ is the set of terms
  obtained at iteration $i$ of the fixpoint computation for $L(G)$.
\end{lemma}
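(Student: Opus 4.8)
The plan is to reuse the construction of $A_I$ from \Cref{sec:construction} essentially verbatim, with the instance automata $A_1^I$ and $A_2^I$ replaced by the given automata $A$ and $B$, altering only the final acceptance check. The crucial point is that the recursion-table machinery never uses the fact that $A_1^I, A_2^I$ come from a learning instance: they enter purely as tree automata whose state sets populate the table entries. First I would convert $A$ and $B$ to non-deterministic top-down automata $A=(Q_A,\Delta,Q^i_A,\delta_A)$ and $B=(Q_B,\Delta,Q^i_B,\delta_B)$, so that the reachable-state semantics $\llbracket t\rrbracket^A_R$ and $\llbracket t\rrbracket^B_R$ used to define the operators $H_i$ and $H$ carry over unchanged.

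Next I would build the recursion table $T(G)$ with entries over $\mathcal{P}(Q_A\sqcup Q_B)$, using $A$ and $B$ in the operators $H_i$, so that the fixpoint iteration $Z_0=(\emptyset,\dots,\emptyset),\,H(Z_0),\,H^2(Z_0),\dots$ stabilizes after $n^*=k(|Q_A|+|Q_B|)$ steps with $k=|N|$. The invariant to transport from \Cref{sec:appendix-section-6} is that $H^i(Z_0)_j$ is exactly the set of $q\in Q_A\sqcup Q_B$ from which $A$ (resp.\ $B$) has an accepting run on some term derivable from $N_j$ at the $i$-th iterate; for the start column this says $H^i(Z_0)_1$ records the reachable states over all terms of $L(G)_i$. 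Because a top-down automaton accepts $t$ iff an initial state lies in its reachable set on $t$, this yields
\[
  L(A)\cap L(G)_i\neq\emptyset \iff Q^i_A\cap H^i(Z_0)_1\neq\emptyset,
  \qquad
  L(B)\cap L(G)_i=\emptyset \iff Q^i_B\cap H^i(Z_0)_1=\emptyset.
\]
Hence the lemma's condition reduces to: there is a row $i$ at which the cumulative start-symbol entry $H^i(Z_0)_1$ meets $Q^i_A$ but is disjoint from $Q^i_B$. This is the strict-inequality variant of the acceptability condition used for DSL synthesis; since $H^i$ is monotone, once a $B$-initial state appears it persists, so such a row exists precisely when $A$'s first satisfaction depth lies strictly below $B$'s.

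I would then take $C$ to be the two-way alternating automaton that, exactly like $A_I$, reads $\enc(G)$ and guesses the rows of $T(G)$ from row $0$ downward, verifying each guess by walking to the top spine and simulating $A$ and $B$ on the right-hand sides of the relevant productions (resolving the $\rhs_{N_s}$ markers from the previously computed rows). As the table is built, $C$ checks the condition above and accepts the first time it holds; the stabilization bound guarantees only finitely many rows need be examined. Converting $C$ to a top-down nondeterministic automaton then yields a tree automaton of the required form.

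I expect the main obstacle to be bookkeeping rather than conceptual: I must verify that the term-level fixpoint defining $L(G)_i$ stays synchronized iterate-for-iterate with the state-level fixpoint $H^i(Z_0)$, so that ``at iteration $i$'' means the same thing on both sides of the displayed equivalences, and I must re-run the correctness argument for the row-by-row simulation with $A,B$ in place of the instance automata. Since that argument treats the two automata abstractly and only the acceptance predicate changes, the transport is routine; the one genuine modification is to replace the DSL acceptability test (an $A$-state newly achieved at row $i$ with no $B$-state achieved at any earlier row) by the cumulative test above, which encodes the lemma's two conjuncts at a common row $i$.
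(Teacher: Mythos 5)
Your proposal is correct and follows essentially the same route as the paper, whose proof of this lemma is simply that it ``follows the same logic as the construction of $A_I$ with some simplifications'': you reuse the recursion-table automaton with $A,B$ in place of the instance automata $A_1^I, A_2^I$ and exploit the stabilization bound $k(|Q_A|+|Q_B|)$. You also correctly pinpoint the one genuine adjustment the paper leaves implicit---replacing the non-strict acceptability test of \Cref{eqn:acceptable} (which permits a $B$-state to appear at the same row as the first $A$-state) by the cumulative test $Q^i_A\cap H^i(Z_0)_1\neq\emptyset$ and $Q^i_B\cap H^i(Z_0)_1=\emptyset$ at a common row $i$, which by monotonicity of $H$ is exactly the lemma's condition.
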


\begin{proof}
  Follows the same logic as the construction of $A_I$ with some
  simplifications.
\end{proof}

\medskip

\textbf{Open Problem. } Finally, we leave open the question of whether
an analogous result to that of \Cref{main-result} holds when the
concept ordering is given by parse tree \emph{size} rather than
depth. It is unlikely that the solution sets would be regular sets, as
this would seem to imply the regularity of sets such as
$\{f(t,t) : t \text{ an arbitrarily large term}\}$, which are not in
fact regular, though the existence of suitable DSLs may still be
decidable.

\section{DSL Synthesis for Macro Grammars}
\label{sec:macrogram}

We now introduce variants of the problems from
\Cref{sec:adequate,sec:dsl-synthesis} which define DSLs using
\emph{macro grammars} (see \Cref{sec:grammars-alphabets} and
\Cref{ex:macro-more-expressive}). 
We establish decidability results for each variant.  

\subsection{Adequate DSL Synthesis with Macros}
\label{sec:gramm-synth-probl}

\begin{tcolorbox}[
  colback=white,
  colframe=gray,
  boxsep=6pt,
  left=0pt,
  right=0pt,
  top=3pt,
  arc=1pt,
  boxrule=1pt,
  toprule=1pt
  ]
  \begin{problemDef}[Adequate DSL Synthesis with Macros]
\,\\
  \textbf{Parameters:}
  \begin{itemize}
  \item Finite set of nonterminals $N$ containing some macro symbols
  \item Base language $G'=(S',N',\Delta,P')$ with $N'\subseteq N\quad$ // a regular tree grammar
  \end{itemize}

  \textbf{Input:} Instances $I_1, \ldots, I_l$ and meta-grammar
  $\metagram$ over $\Gamma(\Delta, N)$

  \textbf{Output:} Macro grammar $G=(S,N,\Delta,P)$ such that
  \begin{enumerate}
  \item $\extend(G, G')$ is adequate (\Cref{adequacy}) and
  \item $\enc(G)\in L(\metagram)$, i.e. constraints $\metagram$ are satisfied
  \end{enumerate}
\end{problemDef}
\end{tcolorbox}

\begin{theorem}
  Adequate DSL synthesis with macros is decidable for any language
  whose semantics over fixed structures can be evaluated by tree
  automata (\Cref{defn:example-automata}). Furthermore, the set of
  solutions corresponds to a regular set of trees.
  \label{thm:adequate-macros}
\end{theorem}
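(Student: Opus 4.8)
The plan is to follow the template of \Cref{sec:adequate} and reduce the problem to emptiness of a tree automaton over grammar trees. As before, for each example $M\in X\cup Y$ I assume the example automaton $A_M$ (\Cref{defn:example-automata}) and form $A_1\coloneqq\bigcap_{M\in X\cup Y}A_M$, so that $L(A_1)=\{e\in L(G') : \solves(e,I)\}$. The goal is an instance automaton $A_I$ with $L(A_I)=\{t : \solves(\extend(\dec(t),G'),I)\}$, exactly as in \Cref{lemma:grammar-synthesis}; the final automaton is then the product of the $A_{I_p}$ with $A(\metagram)$, converted to a top-down nondeterministic automaton, after which the steps are identical to \Cref{adequate-final-steps}. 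The one genuinely new difficulty is that $\dec(t)$ is now a \emph{macro} grammar, whose outermost language may be non-regular, so $A_I$ cannot simply guess a parse tree by walking the grammar: expanding a macro call $F(a_1,\ldots,a_r)$ substitutes the argument subterms $a_i$ for the parameter leaves of $F$'s body, and the invocation site of $F$ may be arbitrarily far from, and arbitrarily deeply nested relative to, its production.

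To handle parameters I would replace the on-the-fly parse-tree guessing with a guess-and-verify computation of \emph{summaries} (types) relative to $A_1$. For an arity-$0$ nonterminal $N$ the summary is the set $\mathrm{Reach}(N)\subseteq Q_1$ of states from which some term derivable from $N$ is accepted by $A_1$; for a macro symbol $F$ of arity $r$ the summary is a set of \emph{entries} $(q,D_1,\ldots,D_r)$ with $q\in Q_1$ and $D_i\subseteq Q_1$, recording that a run of $A_1$ on $F$'s body started from $q$ can succeed while visiting the leaf for parameter $i$ only in states belonging to $D_i$. Because the paper's derivations are \emph{outermost}, duplicated arguments expand independently, so a demand set $D_i$ is discharged by an argument $a_i$ precisely when $D_i\subseteq\mathrm{Reach}(a_i)$, where $\mathrm{Reach}$ is lifted compositionally over terms using $\delta_1$ at $\Delta$-nodes and the guessed summaries at nonterminal and macro nodes. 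The crucial finiteness observation is that $N$ is a fixed parameter and $Q_1$ is finite, so the space of summaries is finite; the intended summaries are the least fixpoint of the evident monotone equations induced by the productions, and $G$ is adequate for $I$ iff $Q_1^i\cap\mathrm{Reach}(S)\neq\emptyset$.

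I would then build $A_I$ as a two-way alternating automaton that witnesses least-fixpoint membership by walking the grammar tree, keeping the two navigation modes of \Cref{sec:adequate} (find productions on the top spine; descend into a right-hand side simulating $A_1$ via $\adorn$) and adding handling for macro nodes. The key move is to \emph{decouple} \emph{using} a summary from \emph{justifying} it: at a macro call $F(a_1,\ldots,a_r)$ needing a state $q$, the automaton existentially guesses an entry $(q,D_1,\ldots,D_r)$ and universally splits into (i) discharging demands --- for each $i$ and each $q''\in D_i$, navigate to the $i$-th argument subterm and prove $q''\in\mathrm{Reach}(a_i)$ --- and (ii) justifying the entry --- navigate to $F$'s production and simulate $A_1$ on its body from $q$, requiring that every parameter-$i$ leaf is visited only in states belonging to $D_i$, and recursing on nonterminal and macro occurrences inside the body. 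Since a guessed entry is a finite object over $Q_1$ and the maximum macro arity is fixed, it fits in the automaton's state, and this decoupling removes any need for a stack of return pointers to invocation sites. Membership in the \emph{least} fixpoint is captured automatically because accepting runs are finite: a spurious, only-circularly-justified state would force an infinite witness tree and hence be rejected, while a genuinely reachable state always admits a finite witness that the existential choices can realize (taking $D_i$ to be the exact demanded set for completeness).

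The main obstacle I expect is precisely getting this summary semantics right and faithful to outermost derivation --- in particular, that per-parameter demand sets are discharged \emph{independently} (a consequence of OI), that duplicated parameter occurrences contribute a \emph{set} rather than forcing a single argument tree to meet several demands simultaneously, and that the paired justify/discharge obligations together soundly and completely characterize $Q_1^i\cap\mathrm{Reach}(S)\neq\emptyset$. Once $A_I$ is established with $L(A_I)=\{t : \solves(\extend(\dec(t),G'),I)\}$, the remainder is routine: intersecting the $A_{I_p}$ with $A(\metagram)$, converting to a top-down nondeterministic automaton, and deciding emptiness yields both decidability and, since the accepted language is that of a top-down tree automaton over grammar trees, a regular set of solutions, exactly as in \Cref{adequate-final-steps}.
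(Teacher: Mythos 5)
Your proposal is correct and takes essentially the same route as the paper: the paper's construction of $A_X$ likewise replaces single-state simulation with nondeterministically guessed \emph{subsets} of $Q_1$ per macro argument---justified by exactly the outermost-derivation independence you identify (the paper illustrates it with $H(G)\Longrightarrow h(G,G)\Longrightarrow h(a,b)$)---verifying each guess by descending into the argument while passing the guessed sets to the macro's productions via two-way navigation. The finish is also identical: product with $A(\metagram)$ and the other instance automata, conversion to a top-down nondeterministic automaton, and emptiness, exactly as in the plain-grammar case.
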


Macros lead us to a more complex decision procedure for DSL synthesis—
we briefly explain the adjustments needed to prove decidability— a
complete construction can be found in \Cref{sec:adequate-macros-app}.

The proof of \Cref{thm:adequate-macros} is similar to that of adequate
DSL synthesis from \Cref{sec:adequate}, except $A_I$ uses
exponentially more states to deal with macros. To simulate the
instance automaton $A_1$, it keeps track of \emph{sets} of distinct
expressions generated by a given nonterminal. To see why, consider the
grammar with rules $S\rightarrow H(G),\, H(1)\rightarrow h(1,1),\,$
and $G\rightarrow a\vsep b$. Imagine $A_I$ is reading
$S\rightarrow H(G)$ and checking that $S$ generates an expression
evaluating to $q\in Q_1$, a state of $A_1$. It might check that an
expression generated in $H$ can evaluate to $q$, \emph{assuming} that
the argument $G$ generates an expression evalutating to some
$q_G\in Q_1$. Notice, however, that
$H(G)\Longrightarrow h(G,G)\Longrightarrow h(a,b)$ is a valid
outermost derivation, and $G$ generated two distinct expressions
despite being passed once as an argument to $H$. The automaton can
handle this by tracking \emph{the entire subset of} $Q_1$ that
expressions generated by $G$ can evaluate to. Besides this increase in
states to handle macros, the construction and decision procedure are
similar to that of \Cref{sec:adequate}.

\subsection{DSL Synthesis with Macros}
\label{sec:dsl-synthesis-macros}

Here we define DSL synthesis with macros and state a decidability
result for depth ordering.

\begin{tcolorbox}[
  colback=white,
  colframe=gray,
  boxsep=3pt,
  left=4pt,
  right=4pt,
  top=3pt,
  arc=1pt,
  boxrule=1pt,
  toprule=1pt
  ]
  \begin{problemDef}[DSL Synthesis with Macros]
\,\\
  \textbf{Parameters:}
  \begin{itemize}
  \item Finite set of nonterminals $N$ containing some macro symbols
  \item Base language $G'=(S',N',\Delta,P')$ with $N'\subseteq N\quad$
    // a regular tree grammar
  \item Expression ordering $\le$
  \end{itemize}

  \textbf{Input:} Instances $I_1, \ldots, I_l$ and meta-grammar $\metagram$ over $\Gamma(\Delta,N)$

  \textbf{Output:} Macro grammar $G=(S,N,\Delta,P)$ such that:
  \begin{enumerate}
  \item $\extend(G,G')$ is adequate (\Cref{adequacy}) and generalizing
    (\Cref{generalization}) and
  \item $\enc(G)\in L(\metagram)$, i.e. constraints $\metagram$ are satisfied
  \end{enumerate}
\end{problemDef}
\end{tcolorbox}

The new challenge in this setting is to account for an interaction
between the relative succinctness constraint (\Cref{generalization})
and potentially deep nesting of macro applications. Our result below
holds for classes of macro grammars where all grammar rules have macro
application nesting depths bounded by a constant. Details can be found
in \Cref{sec:section-7-dsl}.
\vspace{-0.05in}

\begin{theorem}
  DSL synthesis with macros is decidable for depth ordering over any
  language whose semantics on fixed structures can be evaluated by
  tree automata (\Cref{defn:example-automata}) for any class of macro
  grammars whose macro nesting depth is bounded. Furthermore, the set
  of solutions corresponds to a regular set of trees.
  \label{thm:order-macro-theorem}
\end{theorem}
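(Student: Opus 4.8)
The plan is to fuse the two techniques already developed in the paper: the set-tracking idea behind \Cref{thm:adequate-macros}, which soundly handles the re-expansion of a single macro argument into several distinct expressions under outermost derivation, and the depth-stratified \emph{recursion table} of \Cref{sec:construction} used for \Cref{main-result}, which records how the instance automata evaluate over expressions of each parse-tree depth. As before, for each instance $I=(X,Y)$ I would build $A_1^I$ accepting expressions that solve $I$ and $A_2^I$ accepting non-generalizing expressions, and the goal is a tree automaton $A_I$ that guesses and verifies, row by row, the recursion table of the (macro) grammar encoded in its input, while checking the acceptability condition of \Cref{eqn:acceptable}: the least depth at which a solution for $X\cup Y$ appears must not be preceded by any depth at which a non-generalizing expression appears. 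The product of these $A_I$ with $A(\metagram)$, followed by an emptiness test, decides the problem exactly as in \Cref{adequate-final-steps}.

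The key new ingredient is the shape of the recursion-table domain. For a nonterminal of arity $0$ an entry is a subset of $Q_1 \sqcup Q_2$, precisely as in \Cref{sec:construction}. For a macro symbol $N_i$ of arity $r$, an entry must instead record the macro's behavior \emph{as a function of the behaviors of its arguments}: for each profile of argument state-sets $(S_1,\dots,S_r)\in(\mathcal{P}(Q_1\sqcup Q_2))^r$, which states of $A_1^I$ and $A_2^I$ the body of $N_i$ can reach. Tracking a full subset per argument, rather than a single state, is exactly what is needed to handle outermost derivations, where one argument occurrence may be re-expanded independently at each of its appearances in the body. Since these profiles range over a finite domain, the set of possible entries, and hence of possible rows, stays finite; the monotone operator $H$ computing the table therefore still converges after boundedly many steps and the rows eventually repeat. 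This gives a finite-index equivalence on macro grammars, which is what lets a tree automaton simulate the dynamic program on an arbitrarily large encoded grammar.

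The hard part will be the depth bookkeeping, and this is where the bounded-macro-nesting hypothesis is essential. The parse-tree depth of an application $N_i(t_1,\dots,t_r)$ is not simply a function of the depths of the $t_j$ and of the body of $N_i$ in isolation: the body supplies a \emph{context} into which the arguments are substituted, and nested macro applications compose these contexts. If macro nesting were unbounded, an argument could be buried at arbitrary depth by a chain $N_{i_1}(N_{i_2}(\cdots))$, so the depth contribution of a single derivation step could not be localized, defeating any finite-state accounting. Bounding the macro-application nesting depth by a constant makes every such context of bounded size, so the depth that a macro application adds over its arguments is drawn from a finite set and can be computed locally as $A_I$ walks the right-hand sides of rules. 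I would make this precise by defining $H$ so that, on a macro symbol, it runs the parameter-aware instance automata over the bounded-nesting body contexts, composing the reached state-sets with argument profiles recorded at earlier rows and incrementing the depth index accordingly.

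Once the table and its finiteness are established, the remainder follows the earlier pattern essentially verbatim. The automaton $A_I$ guesses each successive row and verifies it by simulating $A_1^I$ and $A_2^I$ up and down the grammar tree as in \Cref{sec:construction}, augmented with the argument-profile tracking of \Cref{thm:adequate-macros}, and checks acceptability in the sense of \Cref{eqn:acceptable}. Taking the product with the meta-grammar automaton $A(\metagram)$ and the per-instance automata and deciding emptiness, as in \Cref{adequate-final-steps}, then decides DSL synthesis with macros under the depth ordering and exhibits the solution set as a regular set of trees.
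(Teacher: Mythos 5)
Your proposal follows the paper's own proof essentially step for step: function-valued recursion-table entries for macro symbols indexed by argument state-set profiles in $(\mathcal{P}(Q_1\sqcup Q_2))^r$ (the set-tracking needed for outermost re-expansion of copied arguments), the bounded nesting bound $b$ making depth accounting finite-state, and a guess-and-verify two-way automaton for the table combined with the acceptability check of \Cref{eqn:acceptable}, the product with $A(\metagram)$, and emptiness as in \Cref{adequate-final-steps}. The only point the paper makes more explicit is that the automaton must retain the last $b$ table rows \emph{individually} (replacing the union $L$ of previous rows used in \Cref{sec:construction}) and guess, at each nested macro call, how much of the remaining depth budget to allot to arguments versus expansion—which is exactly the precise form of your ``argument profiles recorded at earlier rows'' with incremented depth indices.
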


Suppose we have a meta grammar $\metagram$. If there is a bound
$b\in \Nat$ on the nesting depth of macros occurring in any grammar
rule of any grammar in $\metagram$, then we can compute $b$ given
$\metagram$ and use it in an automaton construction similar to that of
\Cref{sec:decidability} to synthesize and decide existence of DSLs
with macros which abide by the meta grammar constraint.

Given $b\in \Nat$, we adapt the construction of $A_I$ from
\Cref{sec:decidability}. In order to compute the values achievable by
nonterminals at a given depth in the presence of nested macros, $A_I$
now keeps track of previously computed rows of the table individually,
rather than keeping track of a \emph{union} of previously computed
rows. Because the nesting depth of macros is bounded by $b$, the
automaton need only remember $b$ previous rows in order to accurately
compute the table entries. Additionally, for macro symbols, the
entries of the table correspond to \emph{functions} on sets of values
rather than sets. Besides these differences the construction is
similar to \Cref{sec:decidability}.

\textbf{Open Problem. } Does a result analogous to
\Cref{thm:order-macro-theorem} hold for the case of unbounded macro
nesting depth? If the solution sets are not regular, is the problem at
least decidable?

\section{Related Work}
\label{sec:related}

\emph{Program synthesis and library learning.} Hand-designed DSLs are
crucial in many applications of example-based program synthesis,
e.g.~\cite{ProgrammaticAndDirectManipulationTogetherAtLast,FlashExtract,SpecifyingAndCheckingFileSystemCrashConsistencyModels,SynthesizingHighlyExpressiveSQLQueriesFromInputOutputExamples}. Excel's
FlashFill~\cite{flashfill} enabled program synthesis from few examples
for spreadsheet programming, and a major factor in its success was a
DSL that succinctly captured useful spreadsheet operations. Not only
do DSLs define specific domains, but they can also make synthesis more
tractable by reducing search space sizes. Tradeoffs between expressive
grammars and synthesizer performance were studied for \textsc{SyGuS}
problems in~\cite{padhi-overfitting}.

Work on library learning explores the problem of compressing a given
corpus of programs~\cite{stitch,babble} or refactoring knowledge bases
expressed as logic programs~\cite{knowledge-refactoring}, where the
goal is to find a set of programs which can be composed to generate
the input corpus or which are logically similar or equivalent to the
input, but which also serves to compress it. This contrasts with our
work because DSL synthesis does not assume a given set of solutions
and instead requires a DSL that expresses succinct solutions to given
learning problems. Our formulation also does not require an algorithm
to solve the hard problem of finding semantically \emph{equivalent}
representations of a particular set of solutions. Closer in spirit to
our work are the \textsc{EC$^{\text{2}}$} and \textsc{DreamCoder}
systems~\cite{library-learning-for-neurally-guided-program-induction,dreamcoder},
which learn a library alongside program search to solve classes of
synthesis problems from specific domains. In contrast to our work,
these systems cannot declare there is \emph{no library} over a given
signature which solves a set of learning problems. We also introduce a
new signal related to the \emph{relative succinctness} of DSL
concepts, and our formulation permits expressive power of the base
language to \emph{decrease}.

\emph{Learning grammars and automata.} A large body of work explores
the learning of formal languages, e.g., $L^*$~\cite{lstar} and
RPNI~\cite{rpni} learn regular languages represented by
automata. Recent work studies context-free grammar learning for data
format discovery and fuzz testing~\cite{sen-arvada,
  bastani-input-grammar, miltner2023saggitarius, vstar}. In these
applications, useful grammars express syntactic properties, e.g.,
well-nested XML. This is very different from DSL synthesis, where the
relevant grammar properties depend on language \emph{semantics} and
are determined by given learning instances. Similarly distinct recent
work synthesizes grammars capturing programs that implement specific
exploits~\cite{language-synth-for-exploits}.

Vanlehn and Ball \cite{vanlehn1987} approached grammar learning using
version space algebra~\cite{tom-mitchell-vsa}. The automata in our
proofs represent version spaces: each automaton represents the set of
all grammars which express solutions for some specific learning
instance.

\emph{Applications of tree automata to synthesis.} Tree automata
underlie several classic results on synthesis of finite-state systems,
e.g., the solutions to Church's problem~\cite{church63-journal} by
B{\"u}chi and Landweber~\cite{buchi-landweber-69} and
Rabin~\cite{rabin-69}; such automata read trees encoding system
behaviors, whereas the automata in our work read encodings of DSLs and
the parse trees they
admit. 
Our automata for DSLs rely on existing constructions which, given an
arbitrary finitely-presented structure, produce a tree automaton that
reads parse trees of expressions in the base language and evaluates
them over that structure using memory that is independent of the
expression size but may depend on the structure. These constructions
were used for recent decidability results for learning in
finite-variable logics~\cite{popl22} and other symbolic
languages~\cite{oopsla23}; our results apply to DSL synthesis over all
such languages. Related techniques have been used for decidability
results in program synthesis, e.g., reactive programs from temporal
logic specifications~\cite{csl11-madhu} and uninterpreted programs
from sketches~\cite{cav20}, and as an algorithmic framework in program
synthesis tools,
e.g.,~\cite{isil-relational,wang-blaze-fta,flashmeta-prose,flashfill}.


\section{Conclusion}
\label{sec:conclusion}

We introduced the problem of synthesizing DSLs from few-shot learning
instances. Our formulation contributes a new \emph{relative
  succinctness} constraint on synthesized DSLs, which requires them to
capture a domain precisely by (a) expressing domain-specific concepts
using succinct expressions and (b) expressing irrelevant concepts
using only less succinct ones, or perhaps not expressing them at
all. DSLs are represented using (macro) grammars which are defined
over a base language that gives semantics to the symbols in the DSL.
The DSL synthesis problems we introduce, and the relative succinctness
constraint especially, emphasize the automated construction of DSLs
\emph{for few-shot synthesis}; they ask for DSLs using which specific
synthesis algorithms succeed.

We proved that DSL synthesis is decidable when succinctness
corresponds to small parse tree depth, and the solutions sets
(i.e., DSLs) correspond to regular sets of trees. The result holds for
a rich class of base languages whose semantics over any fixed
structure can be evaluated by a finite tree automaton. We also proved
decidability for variants of the DSL synthesis problem where (a) the
relative succinctness constraint is replaced by a weaker one requiring
only that DSLs express some solution for each instance and (b) where
DSLs are defined using grammars with macros.

Future work should explore practical implementations of DSL synthesis
and probe whether DSLs can indeed be realized by synthesis from
few-shot learning problems and, if so, how much data is needed to
arrive at useful DSLs in specific domains.

\section*{Data Availability Statement}
\label{sec:data-avail-stat}

This paper has no accompanying artifact.

\begin{acks}
  This work was supported in part by a Discovery Partners Institute
  (DPI) science team seed grant and a research grant from Amazon.
\end{acks}

\newpage

\bibliography{main}

\newif\ifappendix
\appendixtrue

\ifappendix
\newpage
\appendix
\section{Section 3}
\label{sec:enc-dec}

\subsection{Encoding grammars as trees}
\label{sec:encoding-grammars-as}

The grammar tree for a grammar $G=(S,N, \Delta, P)$, where we assume
the productions are ordered in a list as
$P=\la P_1, P_2,\ldots, P_s\ra$, is given by
$\enc(G)=\route(\enc_p(P))$, where the spine of productions is
computed recursively on the list $P$ as follows.

\newcolumntype{L}{>{$}l<{$}} 
\newcolumntype{C}{>{$}c<{$}} 

\begin{center}
\begin{tabular}{L C L}
  \enc_p(\la (N_i,\alpha), L\ra) & = & \lhs_{N_i}(\enc_{t}(\alpha), \enc_{p}(L))) \\
  \enc_p(\la\ra) & = & \prodend \\
  \enc_t(f(t_1,\ldots,t_r)) & = & f(\enc_t(t_1),\ldots,\enc_t(t_r)),
                                  \text{  where } f\in\Delta, \arity(f)=r\\
  \enc_t(N_i(t_1,\ldots,t_r)) & = & \rhs_{N_i}(\enc_t(t_1),\ldots,\enc_t(t_r))\\
  \enc_t(N_i) & = & \rhs_{N_i}\\
  \enc_t(i) & = & i
\end{tabular}
\end{center}

\subsection{Decoding trees into grammars}
\label{sec:decoding-trees-into}

The grammar $G$ corresponding to a grammar tree $t$ of the form
$\route(\lhs_{N_i}(x,y))$ over alphabet $\Gamma(\Delta,N)$, is given
by $\dec(t)=(N_i,N,\Delta,\la (N_i,\dec_t(x)), \dec_p(y)\ra)$, where
$\dec_t$ and $\dec_p$ are computed recursively as follows.
\begin{center}
\begin{tabular}{L C L}
  \dec_p(\lhs_{N_i}(x,y)) & = & \la (N_i, \dec_t(x)), \dec_p(y)\ra\\
  \dec_p(\prodend) & = & \la\ra \\
  \dec_t(f(x_1,\ldots,x_r)) & = & f(\dec_t(x_1),\ldots,\dec_t(x_r)),
                                  \text{   where }f\in\Delta,\arity(f)=r\\
  \dec_t(\rhs_{N_i}(x_1,\ldots,x_r)) & = & N_i(\dec_t(x_1),\ldots,\dec_t(x_r))\\
  \dec_t(\rhs_{N_i}) & = & N_i\\
  \dec_t(i) & = & i
\end{tabular}
\end{center}

Note that when decoding a grammar from a tree we choose to make the
nonterminal for the topmost production in the tree the starting
nonterminal, and when encoding a grammar $(S,N,\Delta,P)$ the first
production in the list of productions $P$ will be the topmost
production.

\section{Section 4}
\label{sec:grammar-tree-appendix}

\subsection{Example meta grammar}
\label{sec:example-meta-grammar}

\Cref{fig:grammar-tree-appendix} shows and example of a fully
permissive meta grammar over alphabet $\Gamma(\Delta,N)$.

\begin{figure}[H]
  \centering
    \begin{tabular}[t]{l}
      Meta grammar $\metagram$ over $\Gamma(\Delta, N)$: \\\\
      $\begin{array}{r c l l}
         S &\rightarrow & \route(\production) & \\
         \production &\rightarrow & \lhs_{N_i}(\term, \production) &
                                                                    N_i\in N \\
           & \vsep & \prodend &
         \\
         \term &\rightarrow & g(\term, \term) & \\
           & | & h(\term) & \\
           & | & N_3(\term) & \\
           & | & \rhs_{N_i} & N_i\in N^{=0} \\
           & | & a & \\
           & | & b &
       \end{array}$
    \end{tabular}\hspace{0.4in}
  \begin{minipage}[t]{0.3\linewidth}
    \begin{tabular}[t]{l}
      Grammar $G$: \\\\
      $\begin{array}[c]{l}
         N_1\, \rightarrow \,\,h(N_2) \\
         N_2\, \rightarrow \,\,h(N_1) \vsep g(a,b)
      \end{array}$\\\\\\
      \end{tabular}
    \begin{tabular}[t]{l}
    Macro grammar $G'$: \\\\
    $\begin{array}[c]{c c l}
      N_1 &\rightarrow& N_3(N_2) \\
      N_2 &\rightarrow& h(N_1)\vsep a \\
      N_3(1) &\rightarrow& g(1,1)
    \end{array}$
  \end{tabular}
\end{minipage}
\begin{tabular}[t]{l}
  \\\\
  \begin{tikzpicture}[scale=1]
      \node (text) at (3, 0.15) {Encoding $\enc(G)\in L(\metagram)$} ;

      \node (root) at (0,0.25) {$\route$} ;
      \node (lN1) at  (0,-0.5) {$\lhs_{N_1}$} ;
      \node (lN21) at (1.25,-0.75) {$\lhs_{N_2}$} ;
      \node (lN22) at (2.5,-1) {$\lhs_{N_2}$} ;
      \node (end) at (3.75,-1.25) {$\prodend$} ;

      \node (hl) at (-0.5, -1.25) {$h$} ;
      \node (rN2) at (-0.5, -2) {$\rhs_{N_2}$} ;

      \node (g) at (0.75, -1.5) {$g$} ;
      \node (al) at (0.5, -2.25) {$a$} ;
      \node (ar) at (1, -2.25) {$b$} ;

      \node (hr) at (2, -1.75) {$h$} ;
      \node (rN1) at (2, -2.5) {$\rhs_{N_1}$} ;

      \draw (root) -- (lN1) ;
      \draw (lN1) -- (hl) ;
      \draw (hl) -- (rN2) ;
      \draw (lN1) -- (lN21) ;
      \draw (lN21) -- (g) ;
      \draw (g) -- (al) ;
      \draw (g) -- (ar) ;
      \draw (lN21) -- (lN22) ;
      \draw (lN22) -- (hr) ;
      \draw (hr) -- (rN1) ;
      \draw (lN22) -- (end) ;
    \end{tikzpicture}
    \end{tabular}
    \begin{tabular}[t]{c}
      \\\\
    \begin{tikzpicture}[scale=1]
      \node (text) at (3, 0.15) {Encoding $\enc(G')\in L(\metagram)$} ;
      \node (root) at (0,0.25) {$\route$} ;
      \node (lS) at  (0,-0.5) {$\lhs_{N_1}$} ;
      \node (lF) at (1.25,-0.75) {$\lhs_{N_3}$} ;
      \node (lG1) at (2.5,-1) {$\lhs_{N_2}$} ;
      \node (lG2) at (3.75,-1.25) {$\lhs_{N_2}$} ;
      \node (end) at (5,-1.5) {$\prodend$} ;

      \node (rF) at (-0.5, -1.25) {$\rhs_{N_3}$} ;
      \node (rG) at (-0.5, -2) {$\rhs_{N_2}$} ;

      \node (f) at (0.75, -1.5) {$g$} ;
      \node (1l) at (0.5, -2.25) {$1$} ;
      \node (1r) at (1, -2.25) {$1$} ;

      \node (a) at (2, -1.75) {$h$} ;
      \node (b) at (3.25, -2) {$a$} ;

      \node (rhsN1) at (2, -2.5) {$\rhs_{N_1}$} ;

      \draw (root) -- (lS) ;
      \draw (lS) -- (rF) ;
      \draw (rF) -- (rG) ;
      \draw (lF) -- (f) ;
      \draw (f) -- (1l) ;
      \draw (f) -- (1r) ;
      \draw (lG1) -- (a) ;
      \draw (lG2) -- (b) ;
      \draw (lS) -- (lF) ;
      \draw (lF) -- (lG1) ;
      \draw (lG1) -- (lG2) ;
      \draw (lG2) -- (end) ;

      \draw (a) -- (rhsN1) ;
    \end{tikzpicture}
  \end{tabular}
  \caption{(Top right) Tree grammars $G$ and $G'$ over
    $\Delta=\{a^0,b^0,h^1,g^2\}$ and nonterminals
    $N=\{N_1^0,N_2^0,N_3^1\}$ and (bottom) their encodings as trees
    $\enc(G)$ and $\enc(G')$ over alphabet $\Gamma(\Delta, N)$. (Top
    left) A meta grammar $\metagram$ over alphabet $\Gamma(\Delta, N)$
    with $\enc(G)\in L(\metagram)$. }
  \label{fig:grammar-tree-appendix}
\end{figure}
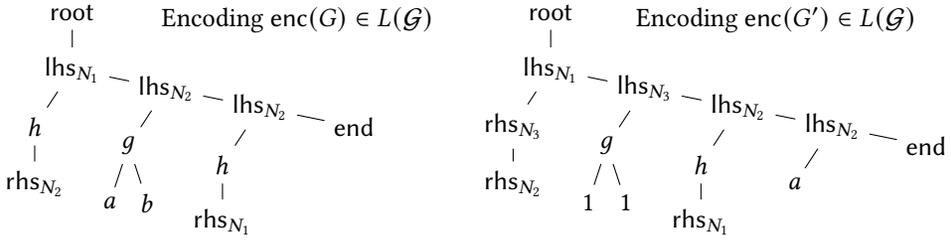

\subsection{Examples of parse trees}
\label{sec:examples-parse-trees}

We omit a (trivial) formalization of \emph{parse trees} for terms and
tree macro grammars. Examples of parse trees for $h(h(h(a)))$ and
$f(f(a))$ in the following macro grammar are shown in
\Cref{fig:macro-parse}. Note that for determining parse tree depth
with respect to macro grammars, we do not consider branches of the
parse tree which record the arguments to a macro (the red edge in
\Cref{fig:macro-parse}). This is because macro definitions may discard
any of their arguments, and in such cases the depth of the parse trees
for the arguments should not count toward the depth of the parse tree
for a macro application because we are considering outermost
derivations.
\begin{align*}
  S &\rightarrow H(A) \vsep f(f(S)) \vsep A \\
  H(1) &\rightarrow h(h(h(1))) \\
  A &\rightarrow a
\end{align*}

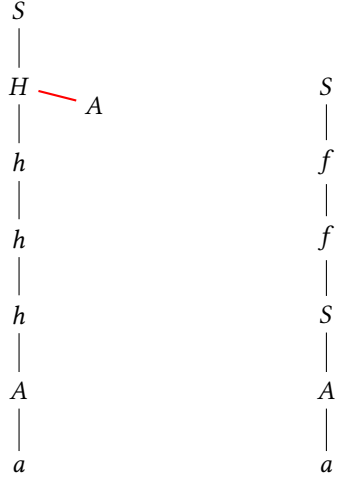
\begin{figure}[t]
  \centering
  \begin{tikzpicture}
    \node (S) at (0,0) {$S$} ;
    \node (M) at (0,-1) {$H$} ;
    \node (MG) at (1,-1.25) {$A$} ;
    \node (h) at (0,-2) {$h$} ;
    \node (h2) at (0,-3) {$h$} ;
    \node (h3) at (0,-4) {$h$} ;
    \node (G) at (0,-5) {$A$} ;
    \node (a) at (0,-6) {$a$} ;

    \draw (S) -- (M) ;
    \draw[red,thick] (M) -- (MG);
    \draw (M) -- (h) ;
    \draw (h) -- (h2) ;
    \draw (h2) -- (h3) ;
    \draw (h3) -- (G) ;
    \draw (G) -- (a) ;
  \end{tikzpicture}\hspace{1in}
  \begin{tikzpicture}
    \node (S) at (0,0) {$S$} ;
    \node (f1) at (0,-1) {$f$} ;
    \node (f2) at (0,-2) {$f$} ;
    \node (S2) at (0,-3) {$S$} ;
    \node (A) at (0,-4) {$A$} ;
    \node (a) at (0,-5) {$a$} ;

    \draw (S) -- (f1) ;
    \draw (f1) -- (f2);
    \draw (f2) -- (S2) ;
    \draw (S2) -- (A) ;
    \draw (A) -- (a) ;
  \end{tikzpicture}
  \caption{Parse trees for terms $h(h(h(a)))$ and $f(f(a))$. Arguments
    for macros are hanging to the right.}
  \label{fig:macro-parse}
\end{figure}

\section{Section 5}
\label{sec:appendix-section-5}

\subsection{Proof of \Cref{lemma:grammar-synthesis}}
\label{sec:proof-grammar-synthesis}

We use the notion of a run for two-way alternating tree automata
from~\cite{cachat-two-way}. Though our presentation in the main text
used terms over ranked alphabets, here it is more convenient to use
the language of labeled trees. Let $W$ be a set of directions using
which our terms over $\Gamma(\Delta,N)$ can be described as finite
$\Gamma(\Delta,N)$-labeled $W$-trees $(T,l)$. Let
$A_X=(Q, \Gamma(\Delta,N), I, \delta)$ and
$A_1=(Q_1,\Delta,I_1,\delta_1)$.

First we argue that
$L(A_X)\subseteq \{t \in T_{\Gamma(\Delta, N)} \,:\,
\solves(\extend(\dec(t),G'), X)\}$. Suppose $t\in L(A_X)$, and let the
corresponding $W$-tree for $t$ be $(T,l)$. Let $(T_r, r)$ be an
accepting run of $A_X$ on $t$, which in our case means that each
branch is finite, and where $T_r$ is a $(W^*\times Q)$-labeled $Z$
tree, for a suitable set $Z$, and $r : T_r \rightarrow (W^*\times Q)$
labels each node of the run with $(p,q)$, whose meaning is that the
automaton passes through input tree node $p$ in state $q$.

The main observation is that there is a subtree of the run $(T_r,r)$
that encodes an expression $e\in L(\dec(t))$ on which the instance
automaton $A_1$ accepts, and thus for which $\solves(e,X)$
holds. There may be multiple such subtrees (which possibly overlap)
and thus there may be multiple such expressions encoded in
$(T_r,r)$. Such an expression can be constructed from the run by
following any branch from a node labeled by $(w, q)$ with
$l(w)=\lhs_{N_i}$ or $l(w)=\route$. For nodes labeled by $(w,q)$ with
$l(w)=x$, for $x\in \Delta\sqcup \{\rhs_{N_i} \,:\, N_i\in N\}$ with
$q\in Q_1$, there is a subset of successors in $T_r$ which can be
followed to construct the subterms for the symbol $x$, possibly
multiple successors in the case of $x\in \Delta^{>0}$, or a single
successor in the case of $x\in \{\rhs_{N_i} \,:\, N_i\in N\}$. For
nodes labeled by $(w,(q_1,\alpha))$ with
$(q_1,\alpha)\in Q_1\times\mathit{subterms}(P')$, a similar procedure
can construct the expression from $\alpha$ by following any branch at
nodes labeled by $\{\rhs_{N_i} \,:\, N_i\in N\}$.

Next we argue that
$L(A_X)\supseteq \{t \in T_{\Gamma(\Delta, N)} \,:\,
\solves(\extend(\dec(t),G'), X)\}$. Suppose $t$ is such that
$\solves(\extend(\dec(t),G'), X)\}$ holds. Let $N_i\in N$ be the
topmost nonterminal in the spine of productions for $t$, which is the
starting nonterminal for $\dec(t)$ and also $\extend(\dec(t),G')$, and
thus there exists $e\in L(N_i)$ within $\extend(\dec(t),G')$ such that
$\solves(e,X)$. Now, it follows that $A_1$ accepts $e$. Since
$e\in L(N_i)$ within $\extend(\dec(t),G')$, the productions used in a
parse $e_p$ for $e$ must appear in $t$ for some production in the
spine, or else they are from $G'$, in which case they appear in the
states of $A_X$. Given the parse $e_p$, we can straightforwardly
construct a finite run $r(e_p)$ for $A_X$ on $t$. This run satisfies
the transitions because $e_p$ uses the productions appearing in $t$ as
well as those in $G'$ and it satisfies the simulation of $A_1$ because
$A_1$ accepts $e$.

\section{Section 6}
\label{sec:appendix-section-6}

We define a tree automaton $A_I$ whose language is
\[
  L(A_I) = \{t \in T_{\Gamma(\Delta, N)} \,:\, \solves(\extend(\dec(t),G'), I)\}.
\]
In the following, we summarize the high-level operation of $A_I$ as it
relates to recursion tables. 

\textbf{Overview.} On an input $t\in T_{\Gamma(\Delta,N)}$, the
automaton guesses the construction of the recursion table
$T(t)\coloneqq T(\extend(\dec(t),G'))$ starting from row $0$ and
working downward to row $n^*$. As it guesses the rows, the automaton
checks the newest row can be produced from preceding rows and that
each entry $T(t)[i, j]$ in fact contains
$H^i(Z_0)_j\setminus H^{i-1}(Z_0)_j$, i.e., it is the set of all new
domain values that can be constructed using previously constructed
domain values. To do this, it simulates the instance automata to check
that each value in $T(t)[i,j]$ can be generated using some production
$(N_j,\alpha)$, with each nonterminal that appears in $\alpha$
interpreted as a value in a previously guessed row. Furthermore, to
check that $T(t)[i,j]$ contains \emph{all} new values that can be
generated at stage $i$, the automaton tracks the set of remaining
values that have not yet been generated by stage $i$, namely
$(Q_1\sqcup Q_2) \setminus H^i(Z_0)_j$, and verifies that none of them
are generated in stage $i$.

After constructing each row, the automaton monitors whether column $1$
for the starting nonterminal is \emph{acceptable}, as described
earlier. Recall this corresponds to checking whether a generalizing or
non-generalizing expression is encountered first. If no generalizing
expression has been found yet and a non-generalizing one is found at
row $i$, that is $F_2\cap T(t)[i,1]\neq\emptyset$ and for all
$j \le i$ we have $F_1\cap T(t)[j,1]=\emptyset$, then the automaton
rejects. Otherwise, if a generalizing expression is found at row $i$,
that is $F_1\cap T(t)[i,1]\neq\emptyset$, then the automaton
accepts. Finally, if no generalizing expression is found at row $n^*$,
equivalently $F_1\cap T(t)[i,1]=\emptyset$ for all $0\le i\le n^*$,
then the automaton rejects.

To achieve the functionality described above, the automaton operates
in a few different modes, which we describe below. Recall $k$ is the
number of nonterminals. We use $D$ as a shorthand for
$\mathcal{P}(Q_1\sqcup Q_2)$, and so rows of the recursion table are
drawn from $D^k$, and we abuse notation by writing $L\cup L'$ for
component-wise union over vectors $L,L'\in D^k$. Besides control
information for making transitions between the modes described below,
the automaton maintains $3$ vectors $L,C,R\in D^k$, where $L$ tracks
the component-wise union over all previously constructed rows, $C$
tracks the current row, and $R$ tracks the remaining values which have
not yet been obtained.

In \textbf{mode 1}, $A_I$ moves to the top of the input tree $t$. From
\textbf{mode 1} it enters \textbf{mode 2}, in which it guesses which
element $C\in D^k$ appears as the next row of $T(t)$, with the
requirement that it contain at least one non-empty component, i.e.
$C\neq \{\emptyset\}^k$. This non-emptiness requirement is what makes
the automaton reject if it constructs the entire table and has not yet
accepted. In \textbf{mode 3} it traverses the right spine of the tree
to verify the guess $C$. For each nonterminal $N_i$ encountered along
the right spine of $t$, this involves guessing which subset
$U\subseteq C_i$ a given production for $N_i$ should reach, given a
vector $L\in D^k$ consisting of all previously reached values for all
nonterminals. In \textbf{mode 4} and \textbf{mode 5} it attempts to
verify these guesses. In \textbf{mode 4}, it simulates the modified
instance automaton $A'_1$ on the right-hand side of a given production
to check that all values in $U$ are reachable, assuming those in $L$
are reachable. In \textbf{mode 5}, it simulates $A'_2$ to check that
all values in $R_i\coloneqq (Q_1\sqcup Q_2) \setminus (L\cup C)_i$ are
not reachable, again assuming those in $L$ are reachable. Note that
after the automaton reaches the end of the productions on the right
spine of $t$, it simulates \textbf{modes 4} and \textbf{5} as if the
productions $P'$ from the base grammar $G'$ were present in the
tree. Finally, it enters \textbf{mode 6} to check if the partially
guessed column corresponding to the starting nonterminal is already
acceptable, and if so it accepts. Otherwise it verifies that no
non-generalizing expression has yet been constructed and enters
\textbf{mode 1} to return to the root of $t$.

We note that the number of states for $A_I$ is exponential in the
sizes of the instance automata, as the entries of the recursion
table range over subsets of their states.

\subsection{Construction of $A_I$}
\label{sec:dsl-construction-app}

We now precisely define the two-way alternating tree automaton
$A_I=(Q, \Gamma(\Delta, N), Q_i, \delta)$ with acceptance defined by
the existence of a finite run satisfying the transition formulas. It
follows the logic described in the overview from the main text.

We describe the states $Q$ and their transition formulas grouped by
functionality. We assume $k$ nonterminal symbols. Below we use
$i,j\in[k]$, $m\in\{1,2\}$, $u\in Q_1\sqcup Q_2$, $u_1\in Q_1$,
$u_2\in Q_2$, $U,V\in D$, $L, R, R', C, C', W\in D^k$,
$N_i, N_j\in N$, $f\in \Delta^r$, and
$t_1,\ldots,t_r\in T_\Delta(\{\rhs_{N_i} \,:\, N_i\in N\})$. We use an
underscore $"\_"$ to describe a default transition when no other case
matches.

\subsubsection*{\bf Mode 1} Reset to the top of the input tree.
\label{sec:mode-1} States are drawn from:
\begin{align*}
\mOne\coloneqq (D^k)^3\times \{\res, \guess\}.
\end{align*}

\begin{itemize}[leftmargin=12pt]
\setlength\itemsep{0.5em}
\item $\delta(\la L,C,R, \res\ra, \route) = (\down, \la L,C,R, \guess\ra)$
\item $\delta(\la L,C,R, \res\ra, \_) = (\up, \la L,C,R, \res\ra)$
\item $\delta(\la L,C,R,\guess\ra, \lhs_{N_i}) = (\stay, \la L,C,R,i,\guessRow\ra)$
\end{itemize}

\subsubsection*{\bf Mode 2} Guess next row of the recursion
table.\label{sec:bf-mode-3} States drawn from:
\begin{align*}
\mTwo\coloneqq (D^k)^3\times[k]\times \{\guessRow\}.
\end{align*}

\begin{itemize}[leftmargin=12pt]
\item $\delta(\la L,C,R,i,\guessRow\ra, \_) = \bigvee_{(C', R')\in\, \okay(R)}
    (\stay, \la L\cup C, C', C', R', i,\guessProd\ra)$
  \item[] with $\okay(R) \coloneqq \big\{(C',R') \in D^k\times D^k
      \,:\,
      C'\cup R'=R,\, C'\neq \{\emptyset\}^k\big\}$
\end{itemize}

\subsubsection*{\bf Mode 3} Guess the contributions of productions to
each row entry. States drawn from: \label{sec:bf-mode-3.1}
\begin{align*}
\mThree\coloneqq (D^k)^4\times[k]\times \{\guessProd\}.
\end{align*}

\begin{itemize}[leftmargin=12pt]
\setlength\itemsep{0.5em}
\item
  $\delta(\la L,C,W,R,i, \guessProd\ra, \lhs_{N_j}) =
   \bigvee_{\left\{(U, V) \,:\, U\cup V = C_j\right\}} (\leftt, \la
  L,U, \checkHit\ra) \wedge (\leftt, \la L,R_j, \checkMiss\ra) \wedge\\
  (\rightt, \la L, \la C_1,\ldots,C_{j-1},V,\ldots,C_k\ra, W, R,i, \guessProd\ra)$
\item $\delta(\la L, C, W, R,i, \guessProd\ra, \prodend) =\\
  \text{ if }\,\,\exists (N_j\in N\setminus N').\, C_j\neq \emptyset\\
  \text{ then}\,\,\fals \\ \text{ else}$
\item[]\hspace{0.2in}
  \begin{tabular}[t]{l}
    $\Big((\stay, \la L_i\cup W_i, \verifySolution\ra)$
    $\vee \big((\stay, \la L, W, R, \res\ra)\wedge  (\stay, \la L_i\cup W_i,
  \avoid\ra)\big)\Big)$ \\
  $\wedge\Big(\bigwedge_{N_j\in
    N'}\bigwedge_{u\in C_j}\bigvee_{(N_j,\alpha)\in P'}(\stay, \la L, \{u\},
  \alpha,\checkHit\ra)\Big)$ \\
  $\wedge\Big(\bigwedge_{N_j\in N'}\bigwedge_{(N_j,\alpha)\in P'}(\stay, \la L, R_j,
  \alpha,\checkMiss\ra)\Big)$
  \end{tabular}
\end{itemize}

\subsubsection*{\bf Mode 4} Check a set of values can be
reached. \label{sec:bf-mode-3.2} States drawn from:

\begin{align*}
  \mFour &\coloneqq \mFourA\cup \mFourB \\
  \mFourA &\coloneqq D^k\times D\times \{\checkHit\} \cup ((Q_1\sqcup
  Q_2)\times (D^k \times \{1,2\})) \\
  \mFourB &\coloneqq \big(D^k\times D\times
            \mathit{subterms}(P')\times \{\checkHit\}\big) 
                                                                \cup \big((Q_1\sqcup
  Q_2)\times (D^k \times \{1,2\}\times \mathit{subterms}(P'))\big),\\
  &\text{where }\mathit{subterms}(P') = \mathsmaller{\bigcup}_{(N_i,\,\alpha)\,\in\, P'\,} \mathit{subterms}(\alpha)
\end{align*}
Transitions for $\mFourA$:
\begin{itemize}[leftmargin=12pt]
\setlength\itemsep{0.5em}
\item $\delta(\la L,U, \checkHit\ra, \_) = \bigwedge_{u_1\in U\cap Q_1}
  (\stay, \la u_1, \la L, 1\ra \ra) \wedge \bigwedge_{u_2\in U\cap Q_2}
  (\stay, \la u_2, \la L, 2\ra \ra)$
\item $\delta(\la u, \la L, m\ra\ra, x) = \adorn(\la L, m\ra,
  \delta_{m}(u, x)),\quad x\in\Delta$
\item $\delta(\la u, \la L, m\ra\ra, \rhs_{N_i}) = \tru \quad\, \text{if
  } u\in L_i$
\item $\delta(\la u, \la L, m\ra\ra, \rhs_{N_i}) = \fals \quad \text{if
  } u\notin L_i$
\end{itemize}
Transitions for $\mFourB$:
\begin{itemize}[leftmargin=12pt]
\setlength\itemsep{0.5em}
\item $\delta(\la L,U,\alpha, \checkHit\ra, \_) = \bigwedge_{u_1\in U\cap Q_1}
  (\stay, \la u_1, \la L, 1, \alpha\ra \ra) \wedge \bigwedge_{u_2\in U\cap Q_2}
  (\stay, \la u_2, \la L, 2, \alpha\ra \ra)$
\item $\delta(\la u, \la L, m, f(t_1,\ldots,t_r)\ra\ra, \_) =
  \adorn'(\la L, m, t_1,\ldots, t_r\ra,
  \delta_{m}(u, f))$
\item $\delta(\la u, \la L, m, \rhs_{N_i}\ra\ra, \_) = \tru \quad\, \text{if
  } u\in L_i$
\item $\delta(\la u, \la L, m, \rhs_{N_i}\ra\ra, \_) = \fals \quad \text{if
  } u\notin L_i$
\end{itemize}

The notation $\adorn(s, \varphi)$ represents the transition formula
obtained by replacing each atom $(i, q)$ in the Boolean formula
$\varphi$ by the atom $(i, \la q, s\ra)$. The notation
$\adorn'(\la s, t_1,\ldots, t_r\ra, \varphi)$ represents the
transition formula obtained by replacing each atom $(i, q)$ in
$\varphi$ by the atom $(\stay, \la q, \la s, t_i\ra\ra)$\footnote{We
  assume directions in $\delta_1$ and $\delta_2$ are the numbers
  $1 (\leftt)$, $2 (\rightt)$, $3\ldots$, etc.}.

\subsubsection*{\bf Mode 5} Check values cannot be
reached. \label{sec:bf-mode-3.3} States drawn from:
\begin{align*}
  \mFive &\coloneqq \mFiveA\cup\mFiveB \\
  \mFiveA &\coloneqq D^k\times D\times \{\checkMiss\} \cup ((Q_1\sqcup
           Q_2)\times (D^k \times \{1,2\}\times \{\bot\}))\\
  \mFiveB &\coloneqq \big(D^k\times D\times \mathit{subterms}(P')\times \{\checkMiss\}\big)
  \cup \big((Q_1\sqcup
  Q_2)\times (D^k \times \{1,2\}\times\{\bot\}\times \mathit{subterms}(P'))\big)
\end{align*}
Transitions for $\mFiveA$:
\begin{itemize}[leftmargin=12pt]
\setlength\itemsep{0.5em}
\item $\delta(\la L, U, \checkMiss\ra, \_) =
  \bigwedge_{u\in U\cap Q_1}
  (\stay, \la u, \la L,1,\bot\ra\ra) \wedge \bigwedge_{u\in U\cap Q_2}
  (\stay, \la u, \la L,2,\bot\ra\ra)$
\item $\delta(\la u, \la L,m,\bot\ra\ra, x) = \adorn(\la
  L,m,\bot\ra, \dual(\delta_m(u, x))),\quad x\in\Delta$
\item $\delta(\la u, \la L,m,\bot\ra\ra, \rhs_{N_i}) = \tru \quad\,\text{if } u\notin L_i$
\item $\delta(\la u, \la L,m,\bot\ra\ra, \rhs_{N_i}) = \fals \quad
  \text{if } u\in L_i$
\end{itemize}
Transitions for $\mFiveB$:
\begin{itemize}[leftmargin=12pt]
\setlength\itemsep{0.5em}
\item $\delta(\la L, U, \alpha,\checkMiss\ra, \_) =
  \bigwedge_{u\in U\cap Q_1}
  (\stay, \la u, \la L,1,\bot,\alpha\ra\ra) \wedge \bigwedge_{u\in U\cap Q_2}
  (\stay, \la u, \la L,2,\bot,\alpha\ra\ra)$
\item $\delta(\la u, \la L,m,\bot,f(t_1,\ldots,t_r)\ra\ra, \_) =
  \adorn'(\la L,m,\bot,t_1,\ldots,t_r\ra, \dual(\delta_m(u, f)))$
\item $\delta(\la u, \la L,m,\bot,\rhs_{N_i}\ra\ra, \_) = \tru \quad\,\text{if } u\notin L_i$
\item $\delta(\la u, \la L,m,\bot,\rhs_{N_i}\ra\ra, \_) = \fals \quad
  \text{if } u\in L_i$
\end{itemize}

The notation $\dual(\varphi)$ represents a transition formula obtained
by replacing conjunction with disjunction and vice versa in the
(positive) Boolean formula $\varphi$.

\subsubsection*{\bf Mode 6} Check the first column is acceptable or
could still be acceptable later.\label{sec:bf-mode-2-1} States drawn
from:
\begin{align*}
\mSix\coloneqq D\times \{\verifySolution, \avoid\}.
\end{align*}

\begin{itemize}[leftmargin=12pt]
\setlength\itemsep{0.5em}
\item
  $\delta(\la U,\verifySolution\ra, \_) = \text{if}\,\, U\cap
  Q^i_1\neq\emptyset\,\, \text{then}\,\,\tru\,\,\text{else}\,\,\fals$
\item
  $\delta(\la U,\avoid\ra, \_) = \text{if}\,\, U\cap
  Q^i_2\neq\emptyset\,\, \text{then}\,\,\fals\,\,\text{else}\,\,\tru$
\end{itemize}

Any transition not described by the rules above has transition formula
$\fals$. The full set of states and the initial subset of states for
the automaton $A_I$ are
\begin{align*}
  Q &\coloneqq
      \mOne\sqcup\mTwo\sqcup\mThree\sqcup\mFour\sqcup\mFive\sqcup\mSix \qquad\text{and}\qquad
      Q_i\coloneqq\{\la \{\emptyset\}^k, \{\emptyset\}^k, \{Q_1\sqcup Q_2\}^k, \res\ra\}.
\end{align*}
By construction, we have the following.

\begin{lemma}
  $L(A_I) = \{t \in T_{\Gamma(\Delta, N)} \,:\,
  \solves(\extend(\dec(t),G'), I)\}$.
  \label{lemma:dsl-synthesis}
\end{lemma}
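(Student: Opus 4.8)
The plan is to reduce the lemma to the \emph{acceptability} criterion of \Cref{eqn:acceptable}. Writing $G \coloneqq \extend(\dec(t),G')$ and letting $H, Z_0$ be the monotone operator and its seed from the recursion-table definition, recall that $G$ solves $I$ exactly when the table $T(G)$ is acceptable. So it suffices to show that $A_I$ has an accepting run on $t$ if and only if $T(G)$ is acceptable, and I would prove the two implications by exhibiting a correspondence between accepting runs and correct row-by-row constructions of $T(G)$.

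The key invariant is that along any accepting run the successive passes through \textbf{mode 2} produce guessed rows $C^{(1)}, C^{(2)}, \ldots$ and accumulated vectors $L^{(1)}, L^{(2)}, \ldots$ with $L^{(r)} = H^{r-1}(Z_0)$ and $C^{(r)} = H^{r}(Z_0)\setminus H^{r-1}(Z_0)$ componentwise. The forward inclusion ($\supseteq$) is then the easy direction: assuming $G$ solves $I$, the true table is acceptable, and I build an accepting run by having $A_I$ guess precisely these true rows; every verification subrun succeeds because the guesses are correct, and, by construction, new values are always drawn from the ``remaining'' vector $R$, so distinct rows stay disjoint as required. The reverse inclusion ($\subseteq$) is where the verification machinery of \textbf{modes 3--5} must \emph{force} each guessed row to coincide with the true iterate.

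The heart of the argument is the per-nonterminal check of \textbf{mode 3}, which splits each target $C_j$ into a reachable part (checked in \textbf{mode 4}, $\checkHit$) and a forbidden part $R_j = (Q_1\sqcup Q_2)\setminus(L\cup C)_j$ (checked in \textbf{mode 5}, $\checkMiss$). I would establish, by structural induction on a right-hand side $\alpha\in T_\Delta(\{\rhs_{N_i}\})$, the semantic characterization that the \textbf{mode 4} simulation accepts from a state $u$ exactly when $u\in\llbracket\alpha\rrbracket^{A_m}_{L}$, where the convention $\delta'_m(u,\rhs_{N_i})=\tru \iff u\in L_i$ supplies the base case and the $\adorn$/$\adorn'$ substitution threads the assumption vector $L$ through the recursive descent into subterms. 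Taken conjunctively over all productions reached along the right spine together with the inlined base productions $P'$ at $\prodend$, the $\checkHit$ transitions force $C_j\subseteq H(L)_j\setminus L_j$, while the $\checkMiss$ transitions on $R_j$ force $C_j\supseteq H(L)_j\setminus L_j$; equality pins the row and yields the invariant.

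Finally I would verify that the acceptance logic implements \Cref{eqn:acceptable}. After a row is validated, the $\prodend$ branch offers a disjunction: either enter \textbf{mode 6} in state $\verifySolution$ and accept iff the column-$1$ total $(L\cup W)_1=H^{r}(Z_0)_1$ meets $F_1 = Q^i_1$ (a generalizing expression has appeared), or continue to the next row while the $\avoid$ state simultaneously demands $(L\cup W)_1\cap F_2=\emptyset$ with $F_2=Q^i_2$ (no non-generalizing expression yet). Since the cumulative column-$1$ value first meets $F_1$ at the same depth as the new-row value does, and since the $C'\neq\{\emptyset\}^k$ requirement of \textbf{mode 2} forbids stalling once the fixpoint stabilizes, the existence of an accepting run is equivalent to ``$F_1$ is hit no later than $F_2$ in column $1$,'' which is exactly acceptability. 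The main obstacle, and the step needing the most care, is the \textbf{mode 5} non-reachability check: showing a state $u$ is unreachable by \emph{every} expression generated from $\alpha$ is a universal condition whose exactness depends on $A_1$ and $A_2$ being \emph{nondeterministic}, so that dualizing their positive transition formulae with $\dual$ yields genuine complementation at each node. I would prove the companion characterization that the dualized simulation accepts from $u$ iff $u\notin\llbracket\alpha\rrbracket^{A_m}_{L}$ by the same induction, checking that the base case $u\notin L_i$ complements the assumption and that the universal branching of $A_I$ faithfully captures the ``for all runs'' of the complemented automaton.
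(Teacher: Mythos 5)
Your proposal is correct and takes essentially the same route as the paper's own proof: both directions proceed via the correspondence between accepting runs of $A_I$ and row-by-row constructions of the recursion table, with the invariant $L=H^{r-1}(Z_0)$ and $C=H^{r}(Z_0)\setminus H^{r-1}(Z_0)$ maintained across successive passes through the root and acceptability of the first column enforced by the $\verifySolution$/$\avoid$ checks. Your extra detail---the structural induction characterizing the \textbf{mode 4} and \textbf{mode 5} simulations as membership and non-membership in $\llbracket \alpha\rrbracket^{A_m}_{L}$, with $\dual$ supplying complementation on finite trees---merely fleshes out steps the paper's sketch leaves implicit.
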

\begin{proof}
  \Cref{sec:proof-crefl-synth}.
\end{proof}

\subsection{Proof of \Cref{lemma:dsl-synthesis}}
\label{sec:proof-crefl-synth}

We use the notion of a run for two-way alternating tree automata
from~\cite{cachat-two-way}. Though our presentation in the main text
used terms over ranked alphabets, here it is more convenient to use
the language of labeled trees. Let $W$ be a set of directions using
which our terms over $\Gamma(\Delta,N)$ can be described as finite
$\Gamma(\Delta,N)$-labeled $W$-trees $(T,l)$. Let
$A_I=(Q, \Gamma(\Delta,N), I, \delta)$ and
$A_1=(Q_1,\Delta,F_1,\delta_1)$ and $A_2=(Q_2,\Delta,F_2,\delta_2)$.

First we argue that
$L(A_I)\subseteq \{t \in T_{\Gamma(\Delta, N)} \,:\,
\solves(\extend(\dec(t),G'), I)\}$. Suppose $t\in L(A_I)$, and let the
corresponding $W$-tree for $t$ be $(T,l)$. Let $(T_r, r)$ be an
accepting run of $A_I$ on $t$, which in our case means that each
branch is finite, and where $T_r$ is a $(W^*\times Q)$-labeled $Z$
tree, for a suitable set $Z$, and $r : T_r \rightarrow (W^*\times Q)$
labels each node of the run with $(p,q)$, whose meaning is that the
automaton passes through input tree node $p$ in state $q$.

The main observation is that from the run $(T_r,r)$ we can construct
table $T(\extend(\dec(t), G'))$ whose column corresponding to the
starting nonterminal is acceptable. This follows straightforwardly
from the construction of $A_I$, though it is tedious. We sketch the
argument now. The rows of the recursion table are the vectors $C$ in
the state component labeling the run at specific nodes. Along any path
in $T_r$, at the $i$th node at which a \modeone state is entered
(starting from root of $T_r$), including at the beginning of the run
when $i=1$, we have that the $i$th row $T(\extend(\dec(t), G'))[i:]$
is precisely the vector $C$. It is the case also that the automaton
state at that node has a vector $L$ which is the componentwise union
over all previous vectors $L$ at earlier positions in the run. Let
$Z_0=\la\emptyset, \ldots, \emptyset\ra\in \{\emptyset\}^k$, where
$k=|N|$. We claim that the state component $C$ satisfies
$C=H^{i-1}(Z_0)\setminus H^{i-2}(Z_0)$ and $L$ satisfies
$L=H^{i-2}(Z_0)$, for $i\ge 2$, where set operations are extended to
vectors by acting componentwise. The property for $L$ is preserved by
transitions in \modetwo, where the new value of $L$ is the union of
$C$ with the previous value of $L$, i.e., $L$ gets the value
$H^{i-2}(Z_0)\cup (H^{i-1}(Z_0)\setminus
H^{i-2}(Z_0))=H^{i-1}(Z_0)$. The property for $C$ is preserved by
transitions of \modethree, \modefour, and \modefive. In \modethree, it
is ensured that $C$ is drawn from remaining values in
$(\{Q_1\sqcup Q_2\}^k\setminus H^{i-2}(Z_0)$ and also that all of $C$
is in fact reachable, as the run cannot pass through the node labeled
$\prodend$ without having verified that all of $C$ is computable
either by existing productions in the tree $t$ or by productions from
$G'$, which are memorized in the states. In \modethree, it is also
ensured that the first column of $T(\extend(\dec(t), G'))$ is
acceptable by checking that the current row $C$ (copied in $W$) has as
its first component, either a set of values containing an initial
state of $A_1$, or if not, a set of values disjoint from the initial
states of $A_2$. In \modefour and \modefive, it is ensured that the
entries of $C$ are in fact generated by the productions available and
that no values excluded from $C$ can be generated (thus ensuring that
the guess for $C$ in fact contains all of
$H^{i-1}(Z_0)\setminus H^{i-1}(Z_0)$ rather than a strict
subset. Finally, since $T_r$ is finite and satisfies the transitions,
any time it returns to the root of $t$ we know that no
non-generalizing expression has been found, and furthermore, the run
cannot avoid ending in \modesix in a $\verifySolution$ state, which
ensures that the first column of $T(\extend(\dec(t), G'))$ is
acceptable.

Now we argue that
$L(A_I)\supseteq \{t \in T_{\Gamma(\Delta, N)} \,:\,
\solves(\extend(\dec(t),G'), I)\}$. Suppose $t$ is such that
$\solves(\extend(\dec(t),G'), I)\}$ holds. Let $N_i\in N$ be the
topmost nonterminal in the productions for $t$, which is the starting
nonterminal for $\dec(t)$ and also $\extend(\dec(t),G')$. Thus there
exists $e\in L(N_i)$ within $\extend(\dec(t),G')$ such that
$\solves(e,I)$ and for all non-generalizing $e'\in L(N_i)$ within
$\extend(\dec(t),G')$, we have
$\depth(e,\extend(\dec(t),G')) \le \depth(e', \extend(\dec(t), G'))$.
We can construct a run $r_t$ by making choices according to the
recursion table $T(\extend(\dec(t),G'))$. The run will only construct
the table up to stage $i$, where $i$ is the depth of a shallowest
parse tree $e_p$ for $e$ within $\extend(\dec(t),G')$, which by
assumption is at least as shallow as any parse tree for
non-generalizing $e'\in L(N_i)$ within $\extend(\dec(t),G')$. Thus the
run can satisfy the checks of \modesix within $i$ passes through a the
root node of $t$.

\section{Section 7}
\label{sec:adequate-macros-app}

\subsection{Adequate DSL Synthesis with Macros}
\label{sec:adeq-dsl-synth}

We describe an automaton $A_X$ which accepts an input tree if it
encodes a macro grammar which contains a solution for the set of
examples $X$. Because a macro can copy its input parameters, which
may, under outermost derivations, be arbitrary terms involving other
macros, our automaton will keep track of sets of states which the
parameters may evaluate to.

When evaluating a macro application, the automaton verifies that
\emph{sets} of states $Q'\subseteq Q_1$ are achievable, with $Q_1$ the
states for $A_1$. For example, when reading a macro $F(t_1, t_2)$,
$A_X$ nondeterministically guesses, for each parameter $i$, a subset
$Q_i'\subseteq Q_1$ that is achieved by $t_i$. It verifies these
guesses by descending into the arguments and recursively checking that
$t_i$ can evaluate to $Q_i'$, while simultaneously passing values
$Q_i'$ to all the productions of $F$. It then verifies that the $F$
productions together can produce the needed values $Q'$ given the
assumption about the arguments.

\textbf{Construction.} Suppose $A_{1}=(Q_{1},\Delta,I_1,\delta_1)$ is
an automaton accepting all expressions which satisfy the examples $X$.
We define a two-way alternating tree automaton
$A_X=(Q, \Gamma(\Delta, N), I, \delta)$. The automaton operates in two
modes, as before. In \textbf{mode 1}, it walks to the top spine of the
input tree in search of productions for a specific nonterminal. Having
found a production, it enters \textbf{mode 2}, in which it moves down
into the term corresponding to the right-hand side of the production,
simulating $A_1$ as it goes. Because nonterminals can be nested, these
two modes can be entered in a single transition.

Below we use $N_i\in N^{=0}$, $F\in N^{>0}$, $Y,Y'\in N$ with
$Y\neq N_i, Y'\neq F$, $q_i\in I_1$, $C\subseteq Q_1$,
$x, f\in\Delta$, and
$t_1,\ldots,t_r\in T_\Delta(\{\rhs_{N_i} \,:\, N_i\in N^{=0}\})$. We
use an underscore $"\_"$ to describe a default transition when no
other case matches. Let $a^*=\max(\{\arity(X) \,:\, X\in N\})$.

\subsubsection*{\bf Mode 1} Find productions. States are drawn from
 \label{sec:mode-1}
\begin{align*}
  \mOne\coloneqq \big(\Pp(Q_1)\times \{\start\}\big) \cup
  \big(\Pp(Q_1)\times N\big) \cup_{i\in [a^*]} \big( \Pp(Q_1)^i\times
  \Pp(Q_1)\times N^{=i} \big).
\end{align*}

\begin{align*}
  \delta(\la C, \start \ra, \route) = &\,\,(\down, \la C, \start\ra) \\
  \delta(\la C, \start \ra, \lhs_{N_i}) = &\,\,(\stay, \la C, N_i\ra) \\
  \delta(\la C,N_i\ra, \lhs_{Y}) = &\vee_{\left\{(C_1, C_2) : C_1\cup C_2=C\right\}}(\up, \la C_1,N_i\ra)\vee
                                     (\rightt, \la C_2, N_i\ra)\\
  \delta(\la C,N_i\ra, \lhs_{N_i}) = &\vee_{\left\{(C_1,C_2,C_3,C_4) :
                                       C_1\cup C_2\cup C_3\cup C_4=C\right\}}(\up, \la C_1,N_i\ra)\wedge (\leftt,
                                       C_2)
                                       \wedge (\rightt, \la C_3,
                                       N_i\ra)\\ & \wedge_{q\in C_4}\left(\vee_{(N_i, \,\alpha)\in P'}(\stay, \la
                                                   \{q\}, \alpha\ra)\right) \\
  \delta(\la C,N_i\ra, \_) = & \vee_{\{(C_1, C_2) : C_1\cup
                               C_2=C\}}(\up, \la
                               C_1,N_i\ra)\wedge_{q\in C_2}
                               \left(\vee_{(N_i,\, \alpha)\in P'}(\stay, \la \{q\},
                               \alpha\ra)\right)\\
  \delta(\la C_1,\ldots,C_k, C, F\ra, \lhs_{Y'}) = &\vee_{\left\{(C_1', C_2') : C_1'\cup C_2'=C\right\}}(\up, \la C_1,\ldots,C_k,C_1',F\ra)\wedge
                                                     (\rightt, \la C_1,\ldots,C_k,C_2', F\ra)\\
  \delta(\la C_1,\ldots,C_k, C, F\ra, \lhs_{F}) = &\vee_{\left\{(C_1',
                                                    C_2', C_3') :
                                                    C_1'\cup C_2'\cup
                                                    C_3'=C\right\}}\\ &(\up,
                                                                        \la
                                                                        C_1,\ldots,C_k,C_1',F\ra)\\
                                      & \wedge
                                        (\leftt, \la C_1,\ldots,C_k,C_2',F\ra)\wedge
                                        (\rightt, \la
                                        C_1,\ldots,C_k,C_3', F\ra)\\
  \delta(\la C_1,\ldots,C_k,C,F\ra, \_) = & (\up, \la C_1,\ldots,C_k,
                                        C,F\ra) \text{       // base language
                                            is macro-less}
\end{align*}

\subsubsection*{\bf Mode 2} Read productions. States drawn from \label{sec:bf-mode-2-1}
\begin{align*}
  &\mTwo\coloneqq \Pp(Q_1)\cup \big(\Pp(Q_1)\times \mathit{subterms}(P')\big),
  \\ &\text{ where }\mathit{subterms}(P') = \mathsmaller{\bigcup}_{(N_i,\, \alpha)\,\in\, P'\,} \mathit{subterms}(\alpha).
\end{align*}

\begin{align*}
  \delta(C, x) = &\wedge_{q\in C}\delta_1(q,x)\\
  \delta(\la C, f(t_1,\ldots, t_r)\ra, \_) = &\,\adorn(t_1,\ldots,t_r,
  \wedge_{q\in C}\,\delta_1(q, f))\\
  \delta(C, \rhs_{N_i}) = &\,(\stay, \la C, N_i\ra) \\
  \delta(\la C, \rhs_{N_i}\ra, \_) = &\,\vee_{\{(C_1,C_2) : C_1\cup
                                       C_2=C\}}(\stay, \la C_1,
                                       N_i\ra)\wedge
                                       \left(\wedge_{q\in C_2}\left(\vee_{(N_i,\, \alpha)\in P'}(\stay, \la \{q\},
                                       \alpha\ra)\right)\right)\\
  \delta(C, \rhs_{F}) = &\, \vee_{C_1,\ldots, C_k\in \Pp(Q_1)} \left(\wedge_{i\in [k]} (i,
                          C_i)\right)\wedge (\up, \la C_1,\ldots,C_k,C,F\ra)
\end{align*}

The notation $\adorn(t_1,\ldots, t_r, \varphi)$ represents a
transition formula obtained by replacing each atom of the form
$(i, q)$ in the Boolean formula $\varphi$ by the atom
$(\stay, \la q, t_i\ra)$.

Any transition not described by the rules above has transition formula
$\fals$. The full set of states and the initial states for the
automaton are
\begin{align*}
  Q \coloneqq \mOne\cup\mTwo, \qquad I=\{\la \{q_i\}, \start\ra \,:\, q_i\in I_1\}\subseteq \mOne.
\end{align*}
We have the following by construction.

\begin{lemma}
  $L(A_X) = \big\{t \in T_{\Gamma(\Delta, N)} \,:\,
  \solves(\extend(\dec(t),G'), X)\big\}$.
  \label{lemma:grammar-synthesis-app}
\end{lemma}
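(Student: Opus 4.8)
The plan is to follow the two-inclusion structure used to prove \Cref{lemma:grammar-synthesis} (\Cref{sec:proof-grammar-synthesis}), working with runs $(T_r,r)$ of the two-way alternating automaton over $\Gamma(\Delta,N)$-labeled trees, and to adapt that argument to the set-valued bookkeeping that $A_X$ performs to accommodate macros. Write $\llbracket s\rrbracket^{A_1}\subseteq Q_1$ for the set of states from which $A_1$ has an accepting run on a ground term $s$. The heart of the proof is a single invariant fixing the meaning of the $\Pp(Q_1)$-components carried in the states of $A_X$, relative to $\extend(\dec(t),G')$: a \modeone state $\la C,N_i\ra$ (for $N_i\in N^{=0}$) admits an accepting sub-run exactly when each $q\in C$ lies in $\llbracket e_q\rrbracket^{A_1}$ for some expression $e_q$ derivable from $N_i$, with a possibly different $e_q$ for each $q$; and a macro state $\la C_1,\ldots,C_k,C,F\ra$ admits one exactly when, under the hypothesis that each occurrence of the $j$-th parameter of $F$ may independently realize any state of $C_j$, the productions for $F$ collectively realize every state of $C$. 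I would first make these invariants precise and prove them by induction on the structure of outermost derivations, then derive the two inclusions as corollaries.

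First I would show $L(A_X)\subseteq\{t:\solves(\extend(\dec(t),G'),X)\}$. Given an accepting run $(T_r,r)$, I would read off an expression $e\in L(\extend(\dec(t),G'))$ accepted by $A_1$, following $\route$- and $\lhs$-rooted branches as in \Cref{sec:proof-grammar-synthesis}. The new ingredient is that wherever the run reaches a macro application $\rhs_F$ and guesses argument sets $C_1,\ldots,C_k$, universally branching to descend into the $j$-th argument with target $C_j$ and to carry $C_j$ to the productions of $F$, the extracted derivation must expand the single encoded argument into as many independent copies as there are occurrences of parameter $j$ in the chosen body, letting each copy realize whichever state of $C_j$ the body requires of it. Because we use \emph{outermost} derivations, distinct occurrences of a substituted parameter are genuinely independent subterms and may be expanded separately (cf.\ \Cref{ex:outermost}), so the disjunctive coverage permitted by the \modeone nonterminal transitions corresponds exactly to these independent expansions, while the conjunctive simulation $\wedge_{q\in C}\delta_1(q,x)$ at $\Delta$-symbols corresponds to the structure shared by all copies. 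Assembling these recursively yields an outermost derivation of a ground $e$, and the simulated $\delta_1$-transitions certify that $A_1$ accepts it.

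For the reverse inclusion I would begin from an $e\in L(\extend(\dec(t),G'))$ with $\solves(e,X)$, fix a witnessing outermost derivation together with an accepting run of $A_1$ on $e$, and build an accepting run of $A_X$ on $t$. At each macro application in the derivation I set the guessed set $C_j$ to be precisely the set of $A_1$-states realized by the independently derived copies of the $j$-th argument, and I route the \modeone navigation up and down the production spine to locate the rules actually used, splitting each target set across the up/left/right branches according to which copies use which production. Since every state demanded of an argument is witnessed by an actual copy occurring in $e$, each universal branch can be satisfied, and the leaf conjunctions $\wedge_{q\in C}\delta_1(q,x)$ hold by the corresponding transitions of the fixed $A_1$-run.

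The main obstacle, and the genuine departure from \Cref{sec:proof-grammar-synthesis}, is proving that tracking \emph{sets} of $Q_1$-states for macro arguments is simultaneously sound and complete against outermost-derivation semantics. Soundness must rule out runs in which a guessed $C_j$ claims a state that no copy of the argument can realize; this is where the construction's choice to verify each $q\in C_j$ through a separate expansion, while conjoining the shared $\Delta$-structure, must be shown to match the independence of parameter copies. Completeness must show conversely that the finitely many states realized by the copies of an argument in any concrete derivation collapse into a single subset $C_j\subseteq Q_1$, which holds because $Q_1$ is finite and the copies contribute independently. Once this invariant is in hand, the remaining work—spine navigation in \modeone and the $\adorn$-driven simulation of $\delta_1$ in \modetwo—is routine and parallels the non-macro proof almost verbatim.
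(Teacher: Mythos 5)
Your proposal is correct and follows essentially the same route as the paper: the paper asserts \Cref{lemma:grammar-synthesis-app} ``by construction,'' with the intended argument being exactly the two-inclusion, run-based proof of \Cref{sec:proof-grammar-synthesis} adapted via the set-valued bookkeeping, and your invariant for the $\Pp(Q_1)$-components---each $q\in C$ realizable by a possibly distinct expansion, with parameter copies made genuinely independent by the outermost-derivation discipline (the paper's own $H(G)\Longrightarrow h(G,G)\Longrightarrow h(a,b)$ example)---is precisely the property the construction is designed around. If anything, your plan is more explicit than the paper, which leaves the macro adaptation of the run-extraction and run-construction steps entirely to the reader.
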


The rest of the proof is similar to that of adequate DSL synthesis for
grammars and gives us:

\begin{theorem}
  Adequate DSL synthesis with macros is decidable for any language
  whose semantics over fixed structures can be evaluated by tree
  automata. Furthermore, the set of solutions corresponds to a regular
  set of trees.  \label{thm:adequate-macros-app}
\end{theorem}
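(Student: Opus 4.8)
The plan is to reuse the assembly from the plain-grammar case (\Cref{adequate-final-steps}) essentially verbatim, substituting the macro-aware instance automaton of \Cref{lemma:grammar-synthesis-app} for the one used there. Concretely, for each learning instance $I_p=(X_p,Y_p)$ I would set $E_p\coloneqq X_p\cup Y_p$ and invoke \Cref{lemma:grammar-synthesis-app} with the example set $E_p$ to obtain a two-way alternating tree automaton $A_{E_p}$ over $\Gamma(\Delta,N)$ with
\[
  L(A_{E_p}) = \big\{\, t \in T_{\Gamma(\Delta, N)} \,:\, \solves(\extend(\dec(t), G'), E_p) \,\big\},
\]
where $\solves(\extend(\dec(t),G'),E_p)$ is the grammar-level notion meaning that the macro grammar encoded by $t$ contains an expression consistent with every example in $E_p=X_p\cup Y_p$. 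Thus $t\in L(A_{E_p})$ holds exactly when the encoded macro grammar contains a solution for $I_p$, and membership in the intersection $\bigcap_{p} L(A_{E_p})$ over all $p\in[l]$ is precisely adequacy (\Cref{adequacy}) of $\extend(\dec(t),G')$ for $I_1,\ldots,I_l$.

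Next I would form the product automaton exactly as in \Cref{adequate-final-steps}, namely
\[
  A \coloneqq A(\metagram) \cap \mathsf{convert}\Big( \mathop{\mathlarger{\mathlarger{\cap}}}_{p\in[l]} A_{E_p} \Big),
\]
where $\mathsf{convert}$ turns the two-way alternating product into an equivalent top-down nondeterministic automaton at an exponential cost~\cite{vardi-two-way, cachat-two-way}, and $A(\metagram)$ accepts the grammar trees satisfying the meta-grammar. By construction together with \Cref{lemma:grammar-synthesis-app}, the language $L(A)$ is exactly the set of $t\in L(\metagram)$ for which $\extend(\dec(t),G')$ is adequate, i.e.\ precisely the solution set of adequate DSL synthesis with macros. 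Decidability then follows from a standard emptiness test on $A$, which runs in time $\poly(|A|)$, and any witness tree $t\in L(A)$ yields the synthesized macro grammar $\dec(t)$. Since the solution set is the language of the tree automaton $A$, it is a regular set of trees, which establishes the second claim.

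The only genuinely new content is hidden inside \Cref{lemma:grammar-synthesis-app}, which I am assuming; the product-and-emptiness wrap-up above is routine and carries over word for word from the regular-grammar case. I therefore expect the main obstacle to lie in the correctness of $A_{E_p}$ itself, that is, in faithfully simulating the instance automaton over \emph{macro} grammars under outermost derivations. The subtlety is that a macro may duplicate a parameter, so a single argument position can expand into several \emph{distinct} subterms, as in $F(G)\Longrightarrow f(G,G)\Longrightarrow f(a,b)$; a naive simulation that guesses one reached state of the instance automaton per argument would be unsound. The construction addresses this by carrying \emph{subsets} $C\subseteq Q_1$ of instance-automaton states in its control state and threading these subsets through all productions of each macro symbol, so that every subterm an argument can expand to is accounted for. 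This tracking of subsets rather than single states is exactly what forces the exponential increase in the number of states relative to the plain-grammar automaton. Once that invariant is verified in \Cref{lemma:grammar-synthesis-app}, the remaining steps are identical to \Cref{adequate-final-steps}.
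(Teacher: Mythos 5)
Your proposal is correct and follows essentially the same route as the paper: it instantiates \Cref{lemma:grammar-synthesis-app} at the example sets $E_p = X_p\cup Y_p$ and then reuses the product-with-$A(\metagram)$, two-way-to-top-down conversion, and emptiness wrap-up of \Cref{adequate-final-steps} verbatim, which is exactly how the paper derives \Cref{thm:adequate-macros-app}. Your diagnosis of the single genuinely new ingredient---tracking subsets $C\subseteq Q_1$ of instance-automaton states to soundly account for macro parameters that duplicate into distinct subterms under outermost derivations, at an exponential cost in states---matches the paper's construction of $A_X$ precisely.
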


The size of the two-way automaton $A_X$ from this section is
exponential in the size of $A_1$, giving us the following, similar to
before.

\begin{corollary}
  For the languages covered in \Cref{thm:adequate-macros-app},
  adequate DSL synthesis with macros is decidable in time
  $\emph{poly}(|\metagram|)\cdot \exp(l\cdot \exp(m)),$ where $l$ is
  the number of instances and $m$ is the maximum size over all
  instance automata.
\end{corollary}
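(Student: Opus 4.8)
The plan is to trace the running time through the decision pipeline already established for \Cref{thm:adequate-macros-app}, bounding the cost of each stage as a function of $l$ and $m$. Recall that the procedure (i) builds, for each instance $I_p$, the two-way alternating automaton $A_{X_p}$ of \Cref{sec:adeq-dsl-synth} accepting exactly the grammar trees whose decoded macro grammar solves $I_p$ (\Cref{lemma:grammar-synthesis-app}); (ii) forms the intersection $\bigcap_{p\in[l]} A_{X_p}$; (iii) applies $\mathsf{convert}$ to obtain an equivalent top-down nondeterministic automaton; (iv) takes the product with $A(\metagram)$; and (v) tests emptiness, synthesizing a witness $\dec(t)$ when nonempty. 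I would bound the automaton size after each stage and then invoke the polynomial-time emptiness test on the final automaton.

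First I would fix $m$ to be the maximum size of the instance automaton $A_1=\bigcap_{M\in X_p\cup Y_p}A_M$ over all $p$, so that $|Q_1|\le m$. The key size estimate is that $A_{X_p}$ is \emph{singly} exponential in $m$: every state in $\mOne$ and $\mTwo$ carries one or more subsets of $Q_1$ (tracking the sets of states reachable by macro arguments), and since the maximum arity $a^*$ is a fixed constant of the signature, the state count is $O\!\big(2^{(a^*+1)|Q_1|}\cdot(|N|+|\mathit{subterms}(P')|)\big)=\exp(m)$, matching the remark preceding the corollary. Next, because these are \emph{alternating} automata, their intersection is formed by a universal branch out of a fresh root, so $\big|\bigcap_{p}A_{X_p}\big|=O\!\big(\sum_p|A_{X_p}|\big)=O(l\cdot\exp(m))$ — additive, not multiplicative, in $l$. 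Applying $\mathsf{convert}$ removes alternation and two-wayness at the cost of a single exponential in the input size, yielding a top-down nondeterministic automaton of size $\exp\!\big(O(l\cdot\exp(m))\big)=\exp(l\cdot\exp(m))$. Taking the product with $A(\metagram)$ multiplies the size by $\poly(|\metagram|)$, and the standard emptiness test runs in time polynomial in the size of this final automaton, which is absorbed into the stated bound $\poly(|\metagram|)\cdot\exp(l\cdot\exp(m))$.

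The delicate point — and the one I would be most careful about — is placing each exponential correctly so that the exponent reads $l\cdot\exp(m)$ rather than $\exp(l\cdot m)$ or a double exponential. This hinges on two facts. The macro-specific subset tracking forces $|A_{X_p}|=\exp(m)$, in contrast to the polynomial-size two-way automaton of the non-macro construction in \Cref{sec:decid-gramm-synth}, whose analogous corollary gives only $\exp(l\cdot m)$; one must verify that this is a \emph{single} exponential, i.e. that $a^*$ is treated as a signature constant rather than as part of the input size. The complementary fact is that alternation makes intersection additive in $l$, so the $l$ instances contribute a factor $l$ inside — not outside — the one exponential introduced by $\mathsf{convert}$. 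I would therefore present the argument as a short chain of size inequalities along stages (i)–(v), citing the single-exponential cost of $\mathsf{convert}$ from \cite{vardi-two-way, cachat-two-way} and the polynomial-time emptiness test, with the only genuine verification being the $\exp(m)$ state count for $A_{X_p}$ read off directly from the state sets $\mOne$ and $\mTwo$.
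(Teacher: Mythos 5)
Your proposal is correct and takes essentially the same route as the paper, which obtains this corollary by observing that the subset-tracking needed for macros makes each $A_{X_p}$ of size $\exp(m)$ (in contrast to the polynomial-size automata of the macro-free case) and then reusing the pipeline of \Cref{adequate-final-steps}: additive intersection of the $l$ alternating automata, a single-exponential $\mathsf{convert}$, product with $A(\metagram)$, and polynomial-time emptiness. Your placement of the exponentials --- the factor $l$ landing inside the exponent introduced by $\mathsf{convert}$, with the maximum arity $a^*$ treated as a parameter-level constant --- is exactly what yields the stated bound $\poly(|\metagram|)\cdot\exp(l\cdot\exp(m))$.
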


\subsection{DSL synthesis with macros}
\label{sec:section-7-dsl}

The \emph{macro depth} of a tree $t\in T_{\Delta\cup N}(\Nat)$ is the
maximum number of macro symbols encountered along any root-to-leaf
path in $t$. For example, $f(a)$, $N(a)$, and $N(1, H(2))$ have macro
depths $0$, $1$, and $2$, respectively. A macro grammar \emph{has
  macro depth bounded by $b$} if, for all of its rules $(N,\alpha)$,
$\alpha$ has macro depth no more than $b$. Note that given a macro
grammar $G$ and bound $b\in\Nat$, we can easily verify that $G$ has
macro depth bounded by $b$.

A meta grammar $\metagram$ has macro depth bounded by $b\in\Nat$ if,
for every $G\in L(\metagram)$, $G$ has macro depth bounded by
$b$. Given a meta grammar $\metagram$, we can compute the minimum $b$
(if it exists) for which it has macro depth bounded by $b$. We can do
it by checking for ``cycles'' in the rules corresponding to right-hand
sides of object grammar rules which contain macro symbols along
them. We can keep track of the maximum macro depth possible while
looking for cycles.  If such a cycle is found then there is no bound
$b$ and if not we output the maximum macro depth as the bound $b$.

The proof of \Cref{thm:order-macro-theorem} is similar to
\Cref{sec:decidability}. We construct an automaton $A_I$ which accepts
an input tree if it encodes a macro grammar which solves an instance
$I=(X,Y)$ for parse tree depth ordering. Because macro grammars allow
applications of macros to be nested, and because this problem requires
paying attention to the depth of expressions, the automaton $A_I$ uses
many more states than the construction from \Cref{sec:decidability} in
order to handle nested macros. We describe a construction of $A_I$,
which is parameterized by a depth bound $b\in\Nat$, such that $A_I$
operates as expected over grammars with macro nesting depth bounded by
$b$.

The construction shares much of the structure related to recursion
tables, but in building up each row one after the other, the automaton
must keep fine-grained information about a bounded number of previous
rows to handle boundedly-nested macros. Additionally, the entries of
the recursion table corresponding to macro symbols indicate
\emph{functions} from tuples of value sets to a set of values achieved
by the macro symbol with a given depth budget. In fact, the entries
for macro symbols indicate functions from tuples of sets of values to
the \emph{new values} that are achievable given higher depth budget
(as in the construction of \Cref{sec:decidability}).

The main complication, as mentioned above, is that to keep track of
depth in the presence of macros, we need the automaton to make a
distinction between values achieved at different previous depths, as
opposed to lumping them together as a set of values achievable in
depth less than some bound. The way this can be handled is by allowing
the automaton to encode in its states several previous rows of the
recursion table up to each individual depth. This is sufficient to
implement the same protocol for nondeterministically guessing and
verifying the rows of the recursion table for a macro grammar.

\subsubsection{Construction of $A_I$
}\label{sec:dsl-construction-macro} As before we
assume non-deterministic top-down instance automata
$A_{1}=(Q_{1},\Delta,Q^i_1,\delta_1)$ and
$A_{2}=(Q_{2},\Delta,Q^i_2,\delta_{2})$. The automaton $A_I$ will
guess the construction of tables similar to the recursion tables of
\Cref{sec:decidability}. A difference is that the entries for
\emph{non-macro nonterminals} range over the powerset
$\mathcal{P}(Q_1\sqcup Q_2)$ and the entries for \emph{macro
  nonterminals} range over functions over these sets. Let us use
$D\coloneqq\mathcal{P}(Q_1\sqcup Q_2)$ as shorthand in the
remainder. Fix a macro nesting depth bound $b\in\Nat$. Consider a
macro symbol $F^1$. Entries for its column in the table range over
functions $[D\rightarrow D]$. For a symbol $K^2$ they range over
$[D^2\rightarrow D]$, etc. For any macro grammar $G$, the intuition is
that if $T(G)[i,j] = f\in [D\rightarrow D]$, then, provided an
argument which can generate values $v\in D$, the nonterminal $N_j$ can
generate terms evaluating to values $f(v)$ using derivations of depth
$i$ and no smaller.

We now define the two-way alternating tree automaton
$A_I=(Q, \Gamma(\Delta, N), Q_i, \delta)$ with acceptance defined by
the existence of a finite run satisfying the transition formulas. We
describe the states $Q$ and their transition formulas grouped by
functionality. The transitions are similar to those of
\Cref{sec:decidability} and are organized similarly as well. We note
salient differences alongside the transitions.  We assume there are
$k$ nonterminal symbols. Below we use $i,j\in[k]$, $m\in\{1,2\}$,
$u\in Q_1\sqcup Q_2$, $u_1\in Q_1$, $u_2\in Q_2$, $U,V\in D$,
$L, R, R', C, C', W\in D^k$, $N_i, N_j\in N$, $f\in \Delta^r$, and
$t_1,\ldots,t_r\in T_\Delta(\{\rhs_{N_i} \,:\, N_i\in N\})$. We use an
underscore $"\_"$ to describe a default transition when no other case
matches.

We assume a macro nesting depth bound of $b\in\Nat$. We assume table
rows are drawn from a set $\Rows$. We assume that the entries of rows
which correspond to macro symbols contain vectors representing the
values the functions take for some fixed ordering of the elements of
their finite input sets. For instance, for a binary macro symbol
$G(1,2)$, its entries in the table correspond to functions of type
$D\times D\rightarrow D$, but are manipulated in the construction as
vectors $v\in D^l$, where $l=|D\times D|$. We write $\argsType$ to
mean the domain of vectors of arguments for macro symbols, e.g. it
includes $D\times D$ for binary macro symbols, etc. We write
$\dropFirst$ to mean dropping the earliest (first) row in a sequence
of rows. We write $\emptyRow$ to mean the row whose entries are all
empty; for a vector corresponding to a function this means a vector of
empty sets. We write $\emptyset$ to mean either an empty set or a
vector of empty sets, depending on the context. We write $\fullRow$ to
mean the row whose entries are all full; for sets this is
$Q_1\sqcup Q_2$ and for vectors over such sets which model functions
this is $(Q_1\sqcup Q_2)^l$, for an appropriate vector length $l$
depending on the number of parameters for the macro symbol at a
particular index in the row. In some cases we use $\cup$ to mean both
union of sets and componentwise union for vectors over sets, depending
on context (e.g. first bullet of \Cref{sec:bf-mode-3.1-macro}).

\subsubsection*{\bf Mode 1} Reset to the top of the input tree.
\label{sec:mode-1-macro} States are drawn from:
\begin{align*}
\mOne\coloneqq \Rows^{b+3}\times \{\res, \guess\}.
\end{align*}

\begin{itemize}[leftmargin=12pt]
\setlength\itemsep{0.5em}
\item $\delta(\la \rows,L,C,R, \res\ra, \route) = (\down, \la \rows,L,C,R, \guess\ra)$
\item $\delta(\la \rows,L,C,R, \res\ra, \_) = (\up, \la \rows,L,C,R, \res\ra)$
\item $\delta(\la \rows,L,C,R,\guess\ra, \lhs_{N_i}) = (\stay, \la \rows,L,C,R,i,\guessRow\ra)$
\end{itemize}

\subsubsection*{\bf Mode 2} Guess next row of the recursion
table.\label{sec:bf-mode-3-macro} States drawn from:
\begin{align*}
\mTwo\coloneqq \Rows^{b+3}\times[k]\times \{\guessRow\}.
\end{align*}

\begin{itemize}[leftmargin=12pt]
\item[] // Forget the deepest row ($\dropFirst$) of the table as we
  add a new one
\item $\delta(\la \rows,L,C,R,i,\guessRow\ra, \_) = \bigvee_{(C', R')\in\, \okay(R)}
    (\stay, \la \dropFirst(\rows), L, L\cup C, C', C', R', i,\guessProd\ra)$
  \item[] with $\okay(R) \coloneqq \big\{(C',R') \in \Rows\times\Rows
      \,:\,
      C'\cup R'=R,\, C'\neq \emptyRow\big\}$
\end{itemize}

\subsubsection*{\bf Mode 3} Guess the contributions of productions to
each row entry. States drawn from: \label{sec:bf-mode-3.1-macro}
\begin{align*}
\mThree\coloneqq \Rows^{b+4}\times[k]\times \{\guessProd\}.
\end{align*}

\begin{itemize}[leftmargin=12pt]
\setlength\itemsep{0.5em}
\item
  $\delta(\la \rows, L,C,W,R,i, \guessProd\ra, \lhs_{N_j^{0}}) =
   \bigvee_{\left\{(U, V) \,:\, U\cup V = C_j\right\}} (\leftt, \la
  \rows, L, \noArgs, U, \checkHit\ra) \wedge\\ (\leftt, \la \rows, L,R_j, \checkMiss\ra) \wedge\\
  (\rightt, \la\rows, L, \la C_1,\ldots,C_{j-1},V,\ldots,C_k\ra, W, R,i, \guessProd\ra)$
\item
  $\text{// Compute the macro arguments using $\args(j,\argIdx)$ to give to the child state}\\$
  $\delta(\la \rows, L,C,W,R,i, \guessProd\ra, \lhs_{N_j^{>0}}) =
  \bigvee_{\left\{(U, V) \,:\, U\cup V = C_j\right\}}
  \\\left(\bigwedge_{\argIdx\in [|C_j|]}(\leftt, \la
  \rows, L, \args(j,\argIdx), U, \checkHit\ra)\right) \wedge (\leftt, \la \rows, L,R_j, \checkMiss\ra) \wedge\\
  (\rightt, \la\rows, L, \la C_1,\ldots,C_{j-1},V,\ldots,C_k\ra, W, R,i, \guessProd\ra)$
\item $\delta(\la \rows, L, C, W, R,i, \guessProd\ra, \prodend) =\\
  \text{ if }\,\,\exists (N_j\in N\setminus N').\, C_j\neq \emptyset\\
  \text{ then}\,\,\fals \\ \text{ else} \,\,// \,\,\text{Note: }\, i \text{ indexes the starting nonterminal}$
\item[]\hspace{0.2in}
  \begin{tabular}[t]{l}
    $\Big((\stay, \la L_i\cup W_i, \verifySolution\ra)$
    $\vee \big((\stay, \la \rows, L, W, R, \res\ra)\wedge  (\stay, \la L_i\cup W_i,
  \avoid\ra)\big)\Big)$ \\
  $\wedge\Big(\bigwedge_{N_j\in
    (N')^0}\bigwedge_{u\in C_j}\bigvee_{(N_j,\alpha)\in P'}(\stay, \la\rows, L, \noArgs,\{u\},
    \alpha,\checkHit\ra)\Big)$ \\
  $\wedge\Big(\bigwedge_{N_j\in
    (N')^{>0}}\bigwedge_{\argIdx\in [|C_j|]}\bigwedge_{u\in C_j(\argIdx)}\bigvee_{(N_j,\alpha)\in P'}(\stay, \la\rows, L, \args(j,\argIdx),\{u\},
    \alpha,\checkHit\ra)\Big)$ \\
  $\wedge\Big(\bigwedge_{N_j\in N'}\bigwedge_{(N_j,\alpha)\in P'}(\stay, \la\rows, L, R_j,
    \alpha,\checkMiss\ra)\Big)$
  \end{tabular}
\end{itemize}

\subsubsection*{\bf Mode 4} Check a set of values can be
reached. \label{sec:bf-mode-3.2-macro} States drawn from:

\begin{align*}
  \mFour &\coloneqq \mFourA\cup \mFourB \\
  \mFourA &\coloneqq \Rows^b\times \Rows\times \argsType\times D\times
            \{\checkHit\} \\ &\cup ((Q_1\sqcup
  Q_2)\times (\Rows^{b+1}\times\argsType\times \{0,\ldots,b-1\} \times \{1,2\})) \\
  \mFourB &\coloneqq \big(\Rows^b\times\Rows\times \argsType\times D\times
            \mathit{subterms}(P')\times \{\checkHit\}\big) 
                                                                \\ &\cup \big((Q_1\sqcup
  Q_2)\times (\Rows^{b+1}\times\argsType\times \{0,\ldots,b-1\} \times \{1,2\}\times \mathit{subterms}(P'))\big),\\
  &\text{where }\mathit{subterms}(P') = \mathsmaller{\bigcup}_{(N_i,\,\alpha)\,\in\, P'\,} \mathit{subterms}(\alpha)
\end{align*}
Transitions for $\mFourA$:
\begin{itemize}[leftmargin=12pt]
\setlength\itemsep{0.5em}
\item
    $\text{// Current macro nesting depth starts at $0$}\\$
    $\delta(\la \rows,L,\topArgs,U, \checkHit\ra, \_) = \bigwedge_{u_1\in U\cap Q_1}
  (\stay, \la u_1, \la \rows,L,\topArgs,\currDepth=0, 1\ra \ra) \wedge \\\bigwedge_{u_2\in U\cap Q_2}
  (\stay, \la u_2, \la \rows,L,\topArgs,\currDepth=0, 2\ra \ra)$
\item $\delta(\la u, \la \rows,L,\topArgs,\currDepth, m\ra\ra, x) =\\ \adorn(\la \rows,L,\topArgs,\currDepth, m\ra,
  \delta_{m}(u, x)),\quad x\in\Delta$
\item
    $\text{// Guess how much available depth to budget for arguments and how
       much for macro expansion}\\$
  $\delta(\la u, \la \rows,L,\topArgs,\currDepth, m\ra\ra, \rhs_{N_i^{>0}}) =
  \bigvee_{\offset\in [b-\currDepth]}\bigvee_{\{\argIdx\in
    [|target|]\,:\, u\in target(\argIdx)\}}\\
  \bigwedge_{p\in [|\args(i,\argIdx)|]}\bigwedge_{u'\in
    \args(i,\argIdx)(p)}(p, \la u', \la
  \rows,L,\topArgs,\currDepth+\offset,m \ra\ra)\\
  \text{// Do not need to check the arguments evaluate precisely to
    chosen ones}$
  \begin{itemize}
  \item[] where $target=\la\rows,L\ra(b-\currDepth)(i)$
  \end{itemize}
\item $\delta(\la u, \la \rows,L,\topArgs,\currDepth, m\ra\ra, \rhs_{N_i^0}) = \tru \quad\, \text{if
  } u\in target\\$
  $\delta(\la u, \la \rows,L,\topArgs,\currDepth, m\ra\ra, \rhs_{N_i^0}) = \fals \quad \text{if
  } u\notin target$
  \begin{itemize}
  \item[] where $target=\la\rows,L\ra(b-\currDepth)(i)$
  \end{itemize}
\item $\delta(\la u, \la \rows,L,\topArgs,\currDepth, m\ra\ra, i) =
  \tru\quad\,\text{if } u\in \topArgs(i)\\$
 $\delta(\la u, \la \rows,L,\topArgs,\currDepth, m\ra\ra, i) =
  \fals\quad\,\text{if } u\notin \topArgs(i)$
\item[] $\text{//} \,\,i\,\, \text{is a formal macro parameter}$
\end{itemize}

Transitions for $\mFourB$:
\begin{itemize}[leftmargin=12pt]
\setlength\itemsep{0.5em}
\item $\delta(\la \rows,L,\topArgs,U,\alpha, \checkHit\ra, \_) = \bigwedge_{u_1\in U\cap Q_1}
  (\stay, \la u_1, \la \rows,L,\topArgs,\currDepth=0, 1, \alpha\ra \ra) \wedge \\\bigwedge_{u_2\in U\cap Q_2}
  (\stay, \la u_2, \la \rows,L,\topArgs,\currDepth=0, 2, \alpha\ra \ra)$
\item $\delta(\la u, \la \rows,L,\topArgs,\currDepth, m, f(t_1,\ldots,t_r)\ra\ra, \_) =
  \\\adorn'(\la \la \rows,L,\topArgs,\currDepth, m\ra, t_1,\ldots, t_r\ra,
  \delta_{m}(u, f))$
\item $\delta(\la u, \la \rows,L,\topArgs,\currDepth, m,\rhs_{N_i^{>0}}\ra\ra, \_) =
  \bigvee_{\offset\in [b-\currDepth]}\bigvee_{\{\argIdx\in
    [|target|]\,:\, u\in target(\argIdx)\}}\\
  \bigwedge_{p\in [|\args(i,\argIdx)|]}\bigwedge_{u'\in
    \args(i,\argIdx)(p)}(p, \la u', \la
  \rows,L,\topArgs,\currDepth+\offset,m \ra\ra)\\
  \text{// Do not need to check the arguments evaluate precisely to
    chosen ones}$
  \begin{itemize}
  \item[] where $target=\la\rows,L\ra(b-\currDepth)(i)$
  \end{itemize}
\item $\delta(\la u, \la \rows,L,\topArgs, \currDepth,m, \rhs_{N_i^0}\ra\ra, \_) = \tru \quad\, \text{if
  } u\in target\\
\delta(\la u, \la \rows,L,\topArgs,\currDepth, m, \rhs_{N_i^0}\ra\ra, \_) = \fals \quad \text{if
  } u\notin target$
  \begin{itemize}
  \item[] where $target=\la\rows,L\ra(b-\currDepth)(i)$
  \end{itemize}
\item $\delta(\la u, \la \rows,L,\topArgs,\currDepth, m,i\ra\ra, \_) =
  \tru\quad\,\text{if } u\in \topArgs(i)\\$
  $\delta(\la u, \la \rows,L,\topArgs,\currDepth, m,i\ra\ra, \_) =
  \fals\quad\,\text{if } u\notin \topArgs(i)$
\item[] $\text{//} \,\,i\,\, \text{is a formal macro parameter}$
\end{itemize}

The notation $\adorn(s, \varphi)$ represents the transition formula
obtained by replacing each atom $(i, q)$ in the Boolean formula
$\varphi$ by the atom $(i, \la q, s\ra)$. The notation
$\adorn'(\la s, t_1,\ldots, t_r\ra, \varphi)$ represents the
transition formula obtained by replacing each atom $(i, q)$ in
$\varphi$ by the atom $(\stay, \la q, \la s, t_i\ra\ra)$\footnote{We
  assume directions in $\delta_1$ and $\delta_2$ are the numbers
  $1 (\leftt)$, $2 (\rightt)$, $3\ldots$, etc.}.

\subsubsection*{\bf Mode 5} Check values cannot be
reached. \label{sec:bf-mode-3.3-macro} States drawn from:

\begin{align*}
  \mFive &\coloneqq \mFiveA\cup \mFiveB \\
  \mFiveA &\coloneqq \Rows^b\times \Rows\times \argsType\times D\times
            \{\checkMiss\} \\ &\cup ((Q_1\sqcup
  Q_2)\times (\Rows^{b+1}\times\argsType\times \{0,\ldots,b-1\} \times \{1,2\}\times\{\bot\})) \\
  \mFiveB &\coloneqq \big(\Rows^b\times\Rows\times \argsType\times D\times
            \mathit{subterms}(P')\times \{\checkMiss\}\big) 
                                                                \\ &\cup \big((Q_1\sqcup
  Q_2)\times (\Rows^{b+1}\times\argsType\times \{0,\ldots,b-1\} \times \{1,2\}\times\{\bot\}\times \mathit{subterms}(P'))\big),\\
  &\text{where }\mathit{subterms}(P') = \mathsmaller{\bigcup}_{(N_i,\,\alpha)\,\in\, P'\,} \mathit{subterms}(\alpha)
\end{align*}
Transitions for $\mFiveA$:
\begin{itemize}[leftmargin=12pt]
\setlength\itemsep{0.5em}
\item $\delta(\la \rows,L,\topArgs,U, \checkMiss\ra, \_) = \bigwedge_{u_1\in U\cap Q_1}
  (\stay, \la u_1, \la \rows,L,\topArgs,\currDepth=0, 1,\bot\ra \ra) \wedge \\\bigwedge_{u_2\in U\cap Q_2}
  (\stay, \la u_2, \la \rows,L,\topArgs,\currDepth=0, 2,\bot\ra \ra)$
\item $\delta(\la u, \la \rows,L,\topArgs,\currDepth, m,\bot\ra\ra, x) =\\ \adorn(\la \rows,L,\topArgs,\currDepth, m,\bot\ra,
  \delta_{m}(u, x)),\quad x\in\Delta$
\item $\delta(\la u, \la \rows,L,\topArgs,\currDepth, m,\bot\ra\ra, \rhs_{N_i^{>0}}) =
  \bigwedge_{\offset\in [b-\currDepth]}\bigwedge_{\{\argIdx\in
    [|target|]\,:\, u\in target(\argIdx)\}}\\
  \bigvee_{p\in [|\args(i,\argIdx)|]}\bigvee_{u'\in
    \args(i,\argIdx)(p)}(p, \la u', \la
  \rows,L,\topArgs,\currDepth+\offset,m,\bot \ra\ra)$
  \begin{itemize}
  \item[] where $target=\la\rows,L\ra(b-\currDepth)(i)$
  \end{itemize}
\item $\delta(\la u, \la \rows,L,\topArgs,\currDepth, m,\bot\ra\ra, \rhs_{N_i^0}) = \fals \quad\, \text{if
  } u\in target\\$
  $\delta(\la u, \la \rows,L,\topArgs,\currDepth, m,\bot\ra\ra, \rhs_{N_i^0}) = \tru \quad \text{if
  } u\notin target$
  \begin{itemize}
  \item[] where $target=\la\rows,L\ra(b-\currDepth)(i)$
  \end{itemize}
\item $\delta(\la u, \la \rows,L,\topArgs,\currDepth, m,\bot\ra\ra, i) =
  \fals\quad\,\text{if } u\in \topArgs(i)\\$
 $\delta(\la u, \la \rows,L,\topArgs,\currDepth, m,\bot\ra\ra, i) =
  \tru\quad\,\text{if } u\notin \topArgs(i)$
\item[] $\text{//} \,\,i\,\, \text{is a formal macro parameter}$
\end{itemize}

Transitions for $\mFiveB$:
\begin{itemize}[leftmargin=12pt]
\setlength\itemsep{0.5em}
\item $\delta(\la \rows,L,\topArgs,U,\alpha, \checkMiss\ra, \_) = \bigwedge_{u_1\in U\cap Q_1}
  (\stay, \la u_1, \la \rows,L,\topArgs,\currDepth=0, 1,\bot, \alpha\ra \ra) \wedge \bigwedge_{u_2\in U\cap Q_2}
  (\stay, \la u_2, \la \rows,L,\topArgs,\currDepth=0, 2,\bot,\alpha\ra \ra)$
\item $\delta(\la u, \la \rows,L,\topArgs,\currDepth, m,\bot, f(t_1,\ldots,t_r)\ra\ra, \_) =
  \\\adorn'(\la \la \rows,L,\topArgs,\currDepth, m,\bot\ra, t_1,\ldots, t_r\ra,
  \delta_{m}(u, f))$
\item $\delta(\la u, \la \rows,L,\topArgs,\currDepth, m,\bot,\rhs_{N_i^{>0}}\ra\ra, \_) =
  \\\bigwedge_{\offset\in [b-\currDepth]}\bigwedge_{\{\argIdx\in
    [|target|]\,:\, u\in target(\argIdx)\}}\\
  \bigvee_{p\in [|\args(i,\argIdx)|]}\bigvee_{u'\in
    \args(i,\argIdx)(p)}(p, \la u', \la
  \rows,L,\topArgs,\currDepth+\offset,m,\bot \ra\ra)$
  \begin{itemize}
  \item[] where $target=\la\rows,L\ra(b-\currDepth)(i)$
  \end{itemize}
\item $\delta(\la u, \la \rows,L,\topArgs, \currDepth,m,\bot, \rhs_{N_i^0}\ra\ra, \_) = \fals \quad\, \text{if
  } u\in target\\
\delta(\la u, \la \rows,L,\topArgs,\currDepth, m,\bot, \rhs_{N_i^0}\ra\ra, \_) = \tru \quad \text{if
  } u\notin target$
  \begin{itemize}
  \item[] where $target=\la\rows,L\ra(b-\currDepth)(i)$
  \end{itemize}
\item $\delta(\la u, \la \rows,L,\topArgs,\currDepth, m,\bot,i\ra\ra, \_) =
  \fals\quad\,\text{if } u\in \topArgs(i)\\$
  $\delta(\la u, \la \rows,L,\topArgs,\currDepth, m,\bot,i\ra\ra, \_) =
  \tru\quad\,\text{if } u\notin \topArgs(i)$
\item[] $\text{//} \,\,i\,\, \text{is a formal macro parameter}$
\end{itemize}

The notation $\dual(\varphi)$ represents a transition formula obtained
by replacing conjunction with disjunction and vice versa in the
(positive) Boolean formula $\varphi$.

\subsubsection*{\bf Mode 6} Check the first column is acceptable or
could still be acceptable later.\label{sec:bf-mode-2-1-macro} States drawn
from:
\begin{align*}
\mSix\coloneqq D\times \{\verifySolution, \avoid\}.
\end{align*}

\begin{itemize}[leftmargin=12pt]
\setlength\itemsep{0.5em}
\item
  $\delta(\la U,\verifySolution\ra, \_) = \text{if}\,\, U\cap
  Q^i_1\neq\emptyset\,\, \text{then}\,\,\tru\,\,\text{else}\,\,\fals$
\item
  $\delta(\la U,\avoid\ra, \_) = \text{if}\,\, U\cap
  Q^i_2\neq\emptyset\,\, \text{then}\,\,\fals\,\,\text{else}\,\,\tru$
\end{itemize}

Any transition not described by the rules above has transition formula
$\fals$. The full set of states and the initial subset of states for
the automaton $A_I$ are
\begin{align*}
Q &\coloneqq
    \mOne\sqcup\mTwo\sqcup\mThree\sqcup\mFour\sqcup\mFive\sqcup\mSix \qquad\text{and}\qquad
 Q_i\coloneqq\{\la \{\emptyRow\}^{b+2}, \fullRow, \res\ra\}.
\end{align*}
By construction we have the following.

\begin{lemma}
  $L(A_I) = \{t \in T_{\Gamma(\Delta, N)} \,:\,
  \solves(\extend(\dec(t),G'), I)\}$.
  \label{lemma:dsl-synthesis-macro}
\end{lemma}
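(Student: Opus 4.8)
The plan is to establish the two inclusions separately, following the structure of the proof of \Cref{lemma:dsl-synthesis} in \Cref{sec:proof-crefl-synth} but adapting the central invariant to the presence of macros. As in the non-macro case, I would work in the language of labeled trees, fix an input $t\in T_{\Gamma(\Delta,N)}$ with associated macro grammar $G_t\coloneqq\extend(\dec(t),G')$, and relate accepting runs of $A_I$ to the recursion table $T(G_t)$. The essential new ingredients are two. First, for each macro nonterminal $F$ of arity $a$ the column of $T(G_t)$ indexed by $F$ holds a \emph{function} of type $D^a\rightarrow D$, manipulated as a vector indexed by $\argsType$, recording for each tuple of argument value-sets the new values $F$ produces at each depth. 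Second, the automaton must remember a window of the most recent rows \emph{individually} (stored in $\rows$, with width determined by $b$ and refreshed via $\dropFirst$ in \modetwo) rather than a single accumulated union, since an outermost derivation can nest macro applications up to depth $b$.

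The first step is to make the intended invariant precise along any path of the run. At the $i$th point where a \modeone state is entered, the component $C$ should equal the genuinely new values at depth $i$, i.e. $H^{i-1}(Z_0)\setminus H^{i-2}(Z_0)$, and $L$ should equal $H^{i-2}(Z_0)$, where $H$ and $Z_0$ are the analogues of the operator and seed from \Cref{sec:construction}, now acting over the extended domain (subsets of $Q_1\sqcup Q_2$ for plain nonterminals, functions-as-vectors for macro nonterminals) whose least fixpoint defines $L(G_t)$ stratified by parse-tree depth in the sense of \Cref{defn:depth-concept-ordering}. The $\rows$ component stores the appropriate window of previous rows, so that at nesting depth $\currDepth$ inside a macro simulation the lookup $\la\rows,L\ra(b-\currDepth)(i)$ yields exactly the entry for nonterminal $N_i$ at the depth budget remaining for that nesting level.

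The second step is soundness ($L(A_I)\subseteq\{t:\solves(G_t,I)\}$): from an accepting run I extract the guessed rows $C$ and argue, by induction on $i$, that the $\checkHit$ checks in \modefour force every guessed value (and, for macro columns, every function entry) to be genuinely reachable at depth $i$ given lower rows, while the $\checkMiss$ checks in \modefive force every \emph{excluded} value to be unreachable; hence the guess is exactly $H^{i-1}(Z_0)\setminus H^{i-2}(Z_0)$ and not a strict subset. Finiteness of the run together with \modesix then guarantees the first column reaches a $\verifySolution$ state (a solution at some depth $i$) while never passing an $\avoid$ failure (no non-generalizing expression at smaller depth), which is precisely acceptability of $T(G_t)$, i.e. $\solves(G_t,I)$. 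The third step is completeness ($\supseteq$): given a solving $G_t$, I compute its true recursion table and use it to resolve every nondeterministic choice — the row guesses in \modetwo, the $(U,V)$ splits and the argument-index and $\offset$ choices in \modethree--\modefive — thereby building a finite accepting run that halts at the first depth witnessing a solution.

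The main obstacle will be verifying the depth bookkeeping for nested macros in \modefour and \modefive, i.e. that the offset-splitting transitions on $\rhs_{N_i^{>0}}$ faithfully simulate an outermost derivation. The subtle points are: (i) that guessing an $\offset\in[b-\currDepth]$ and an argument index $\argIdx$ with $u$ in the looked-up function entry, then descending into each parameter with the value-sets $\args(i,\argIdx)(p)$ at budget $\currDepth+\offset$, correctly reconstructs how the stored function entry for $F$ composes with the value-sets achievable by its arguments; (ii) that, because a macro may copy or discard its arguments, tracking the entire tuple $\topArgs$ of argument value-sets rather than a single value is both necessary and sufficient; and (iii) that \modefive is the exact dual, replacing the existential offset/argument choice by a universal one via $\dual$, so that ``no excluded value is reachable'' is checked soundly. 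I would isolate this as a single reachability-at-depth lemma — stating that the set of states a bounded-macro-nesting term can reach at depth $d$ equals the composition of the stored function rows — and prove it by induction on macro nesting depth, which is exactly where the bound $b$ is used essentially. The remainder, as in \Cref{sec:proof-crefl-synth}, is tedious but routine.
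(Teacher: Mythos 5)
Your proposal is correct and takes essentially the same route as the paper, whose own proof of this lemma simply says ``similar argument to \Cref{sec:proof-crefl-synth}'': a two-inclusion, run-based argument with the invariant $C=H^{i-1}(Z_0)\setminus H^{i-2}(Z_0)$ and $L=H^{i-2}(Z_0)$ at successive \modeone entries, adapted to macros via function-valued table entries, the window of $b$ individual rows consulted through $\la\rows,L\ra(b-\currDepth)(i)$, and the dualized \modefive checks. You in fact spell out more of the macro-specific bookkeeping (the isolated reachability-at-depth lemma by induction on nesting depth) than the paper records.
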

\begin{proof}
  Similar argument to \Cref{sec:proof-crefl-synth}.
\end{proof}

The rest of the proof involves standard constructions essentially
identical to those of \Cref{sec:decidability}.

\else
\fi

\end{document}